\documentclass[12pt]{article}
\usepackage{amsmath}
\usepackage{amssymb}
\usepackage{amsfonts}
\usepackage{graphicx}
\usepackage{enumitem}
\usepackage{natbib}
\bibliographystyle{plainnat}
\setcitestyle{round}
\renewcommand{\mathbf}[1]{\boldsymbol{#1}}
\usepackage{authblk}

\newtheorem{theorem}{Theorem}

\newtheorem{lemma}{Lemma}
\newtheorem{remark}{Remark}
\newenvironment{proof}[1][Proof]{\noindent\textbf{#1.} }{\ \rule{0.5em}{0.5em}}

\begin{document}
\title{Robust estimators of accelerated failure time regression with generalized log-gamma errors}
\author[1]{Claudio Agostinelli}
\author[2]{Isabella Locatelli}
\author[2]{Alfio Marazzi\thanks{Corresponding author alfio.marazzi@chuv.ch}}
\author[3]{V{\'\i}ctor J. Yohai}

\affil[1]{Department of Mathematics, University of Trento, Trento, Italy}
\affil[2]{Institute of social and preventive medicine, Lausanne University Hospital, Switzerland}
\affil[3]{Departamento de Matematicas, Facultad de Ciencias Exactas y Naturales, University of Buenos Aires and CONICET}

\date{\today}

\maketitle

\begin{abstract}
The generalized log-gamma (GLG) model is a very flexible family of distributions to analyze datasets in many different areas of science and technology. In this paper, we propose estimators which are simultaneously highly robust and highly efficient for the parameters of a GLG distribution in the presence of censoring. We also introduced estimators with the same properties for accelerated failure time models with censored observations and error distribution belonging to the GLG family. We prove that the proposed estimators are asymptotically fully efficient and examine the maximum mean square error using Monte Carlo simulations. The simulations confirm that the proposed estimators are highly robust and highly efficient for finite sample size. Finally, we illustrate the good behavior of the proposed estimators with two real datasets.

\noindent \textbf{Keywords}: Censored data. Quantile distance estimates. $\tau$ estimators. Truncated maximum likelihood estimators. Weighted likelihood estimators.
\end{abstract}

\section{Introduction} 
\label{SecIntroduction} 
Generalized log-gamma (GLG) regression with censored observations is a large subclass of Accelerated Failure Time (AFT) models introduced by \citet{lawless1980}. Many models broadly used in the lifetime data analysis -- including lognormal, log-gamma, and log-Weibull regression -- are specific cases of GLG regression. GLG regression has been widely applied in various areas of survival analysis \citep[e.g.][]{kim1993,sun1999,abadi2012}.

Usually, the parameters are estimated by means of the maximum likelihood (ML) principle, which provides fully efficient estimators when the observations follow the model. Unfortunately, the ML estimator is extremely sensitive to the presence of outliers in the sample.

There are several proposals of diagnostic tools to detect outliers and assess their influence on the GLG regression parameter estimates \citep[e.g.][]{ortega2003,ortega2008,silva2010}. Moreover, two families of robust estimators of the GLG model without censored observations and without covariate information have been introduced by \citet{agostinelli2014a} for models with three parameters: location, scale, and shape. These families of estimators are: the (weighted) quantile $\tau$ (Q$\tau$) estimators and the one-step weighted likelihood (1SWL) estimators. A Q$\tau$ estimator minimizes a $\tau$ scale \citep{yohai1988} of the differences between empirical and theoretical quantiles. It is $n^{1/2}$ consistent but not asymptotically normal. However, it is a convenient starting point to define the 1SWL-estimator.

In this paper, we extend the Q$\tau$ estimator proposed in \citet{agostinelli2014a} to GLG regression with right censoring by introducing the trimmed Q$\tau$-estimator (TQ$\tau$-estimator); we also extend the truncated maximum likelihood (TML) estimator introduced in \citet{Marazzi2004} to GLG regression and. To improve the robustness of this estimator without modifying its asymptotic efficiency we also define a one-step version of the TML estimator (1TML-estimator). For the sake of completeness, we also define an extension of the 1SWL estimator which is fully described in the Appendix. However a Monte Carlo study show that this estimator is much less robust than the TQ$\tau$- and 1TML- estimators.

The procedures introduced here for the GLG family can be applied to other location-scale-shape models, such as the three-parameter log-Weibull family.

Section \ref{SecEstimatorsCWC} defines the Q$\tau$- and TQ$\tau$-estimators for censored observations in the absence of covariates. Section \ref{SecTML} describes the TML estimators. Section \ref{SecRegression} extends the estimators to the regression case. Section \ref{SecMonteCarlo} shows the results of a Monte Carlo study comparing the performance of the proposed methods for finite sample sizes. Section \ref{SecExamples} discusses two examples with real data. Section \ref{SecConclusions} provides concluding remarks. 

Section \ref{appendix} is an Appendix that include the proofs and an extension for censored data of the one step weighted likelihood estimator used in \citet{agostinelli2014a}. For completeness sake, this estimator is included in our Monte Carlo study. 


\section{The Q$\tau$- and TQ$\tau$-estimators for censored observations without covariates} 
\label{SecEstimatorsCWC}

\subsection{The generalized log-gamma distribution}
The GLG family of distributions depends on three parameters $\mu$, $\sigma$, and $\lambda$. We use the parametrization of Prentice (1974) and denote the family by $GLG(\mu,\sigma,\lambda)$, $\mu\in\mathbb{R}$, $\sigma>0$, $\lambda\in\mathbb{R}$. A random variable $y$ has a $GLG(\mu,\sigma,\lambda)$ distribution if
\begin{equation}
y=\mu+\sigma u \label{locscal1}
\end{equation}
and $u$ has density
\begin{equation}
f_{\lambda}(u)=\left\{
\begin{array}[c]{lll}
\frac{\left\vert \lambda\right\vert }{\Gamma\left(  \lambda^{-2}\right) }(\lambda^{-2})^{\lambda^{-2}}\exp\left(  (\lambda^{-2})\left(  \lambda u-e^{\lambda u}\right)  \right)  & \text{if} & \lambda\neq0,\\ 
\frac{1}{\sqrt{2\pi}}\exp(-\frac{u^{2}}{2}) & \text{if} & \lambda=0,
\end{array}
\right. \label{flambda} 
\end{equation}
where $\Gamma$ denotes the Gamma function. This family includes many common models, such as the log-Weibull model ($\lambda=1$), the log-exponential model ($\lambda=1$ and $\sigma=1$), the log-gamma model ($\sigma=\lambda$), and the normal model ($\lambda=0$). The density of $y$ is
\begin{equation}
f_{\mathbf{\theta}}\left(  y\right)  =\frac{1}{\sigma}f_{\lambda}\left( \frac{y-\mu}{\sigma}\right)  ,\label{denggd}
\end{equation}
where $\mathbf{\theta}=(\mu,\sigma,\lambda)$.

Suppose that $y_{1},\ldots,y_{n}$ are $n$ i.i.d. random times with a GLG cdf $F_{\mathbf{\theta}_{0}}(y)$. We want to estimate $\mathbf{\theta}_{0}=(\mu_{0},\sigma_{0},\lambda_{0})$. We consider single censoring on the right, i.e., the true value of $y_{i}$ is not observed. Instead, the censored variable $y_{i}^{\ast}=\min(y_{i},c_{i})$ is observed, where $c_{1},\ldots,c_{n}$ are i.i.d. censoring times, which are independent of the $y_{i}$'s. We define the censoring indicator $\delta_{i}=1$ if $y_{i}^{\ast}=y_{i}$ and $\delta_{i}=0$ if $y_{i}^{\ast}=c_{i}$. Let $\mathbf{z}_{i}=(y_{i}^{\ast},\delta_{i})$ and $G_{n}$ be the empirical distribution function based on $(\mathbf{z}_{1},\ldots,\mathbf{z}_{n})$.

\subsection{Score functions and ML estimator}
\label{SecML} 
The ML estimator of the parameters of a GLG model under censoring can be easily defined as follows. Let $S_{\mathbf{\theta}}(y)=1-F_{\mathbf{\theta}}(y)=1-F_{\lambda}((y-\mu)/\sigma)$ denote the survival function. Then, the negative log-likelihood function is
\begin{equation}
\label{equNegLogLik}-\sum_{i=1}^{n} \left[  \delta_{i}\log f_{\mathbf{\theta}}(y_{i}^{\ast})+(1-\delta_{i})\log S_{\mathbf{\theta}}(y_{i}^{\ast})\right]  .
\end{equation}
In the absence of censoring, the score functions $\mathbf{d}=(d_{1}, d_{2}, d_{3})^{\top}$ are:
\begin{align}
d_{1}(y,\mathbf{\theta)}  &  =-\frac{\partial}{\partial\mu}\log f_{\mathbf{\theta}}(y)=\frac{1}{\sigma}\xi_{\lambda}(u),\label{zscore1}\\
d_{2}(y,\mathbf{\theta)}  &  =-\frac{\partial}{\partial\sigma}\log f_{\mathbf{\theta}}(y)=\frac{1}{\sigma}\left(  \xi_{\lambda}(u)u+1\right), \label{zscore2} \\
d_{3}(y,\mathbf{\theta)}  &  =-\frac{\partial}{\partial\lambda}\log f_{\mathbf{\theta}}(y)=\psi_{\lambda}(u),\label{zscore3}
\end{align}
where $u=(y-\mu)/\sigma$, $\dot{F}_{\mathbf{\theta}}(y)=\dot{F}_{\lambda}(u)=dF_{\lambda}(u)/d\lambda$,
\begin{align*}
\xi_{\lambda}(u)  &  =\frac{f_{\lambda}^{\prime}\left(  u\right)  }{f_{\lambda}\left(  u\right)  }=\frac{(1-e^{\lambda u})}{\lambda}, \\
\psi_{\lambda}(u)  &  =-\frac{\partial}{\partial\lambda}\log f_{\lambda
}(u)=\frac{1}{\lambda^{3}}(2\zeta(\lambda)-\lambda^{2}+\lambda u-\exp(\lambda u)(2-\lambda u)),
\end{align*}
$\zeta(\lambda)=-2\log(\lambda)-\dot{\Gamma}(\lambda^{-2})+1$, and $\dot{\Gamma}$ denotes the digamma function. Let
\begin{align}
s_{1}(y,\mathbf{\theta)}  &  =-\frac{\partial}{\partial\mu}\log S_{\mathbf{\theta}}(y)=-\frac{1}{\sigma}\frac{f_{\lambda}(u)}{S_{\lambda}(u)},\label{sscore1} \\
s_{2}(y,\mathbf{\theta)}  &  =-\frac{\partial}{\partial\sigma}\log S_{\mathbf{\theta}}(y)=-\frac{1}{\sigma}\frac{f_{\lambda}(u)u}{S_{\lambda}(u)}=s_{1}(y,\mathbf{\theta)}u,\label{sscore2}\\
s_{3}(y,\mathbf{\theta)}  &  =-\frac{\partial}{\partial\lambda}\log S_{\mathbf{\theta}}(y)=\frac{\dot{F}_{\lambda}(u)}{S_{\lambda}(u)} . \label{sscore3}
\end{align}
Then, the score functions for the case with censored observations are
\begin{equation}
v_{k}(y,\delta,\mathbf{\theta}) = \delta d_{k}(y,\mathbf{\theta})+(1-\delta)s_{k}(y,\mathbf{\theta}), \qquad k=1,2,3. \label{vscore}
\end{equation}
The \textit{ML estimator} of $\mathbf{\theta}$ is given by the following system of equations
\begin{equation}
E_{G_{n}}(\mathbf{v}(y,\delta,\mathbf{\theta})) = \frac{1}{n}\sum_{i=1}^{n} \mathbf{v}(y_{i}^{\ast},\delta_{i},\mathbf{\theta})=\mathbf{0} ,\label{equML1}
\end{equation}
where $\mathbf{v}=(v_{1},v_{2},v_{3})^{\top}$ is the score function vector. It is easy to show that
\begin{equation*}
\nabla_{\mathbf{\theta}} S_{\mathbf{\theta}}(y^{\ast}) = E_{\mathbf{\theta}%
}(\mathbf{d}(y,\mathbf{\theta})| y > y^{\ast}),
\end{equation*}
where $\nabla_{\mathbf{\theta}}$ indicates differentiation w.r.t.
$\mathbf{\theta}$. Hence, an alternative expression for the likelihood
equations is
\begin{equation*}
\frac{1}{n}\sum_{i=1}^{n} \delta_{i} \mathbf{d}(y_{i}^{\ast},\mathbf{\theta}) + (1 - \delta_{i}) E_{\mathbf{\theta}}(\mathbf{d}(y,\mathbf{\theta})| y > y_{i}^{\ast}) =\mathbf{0}.
\end{equation*}
Following \citet{locatelli2010}, we define the \textit{semiempirical cdf of $y$ for a given $\mathbf{\theta}$} as
\begin{equation*}
H_{n,\mathbf{\theta}}(z)=\frac{1}{n}\sum_{i=1}^{n}E_{\mathbf{\theta}}\left[ \left.  I\left(  y\leq z\right)  \right\vert y_{i}^{\ast},\delta_{i} \right]
\end{equation*}
or equivalently,
\begin{equation}
H_{n,\mathbf{\theta}} \left(  z\right)  =\frac{1}{n} \sum_{i=1}^{n} \delta_{i} I\left(  y_{i}\leq z\right)  +\frac{1}{n} \sum_{i=1}^{n} \left(  1-\delta_{i}\right)  \frac{\left[  F_{\mathbf{\theta}}\left(  z\right) -F_{\mathbf{\theta}}(y_{i}) \right]  ^{+}}{1-F_{\mathbf{\theta}}(y_{i})}. \label{Hemp}
\end{equation}
Thus, when there is no censoring, $H_{n,\mathbf{\theta}}(z)$ coincides with the usual empirical cdf. If $\tilde{\mathbf{\theta}}$ is a consistent estimator of $\mathbf{\theta}_{0}$, then $H_{n,\tilde{\mathbf{\theta}}}(z)$ is a consistent estimator of $F_{\mathbf{\theta}_{0}}(z)$ and, for any measurable function $h(y)$, we have $\lim_{n\rightarrow\infty} E_{n,\tilde{\mathbf{\theta}}} \left[  h\left(  y\right)  \right]  =E_{\mathbf{\theta}_{0}}\left[ h\left(  y\right)  \right]  $ a.s., where $E_{n,\mathbf{\theta}}$ denotes expectation under $H_{n,\mathbf{\theta}}$. Finally, another expression of the likelihood equations is
\begin{equation}
\label{equML2}E_{n,\mathbf{\theta}}(\mathbf{d}(y,\mathbf{\theta})) = \mathbf{0} .
\end{equation}
Let $\mathbf{M}(\mathbf{\theta}) = E\left(  \mathbf{v}(y,\delta,\mathbf{\theta}) \mathbf{v}(y,\delta,\mathbf{\theta})^{\top}\right)$ and $\mathbf{G} (\mathbf{\theta}) = E\left(  \nabla_{\mathbf{\theta}} \mathbf{v}(y,\delta,\mathbf{\theta}) \right)  $ then the asymptotic covariance matrix of the ML estimator is
\begin{equation*}
\mathbf{\Sigma}(\mathbf{\theta}) = \mathbf{G}(\mathbf{\theta})^{-1} \mathbf{M}(\mathbf{\theta}) \mathbf{G}(\mathbf{\theta})^{-\top} .
\end{equation*}

\subsection{The trimmed Q$\tau$ estimator}
\label{SecTQtau} 
\citet{agostinelli2014a} define the quantile $\tau$ (Q$\tau$) estimator and the weighted Q$\tau$ (WQ$\tau$) estimator for non-censored i.i.d. observations as follows. For $0<u<1$ let $Q(u,\mathbf{\theta})$ denote the $u$-quantile of $F_{\mathbf{\theta}}(y)$. Then, $Q(u,\mathbf{\theta})=\sigma Q^{\ast}(u,\lambda)+\mu$, where $Q^{\ast}(u,\lambda)=Q(u,(0,1,\lambda))$. Given a sample $y_{1},\ldots,y_{n}$, let $F_{n}$ denote the empirical cdf of $y$. Then, $y_{(1)},\ldots,y_{(n)}$, the ordered observations, are the quantiles $u_{n,j}=(j-0.5)/n$ of $F_{n}$ and should be close to $\sigma_{0}Q^{\ast}(u_{n,j}, \lambda_{0})+\mu_{0}$ for $j=1,\ldots,n$. Consider the differences between the empirical and the theoretical quantiles
\begin{equation*}
r_{n,j}(\mathbf{\theta})=y_{(j)}-\mu-\sigma Q^{\ast}(u_{n,j},\lambda), \qquad j=1,\ldots,n.
\end{equation*}
The \textit{Q$\tau$ estimator} is defined by
\begin{equation*}
\mathbf{\tilde{\theta}}_{n}=\arg\min_{\mathbf{\theta}} \tau(r_{n,1} (\mathbf{\theta}),\ldots,r_{n,n}(\mathbf{\theta})),
\end{equation*}
where $\tau$ denotes the $\tau$ scale.

The $\tau$ scale was introduced by \citet{yohai1988} to define estimators which combine high finite sample breakdown point with high efficiency in the linear model with normal errors. Given a sample $\mathbf{u}=(u_{1},\ldots,u_{n})$, a function $s(\mathbf{u})$ is called a \textit{scale} if: (i) $s(\mathbf{u}) \geq0$; (ii) for any scalar $\gamma$, $s(\gamma\mathbf{u})=|\gamma|s(\mathbf{u})$; (iii) $s(u_{1},\ldots,u_{n})=s(|u_{1}|,\ldots,|u_{n}|)$; (iv) if $|u_{i}|\leq|v_{i}|$, $1\leq i\leq n$, then $s(u_{1},\ldots,u_{n}) \leq s(v_{1},\ldots,v_{n})$. It follows that (v) $s(0,\ldots,0)=0$ and that, (vi) given $\varepsilon>0$, there exists $\delta$ such that $|u_{i}|\leq\delta$ for $1\leq i\leq n$ imply $s(u_{1},\ldots,u_{n})<\varepsilon$. Properties (i)-(vi) clearly show that $s(\mathbf{u})$ can be used as a measure of the absolute largeness of the elements of $\mathbf{u}$. The most common scale is the one based on the quadratic function and is given by $s_{1}(\mathbf{u})=\left(  \sum_{i=1}^{n}u_{i}^{2}/n\right)  ^{1/2}$. This scale is clearly non robust. \citet{huber1981} defines a general class of robust scales, called M scales, as follows. Let $\rho$ be a function satisfying the following properties:

\begin{enumerate}[label=\textbf{A\arabic*}]
\item \label{assA1}: (i) $\rho(0)=0$; (ii) $\rho$ is even; (iii) if $|x_{1}|<|x_{2}|$, then $\rho(x_{1})\leq\rho(x_{2})$; (iv) $\rho$ is bounded; (v) $\rho$ is continuous.
\end{enumerate}
Then, an \textit{M scale} $s_{2}(\mathbf{u})$ based on $\rho$ is defined by the value $s$ satisfying
\begin{equation}
\label{Mscale}\frac{1}{n}\sum_{i=1}^{n} \rho\left(  \frac{u_{i}}{s} \right) = b ,
\end{equation}
where $b$ is a given scalar and $0<b<a=\sup\rho$. \citet{yohai1988} introduce the family of $\tau$ scales. A $\tau$ \textit{scale} is based on two functions $\rho_{1}$ and $\rho_{2}$ satisfying conditions \ref{assA1} and such that $\rho_{2}\leq\rho_{1}$. One considers an M scale $s_{2}(\mathbf{u})$ defined by (\ref{Mscale}) with $\rho_{1}$ in place of $\rho$; then, the \textit{$\tau$ scale} is given by
\begin{equation}
\label{tauscale}\tau^{2}(\mathbf{u})=s_{2}^{2}(\mathbf{u})\frac{1}{n} \sum_{i=1}^{n} \rho_{2}\left(  \frac{u_{i}}{s_{2}(\mathbf{u)}}\right) .
\end{equation}
Usually, $\rho$, $\rho_{1}$ and $\rho_{2}$ are selected in the Tukey's
bisquare family given by
\begin{equation}
\label{equTukeyRho}\rho_{c}^{T}(u) = 1-\max\left(  \left(  1-\left( \frac{u}{c}\right)  ^{2}\right)  ^{3}, 1 \right)
\end{equation}
for convenient values of $c$ and $b$ (see \ref{SecMonteCarlo}).

The Q$\tau$ estimator in the case of randomly censored observations is obtained by replacing the quantiles of the empirical distribution by the quantiles of the Kaplan-Meier (KM) distribution corresponding to the non censored observations. More precisely, let $\tilde{F}_{n}$ denote the KM estimator \citep{kaplan1958} of $F_{\mathbf{\theta}_{0}}$ and $t_{(1)},\ldots,t_{(m)}$ the ordered non censored observations. Then, $t_{(1)},\ldots,t_{(m)}$ are the quantiles $\tilde{u}_{n,i}=\tilde{F}_{n}(t_{(i)})-0.5/n$. The residuals
\begin{equation*}
\tilde{r}_{n,i}(\mathbf{\theta})=t_{(i)}-\mu-\sigma Q^{\ast}(\tilde{u}_{n,i},\lambda), \qquad i=1,\ldots,m.
\end{equation*}
are then used to define the \textit{Q$\tau$ estimator for censored observations} by
\begin{equation}
\mathbf{\tilde{\theta}}_{n}=\arg\min_{\mathbf{\theta}} \tau(\tilde{r}_{n,1}(\mathbf{\theta}),\ldots,\tilde{r}_{n,m}(\mathbf{\theta})). \label{QTAU}
\end{equation}
It is known that the KM estimator distributes the mass of the censored observations among all the observations that are on their right. Some of these observations may be outliers and, therefore, the mass assigned to the outliers by KM may be inflated. To reduce the influence of outliers we defined a trimmed version of the Q$\tau$ estimator as follows. Let $0<\alpha<1$ be the fraction of trimming and let $k_{\alpha}=\max\{h:\tilde{F}_{n}(t_{(h)})\leq1-\alpha\}$ then the $\alpha$-\textit{trimmed Q}$\tau$ ($\alpha$-TQ$\tau$) \textit{estimator} is defined by
\begin{equation}
\mathbf{\tilde{\theta}}_{n}=\arg\min_{\mathbf{\theta}}\tau(\tilde{r}_{n,1}(\mathbf{\theta}),\ldots,\tilde{r}_{n,k_{\alpha}}(\mathbf{\theta})). \label{TQTAU}
\end{equation}
Since in all the simulations and examples we use $\alpha=0.1,$ to simplify the notation in the remaining of the paper, we will write TQ$\tau$ instead of $\alpha$-TQ$\tau$.

Note that the residuals $\tilde{r}_{n,i}(\mathbf{\theta})$ are heteroskedastic and, according to \citet{serfling1980}, their variance can be approximated by
\begin{equation}
\hat{\sigma}_{i}^{2}=v_{\tilde{u}_{n,i}}^{\ast2}/f_{\tilde{\lambda}}^{2}(Q^{\ast}(\tilde{u}_{n,i},\tilde{\lambda})),\label{GH}
\end{equation}
where $v_{\tilde{u}_{n,i}}^{\ast2}$ is Greenwood's variance estimator of $\tilde{F}_{n}(t_{(i)})$ \citep{greenwood1926}. Then, as in \citet{agostinelli2014a}, we might consider the \textit{weighted} TQ$\tau$ (WTQ$\tau$) \textit{estimator}
\begin{equation}
\mathbf{\tilde{\theta}}_{n}^{w}=\arg\min_{\mathbf{\theta}}\tau\left( \frac{\tilde{r}_{n,1}(\mathbf{\theta})}{\hat{\sigma}_{1}},\ldots,\frac{\tilde{r}_{n,k_{\alpha}}(\mathbf{\theta})}{\hat{\sigma}_{k_{\alpha}}}\right) . \label{WQTAU}
\end{equation}
However, our Monte Carlo experiments have shown that weighting does not provide an important improvement and, for this reason, we are not going to consider the WTQ$\tau$ estimator further.

In Theorem \ref{TeoSMConsistencyTQtau} of the Appendix we prove that, under general conditions, the TQ$\tau$ estimator $\mathbf{\tilde{\theta}}_{n}$ is $n^{1/2}$-consistent, that is,
\begin{equation}
n^{1/2}(\mathbf{\tilde{\theta}}_{n}-\mathbf{\theta}_{0})=O_{p}(1) . \label{sqrnc}
\end{equation}
As in the non censored case, one drawback of the quantile $\tau$ estimators is that they are not asymptotically normally distributed, making inference difficult. In order to overcome this problem we introduce, in the next Section, the one step truncated ML estimator starting at TQ$\tau$. This estimator has similar robustness properties as the TQ$\tau$ estimator; in addition, it has asymptotic normal distribution with the same asymptotic variance as the ML estimator under the model.

\section{The truncated maximum likelihood estimators}
\label{SecTML} 
In order to obtain a robust and highly efficient procedure for estimating the unknown parameter vector we use the truncated maximum likelihood (TML) estimator and the one-step TML (1TML) estimator. Both are based on a weighted form of the likelihood equations. Suppose that $y\sim GLG(\mathbf{\theta}_{0})$ and consider a sample $(\mathbf{z}_{1},\ldots,\mathbf{z}_{n})$, $z_{i}=(y_{i}^{\ast},\delta_{i})$ ($i=1,\ldots,n$). Assume that $\tilde{\mathbf{\theta}}=(\tilde{\mu},\tilde{\sigma},\tilde{\lambda})$ is an initial consistent estimator of $\mathbf{\theta}_{0}$. A natural robustification of the likelihood equations (\ref{equML2}) can be obtained by weighting the equations. More precisely, given a weight function $w(y,\mathbf{\theta})$, we consider the equations
\begin{equation}
E_{n,{\mathbf{\theta}}}[w(y,\tilde{\mathbf{\theta}})\mathbf{d}(y,\mathbf{\theta})]=E_{n,\tilde{\mathbf{\theta}}}[w(y,\tilde{\mathbf{\theta}})\mathbf{d}(y,\tilde{\mathbf{\theta}})] . \label{weq}
\end{equation}
The right hand side mitigates the bias of the estimator and allows to proof asymptotic normality. For increasing sample size its tends to zero. In the next subsections we show how one can define the weight function.

\subsection{The outlier rejection rule}
We proceed as in \citet{Marazzi2004}. Let $\tilde{r}_{i}^{\ast}=(y_{i}^{\ast}-\tilde{\mu})/\tilde{\sigma}$ denote the standardized residuals with respect to the initial model and let $l_{\lambda}(z)=-\log f_{\lambda}(z)$ be the negative log-likelihood function. We consider the negative log-likelihoods of the residuals $l_{i}^{\ast}=l_{\tilde{\lambda}}(\tilde{r}_{i}^{\ast})$ ($i=1,\ldots,n$). A large $l_{i}^{\ast}$ corresponds to an observation with a small likelihood under the model and suggests that $y_{i}$ is an outlier. Let $M_{\lambda}$ be the cdf of $l_{\lambda}(z)$ and
\begin{equation*}
M_{n,\tilde{\lambda}}(z)=\frac{1}{n}\sum_{i=1}^{n}\left[  \delta_{i}I\left( l_{i}^{\ast}\leq z\right)  +(1-\delta_{i})P_{\tilde{\lambda}}(l_{\tilde{\lambda}}\leq z|y>y_{i}^{\ast})\right] ,
\end{equation*}
be the semiempirical cdf of $l_{1}^{\ast},\ldots,l_{n}^{\ast}$ for $\lambda=\tilde\lambda$. One can show that $M_{n,\tilde{\lambda}}$ is a consistent estimator of $M_{\lambda_{0}}$. For simplicity, we write $M_{n}$ in place of $M_{n,\tilde{\lambda}}$. Let $M_{n}^{(\varphi)}$ denote $M_{n}$ truncated at $\varphi$, i.e.,
\begin{equation}
M_{n}^{(\varphi)}(t)=\left\{
\begin{array}[c]{cc}
M_{n}(t)/M_{n}(\varphi) & \text{if} \ t \leq\varphi,\\
1 & \text{otherwise}.
\end{array}
\right. \label{trunc}
\end{equation}
We want to compare the right tail of the truncated empirical distribution $M_{n}^{(\varphi)}$ with the right tail of $M_{n,\tilde{\lambda}}(t)$, which is the theoretical distribution when $\lambda=\tilde{\lambda}$. To specify what we understand by the tail, we take a number $\varepsilon$ close to $0$, for example $\varepsilon=0.01$, as the probability of falling in the tail. Then, we define the cutoff point $\varphi^{\ast}$ on the likelihood scale as the largest $\varphi$ such that $M_{n}^{(\varphi)}(t)\geqq M_{\tilde{\lambda}}(t)$ for all $t\geq M_{\tilde{\lambda}}^{-1}(1-\varepsilon)$, i.e.,
\begin{equation*}
\varphi^{\ast} = \sup\{ \varphi| M_{n}^{(\varphi)}(t) \geqq M_{\tilde{\lambda}}(t)\text{ for all }t\geq M_{\tilde{\lambda}}^{-1}(1-\varepsilon) \}.
\end{equation*}
Note that $\varphi^{\ast}$ is the minimum value $\varphi$ such that the tails of $M_{n}^{(\varphi)}(t)$ and $M_{\tilde{\lambda}}(t)$ are comparable. As in \citet{gervini2002}, one can prove that, if the sample does not contain outliers, $\varphi^{\ast} \rightarrow\infty$ a.s. as $n\rightarrow\infty$. Finally, since $l_{\tilde{\lambda}}(z)$ is unimodal, there exist two solutions $\tilde{c}_{L}$ and $\tilde{c}_{U}$ of the equation $l_{\tilde{\lambda}}(z) = \varphi^{\ast}$. It is immediate that $l_{\tilde{\lambda}}(z) \le\varphi^{\ast}$ is equivalent to $\tilde{c}_{L} \le z \le\tilde{c}_{U}$. The cutoff points on the data scale are $\tilde{t}_{L}=\tilde{\mu}+\tilde{\sigma} \tilde{c}_{L}$ and $\tilde{t}_{U}=\tilde{\mu}+\tilde{\sigma}\tilde{c}_{U}$.

\subsection{Weight functions}
Let $\omega(z)$ be a function, such that

\begin{enumerate}[label=\textbf{A\arabic*}] \setcounter{enumi}{1}
\item \label{assA2} (i) $\omega(z)$ is nonincreasing; (ii) $\lim_{z\rightarrow-\infty} \omega(z)=1$; (iii) $\omega(z)=0$ for $z>0$.
\end{enumerate}
For example, let $c>0$ and consider the function
\begin{equation}
\label{wgtfun}\omega(z)=\rho(z,c) \cdot I(z \leq0),
\end{equation}
where $\rho(z,c) = \rho^{T}(z/c)$ is in the biweight family as defined in (\ref{equTukeyRho}) and let $\vartheta^{\ast} = 1/\varphi^{\ast}$. Then, define the weight function
\begin{equation*}
w(z, \lambda, \vartheta^{\ast}) = \omega\left(  l_{\lambda}(z) - 1/\vartheta^{\ast} \right)
\end{equation*}
or, for the observation $y$,
\begin{equation*}
w(y, \mathbf{\theta}, \vartheta^{\ast}) = w\left(  \frac{y-\mu}{\sigma}, \lambda, \vartheta^{\ast}\right)  .
\end{equation*}
Note that for $c \rightarrow0$, we have $\omega(z)=I(z\leq0)$. In this case, $w_{i}=1$ if $y_{i} \epsilon[\tilde{t}_{L},\tilde{t}_{U}]$ and $w_{i}=0$ otherwise; this rule is usually called \textit{hard rejection}.

\subsection{The TML estimators}
We suppose that an initial estimator $\tilde{\mathbf{\theta}}=(\tilde{\mu},\tilde{\sigma},\tilde{\lambda})$ and a cutoff point $\vartheta^{\ast}$ are given. The \textit{TML estimator} $\hat{\mathbf{\theta}}=(\hat{\mu},\hat{\sigma},\hat{\lambda})$ is the solution of the equations
\begin{equation}
\mathbf{u}(\mathbf{\theta},\tilde{\mathbf{\theta}},\vartheta^{\ast})=\tilde{\mathbf{h}},\label{equTML}
\end{equation}
where $\mathbf{u}=(u_{1},u_{2},u_{3})^{\top}$ is given by
\begin{align*}
\mathbf{u}(\mathbf{\theta},\tilde{\mathbf{\theta}},\vartheta^{\ast}) & = E_{n,\tilde{\mathbf{\theta}}}\left[  w(y,\tilde{\mathbf{\theta}},\vartheta^{\ast})\mathbf{d}(y,\mathbf{\theta})\right] \\
&  =\frac{1}{n}\sum_{i=1}^{n}E_{\mathbf{\theta}}\left[  w\left(y,\tilde{\mathbf{\theta}},\vartheta^{\ast}\right)  \mathbf{d}\left(y,\mathbf{\theta}\right)  |y_{i}^{\ast},\delta_{i}\right]
\end{align*}
and
\begin{equation}
\tilde{\mathbf{h}}=(\tilde{h}_{1},\tilde{h}_{2},\tilde{h}_{3})^{\top}=E_{\tilde{\lambda}}\left[  w(u,\tilde{\lambda},\vartheta^{\ast})\mathbf{d}(u)\right] \label{htilde}
\end{equation}
with $u\sim f_{(0,1,\tilde{\lambda})}$. These equations are of the form (\ref{weq}). Note that, when $\vartheta^{\ast}\rightarrow0$, the right hand side vector tends to zero and we obtain the ML equations.

The \textit{1TML estimator} is obtained by applying one iteration of the Newton-Raphson procedure to equations (\ref{equTML}) and it turns out to be
\begin{equation}
\label{equ1TML}\hat{\mathbf{\theta}}_{1} = \tilde{\mathbf{\theta}} - \mathbf{J}(\tilde{\mathbf{\theta}},\vartheta^{\ast})^{-1} (\mathbf{u}(\mathbf{\theta}, \tilde{\mathbf{\theta}},\vartheta^{\ast}) - \tilde{\mathbf{h}}),
\end{equation}
where $\mathbf{J}(\tilde{\mathbf{\theta}},\vartheta^{\ast}) = \nabla_{\mathbf{\theta}} \mathbf{u}(\mathbf{\theta}, \tilde{\mathbf{\theta}},\vartheta^{\ast}) |_{\mathbf{\theta}=\tilde{\mathbf{\theta}}}$ is a Jacobian matrix. Similarly, we can define a \textit{two-step TML} (2TML) \textit{estimator} $\hat{\mathbf{\theta}}_{2}$ by replacing $\tilde{\mathbf{\theta}}$ with $\hat{\mathbf{\theta}}_{1}$, $\tilde{\mathbf{h}}$ with $\hat{\mathbf{h}}_{1}$, and $\mathbf{J}(\tilde{\mathbf{\theta}})$ with $\mathbf{J}(\hat{\mathbf{\theta}}_{1})$ in(\ref{equ1TML}).

In Theorem \ref{TeoSMAsymptotic1TML} of the Appendix we prove that, under general conditions including $n^{1/2}(\tilde{\mathbf{\theta}}-\mathbf{\theta}_{0})=O_{p}(1)$, the 1TML estimator satisfies the following asymptotic result:
\begin{equation*}
\sqrt{n}(\hat{\mathbf{\theta}}_{1}-\mathbf{\theta}_{0})\overset{\mathcal{L}}{\rightarrow}N(\mathbf{0},\mathbf{G}(\mathbf{\theta}_{0})^{-1}\mathbf{M}(\mathbf{\theta}_{0})\mathbf{G}(\mathbf{\theta}_{0})^{-\top}).
\end{equation*}
A similar result holds for the 2TML estimator. In the Monte Carlo simulations, reported in Section \ref{SecMonteCarlo}, we show that for finite sample sizes, the 1TML and the 2TML estimators are more efficient than the TQ$\tau$ estimator (with $10\%$ trimming) and have a reasonably robust behavior under outlier contamination. Therefore we propose, as a final procedure to estimate $\mathbf{\theta}_{0}$, the 2TML estimator starting with the TQ$\tau$ estimator with $10\%$ trimming. Numerical experiments show that further steps do not provide any significant improvement.

\section{The regression case}
\label{SecRegression} 
We now consider an AFT model for pairs of observations $(\mathbf{x}_{i},y_{i})$, $1 \leq i\leq n$, where $\mathbf{x}_{i}\in\mathbb{R}^{p}$ and $y_{i} \in\mathbb{R}$ satisfying
\begin{equation}
\label{model}y_{i}=\mu_{0}+\mathbf{\beta}_{0}^{\top}\mathbf{x}_{i} + \sigma_{0} u_{i}, \qquad i=1,\ldots,n ,
\end{equation}
where $y_{i}$ may represents a duration on the logarithmic scale and $\mathbf{x}_{i}$ the corresponding covariables vector. The slopes $\mathbf{\beta}_{0} \in\mathbb{R}^{p}$, the intercept $\mu_{0}$, and the scale $\sigma_{0}$ are unknown parameters. The errors $u_{i}$, $1\leq i\leq n$, are assumed to be i.i.d. and independent of $\mathbf{x}_{i}$. Moreover, the distribution of the carriers $\mathbf{x}_{i}$ is unknown. We assume that the error density is $f_{(0,1,\lambda_{0})}$, where $\lambda_{0}$ is an unknown shape parameter and $f_{\mathbf{\theta}}$ is given by (\ref{denggd}). We observe $(y_{i}^{\ast},\mathbf{x}_{i},\delta_{i})$, where $y_{i}^{\ast}=\min(y_{i},c_{i})$, $c_{1},\ldots,c_{n}$ are i.i.d. censoring times, which are independent of the $u_{i}$'s. We put $\delta_{i}=1$ if $y_{i}^{\ast}=y_{i}$ and $\delta_{i}=0$ otherwise. We write $\mathbf{\gamma}_{0}=(\mathbf{\theta}_{0},\mathbf{\beta}_{0})$, where $\mathbf{\theta}_{0}=(\mu_{0},\sigma_{0},\lambda_{0})$.

Let $\mathbf{\gamma}=(\mathbf{\theta},\mathbf{\beta})$ and $u=(y-\mu-\mathbf{\beta}^{\top}\mathbf{x})/\sigma$. Let $\mathbf{d}(\mathbf{x},y,\mathbf{\gamma})$ and $\mathbf{s}(\mathbf{x},y,\mathbf{\gamma})$ denote the $(3+p)$-component non-censored and censored regression score function vectors respectively. The first $3$ components of $\mathbf{d}(\mathbf{x},y,\mathbf{\gamma})$ and $\mathbf{s}(\mathbf{x},y,\mathbf{\gamma})$ are given by the right-hand sides of (\ref{zscore1})-(\ref{zscore3}) and (\ref{sscore1})-(\ref{sscore3}) respectively. In addition, for $k=1,\ldots,p$, we have
\begin{align*}
d_{3+k}(\mathbf{x},y,\mathbf{\gamma})  &  =\frac{x_{k}}{\sigma}\xi_{\lambda}(u),\\
s_{3+k}(\mathbf{x},y,\mathbf{\gamma})  &  =-\frac{x_{k}}{\sigma}\frac{f_{\lambda}(u)}{S_{\lambda}(u)}.
\end{align*}
Then, simple derivations shows that the ML estimator of $\mathbf{\gamma}$ is the solution of
\begin{equation} \label{equML3}
\frac{1}{n} \sum_{j=1}^{n}\mathbf{v}(\mathbf{x}_{j},y_{j}^{\ast},\delta_{j},\mathbf{\gamma})=\mathbf{0},
\end{equation}
where $\mathbf{v}(\mathbf{x},y,\delta,\mathbf{\gamma})=\delta\mathbf{d}(\mathbf{x},y,\mathbf{\gamma})+(1-\delta)\mathbf{s}(\mathbf{x},y,\mathbf{\gamma})$. A similar expression as (\ref{equML2}) can be obtained, where the \textit{semiparametric cdf} is defined by
\begin{align*}
H_{n,\mathbf{\gamma}}(z,\mathbf{z})  &  = \frac{1}{n} \sum_{i=1}^{n} E_{\mathbf{\gamma}} \left[  \left.  I\left(  y \leq z \right)  \right\vert y_{i}^{\ast}, \mathbf{x}_{i}, \delta_{i} \right]  I(\mathbf{x}_{i} \leq\mathbf{z})\\
&  =\frac{1}{n}\sum\delta_{i} I\left(  y_{i} \leq z\right)  I(\mathbf{x}_{i}\leq\mathbf{z})\\
&  + \frac{1}{n} \sum\left(  1-\delta_{i}\right)  \frac{\left[ F_{\mathbf{\gamma}}(z) - F_{\mathbf{\gamma}}(y_{i}^{\ast}) \right]  ^{+}}{1-F_{\mathbf{\gamma}}(y_{i}^{\ast})} I(\mathbf{x}_{i}\leq\mathbf{z}).\label{Hempreg}
\end{align*}
We denote by $E_{n,\mathbf{\gamma}}$ the expectation with respect to $H_{n,\mathbf{\gamma}}$. In the next two subsections we define a robust and efficient procedure for estimating $\mathbf{\gamma}$. In a first step an initial highly robust but not necessarily efficient estimator of $\mathbf{\gamma}_{0}$ is computed; the second step uses a highly robust and asymptotically efficient estimator of $\mathbf{\gamma}_{0}$ based on a 1TML procedure.

\subsection{The initial regression estimator}
\label{SecMMTQtaureg} 
An initial estimator of $\mathbf{\gamma}_{0}$ can be defined as follows:
\begin{enumerate}
\item Let $\bar{\mu}$ and $\bar{\mathbf{\beta}}$ be MM-estimators for censored data of $\mu_{0}$ and $\mathbf{\beta}_{0}$ as defined in \citet{salibianbarrera2008}.

\item Let $\omega_{j}=y_{j}-\bar{\mathbf{\beta}}^{\top}\mathbf{x}_{j}$ and $\omega_{j}^{\ast}=y_{j}^{\ast}-\bar{\mathbf{\beta}}^{\top}\mathbf{x}_{j}=\min(\omega_{j},c_{j}^{\ast})$ where $c_{j}^{\ast}=c_{j}-\bar{\mathbf{\beta}}^{\top}\mathbf{x}_{j}$. For large $n$, the distribution of the $\omega_{j}$s is close to $F_{\mathbf{\theta}_{0}}$ for $1\leq j\leq n$. We can then estimate $\mathbf{\theta}_{0}$ using the observations $(\omega_{1}^{\ast},\delta_{1}),\ldots,(\omega_{n}^{\ast},\delta_{n})$ by means of a TQ$\tau$ estimator that will be denoted by $\tilde{\mathbf{\theta}}=(\tilde{\mu},\tilde{\sigma},\tilde{\lambda})$.

\item The initial regression estimator of $\mathbf{\gamma}_{0}$ is $\tilde{\mathbf{\gamma}} = (\tilde{\mathbf{\theta}},\bar{\mathbf{\beta}})$. It will be called \textit{MM-TQ$\tau$ estimator}.
\end{enumerate}

In Theorem \ref{TeoSMConsistencyTQtaureg} we show that, under general conditions, if $\bar{\mathbf{\beta}}$ is $n^{1/2}$ consistent, that is if
\begin{equation}
\label{betab}n^{1/2}(\bar{\mathbf{\beta}}-\mathbf{\beta}_{0}) = O_{p}(1),
\end{equation}
then $\tilde{\mathbf{\theta}}$ satisfies
\begin{equation}
\label{thetab}n^{1/2}(\tilde{\mathbf{\theta}}-\mathbf{\theta}_{0}) = O_{p}(1).
\end{equation}
The result (\ref{betab}) remains still a conjecture, however \citet{salibianbarrera2008} provide compelling arguments in favor of this conjecture (see in particular their Theorem 6). Besides, their Monte Carlo study seems to confirm this conjecture.

\subsection{The final regression estimator}
\label{Sec1TMLreg} 
Let $\mu(\mathbf{x})=\mu+\mathbf{x}^{\top}\mathbf{\beta}$ and $\tilde{\mu}(\mathbf{x})=\tilde{\mu}+\mathbf{x}^{\top}\bar{\mathbf{\beta}}$. Then, the standardized residuals with respect to the initial estimator are $\tilde{r}_{i}^{\ast}=(y_{i}^{\ast}-\tilde{\mu}(\mathbf{x}_{i}))/\tilde{\sigma}$ and can be used to obtain the cutoff point $\vartheta^{\ast}$ and the weights
\begin{equation*}
w(\mathbf{x},y,\mathbf{\gamma},\vartheta)=w\left(  \frac{y-\tilde{\mu }(\mathbf{x})}{\tilde{\sigma}},\tilde{\lambda},\vartheta\right) .
\end{equation*}
The \textit{TML regression estimator} $\hat{\mathbf{\gamma}}$ is the solution of the equations
\begin{equation}
\mathbf{u}(\mathbf{\gamma},\tilde{\mathbf{\gamma}},\vartheta^{\ast})=\tilde{\mathbf{h}}, \label{equTMLreg}
\end{equation}
where
\begin{equation*}
\mathbf{u}(\mathbf{\gamma},\tilde{\mathbf{\gamma}},\vartheta^{\ast})=E_{n,\tilde{\mathbf{\gamma}}}\left[  w(\mathbf{x},y,\tilde{\mathbf{\gamma}},\vartheta^{\ast})\mathbf{d}(\mathbf{x},y,{\mathbf{\gamma}})\right] ,
\end{equation*}
and $\tilde{\mathbf{h}}=(\tilde{h}_{1},\tilde{h}_{2},\tilde{h}_{3},\tilde{\mathbf{h}}_{4}^{\top})^{\top}$, where $\tilde{h}_{1},\tilde{h}_{2},\tilde{h}_{3}$ are defined in (\ref{htilde}) and
\begin{equation*}
\tilde{\mathbf{h}}_{4}=\frac{1}{n}\sum_{i=1}^{n}E_{\tilde{\lambda}}\left[ w\left(  u,\tilde{\lambda},\vartheta^{\ast}\right)  \xi_{\tilde{\lambda}}(u)\mathbf{x}_{i}\right] .
\end{equation*}
The \textit{1TML regression estimator} is obtained by applying one Newton-Raphson iteration to equations (\ref{equTMLreg}). This estimator turns out to be
\begin{equation*}
\hat{\mathbf{\gamma}}_{1}=\tilde{\mathbf{\gamma}}-\mathbf{J}(\tilde{\mathbf{\gamma}},\vartheta^{\ast})^{-1}(\mathbf{u}(\mathbf{\gamma},\tilde{\mathbf{\gamma}},\vartheta^{\ast})-\tilde{\mathbf{h}}),
\end{equation*}
where $\mathbf{J}(\tilde{\mathbf{\gamma}},\vartheta^{\ast})=\nabla_{\mathbf{\gamma}}\mathbf{u}(\mathbf{\gamma},\tilde{\mathbf{\gamma}},\vartheta^{\ast})|_{\mathbf{\gamma}=\tilde{\mathbf{\gamma}}}$ is a Jacobian matrix. The \textit{2TML regression estimator} is defined in the obvious way.

In Theorem \ref{TeoSMAsymptotic1TMLreg} of the Appendix we prove that, under general conditions including $n^{1/2}(\tilde{\mathbf{\gamma}}-\mathbf{\gamma}_{0})=O_{p}(1)$, the 1TML regression estimator satisfies the following asymptotic result:
\begin{equation*}
\sqrt{n}(\hat{\mathbf{\gamma}}_{1} - \mathbf{\gamma}_{0}) \overset {\mathcal{L}}{\rightarrow} N(\mathbf{0},\mathbf{G}(\mathbf{\gamma}_{0})^{-1} \mathbf{M}(\mathbf{\gamma}_{0})\mathbf{G}(\mathbf{\gamma}_{0})^{-\top}),
\end{equation*}
where $\mathbf{M}(\mathbf{\gamma}) = E\left(  \mathbf{v}(\mathbf{x},y,\delta,\mathbf{\gamma}) \mathbf{v}(\mathbf{x}, y,\delta,\mathbf{\gamma})^{\top}\right)  $ and $\mathbf{G}(\mathbf{\gamma}) = E\left(  \nabla_{\mathbf{\gamma}} \mathbf{v}(\mathbf{x}, y,\delta,\mathbf{\gamma}) \right)$. The same result can be obtained for the 2TML regression estimator.

\section{Monte Carlo experiments}
\label{SecMonteCarlo}

\subsection{The case without covariates}
A first set of experiments was run in the case without covariates. In general, the results are similar to those described in \citet{agostinelli2014a} for the non censored case and we report here only the most representative cases. We compared the following estimators: ML, TQ$\tau$, 1TML, 2TML, and the one step WL estimator (1SWL) as defined in Section \ref{SecSM1SWL} of the Appendix.

The TQ$\tau$ estimator was defined using $\rho_{1}$ and $\rho_{2}$ in the Tukey's bisquare family (\ref{equTukeyRho}) with $c$ equal to $1.548$ and $6.08$ respectively, and $b=0.5$. The values $1.548$ and $b=0.5$ have been chosen so that the regression estimator based on the $\tau$-scale has a finite sample breakdown point equal to $0.5$. The value $6.08$ makes the asymptotic efficiency equal to $0.95$ in the case of normal errors. To compute the 1SWL estimator, we used a normal kernel with bandwidth $h=0.3\widehat{\sigma}$ in all experiments, where the scale estimator $\widehat{\sigma}$ is the scale initial estimate.

To compare the global performances of the different estimators we consider, for each set of parameters $\mathbf{\theta}=(\mu, \sigma, \lambda)$, the total variation distance (TVD)
\begin{equation*}
\text{TVD}(\mathbf{\theta})=\int\left\vert f_{\mathbf{\theta}}(y)-f_{\mathbf{\theta}_{0}}(y)\right\vert dy
\end{equation*}
between the density $f_{\mathbf{\theta}}$ and the true underlying density $f_{\mathbf{\theta}_{0}}$. The performance of the estimator $\mathbf{\hat {\theta}}$ is measured by the mean value of TVD($\mathbf{\hat{\theta}}$), that is by
\begin{equation*}
\text{MTVD}(\mathbf{\hat{\theta}})=E(\text{TVD}(\mathbf{\hat{\theta}})).
\end{equation*}
MTVD$(\mathbf{\hat{\theta}})$ clearly measures the quality of the estimated density. It is estimated, using the simulated values $\mathbf{\hat{\theta}}_{k}$ ($1\leq k\leq N$) of the estimator, by
\begin{equation*}
\text{ATVD}(\mathbf{\hat{\theta}})=\frac{1}{N}\sum_{k=1}^{N}\text{TVD}%
(\mathbf{\hat{\theta}}_{k}).
\end{equation*}

\subsubsection{Simulation under the nominal model}
We studied the efficiency of the estimators under the nominal model for $n=50$, $100$, $400$ and $1000$ and for $\lambda_{0}$ equal $1$ and $2$. Without loss of generality we took $\mu_{0}=0$ and $\sigma_{0}=1$. We considered two censoring fractions: $15\%$ and $25\%$. The number of replications was $1200$. Figure \ref{FigWithout085} (top) reports the ATVD of the robust estimators divided by the ATVD of the ML estimator as a function of the sample size for $\lambda=1$ and censoring fraction of $15\%$. This ratio can be interpreted as a measure of relative efficiency that we call \textit{TVD efficiency}. As expected, the TVD efficiency of 1TML, 2TML and 1SWML is markedly larger than the efficiency of the initial estimator as the sample size grows. Moreover, when $n$ increases the TVD efficiency becomes close to 1, that is, it becomes close to the efficiency of the ML estimator. Similar patterns are observed in the other simulated cases.

\subsubsection{Simulation under point mass contamination}
In a second Monte Carlo experiment, we compared TQ$\tau$ with $10\%$ trimming, 1TML, 2TML and 1SWL under point mass contamination for $n=50$, $100$, $400$, $1000$, $\lambda_{0}=1$, $\lambda_{0}=2$ and two censoring fractions: $15\%$ and $25\%$. We generated $90\%$ ``good'' observations $y_{j}$ according to the GLG model and $10\%$ ``outliers'' at the point $y_{0}$. We then varied the value of $y_{0}$ from $-10$ to $10$ with a step of $0.5$. This kind of point mass contamination is generally the least favorable one and allows evaluation of the maximal bias an estimator can incur. For each value of $y_{0}$, the number of replications was $1200$. Figure \ref{FigWithout085} (bottom) reports the average TVD (ATVD) of the estimated densities as a function of $y_{0}$. The results show that the ATVD of TQ$\tau$, 1TML and 2TML are comparable and very stable over the whole range of $y_{0}$ values, while 1SWL provides a very high ATVD for positive values of $y_{0}$. The ATVD of ML is not shown because of its unbounded behavior.

\begin{figure}[ptb]
\begin{center}
\includegraphics[width=0.8\textwidth]{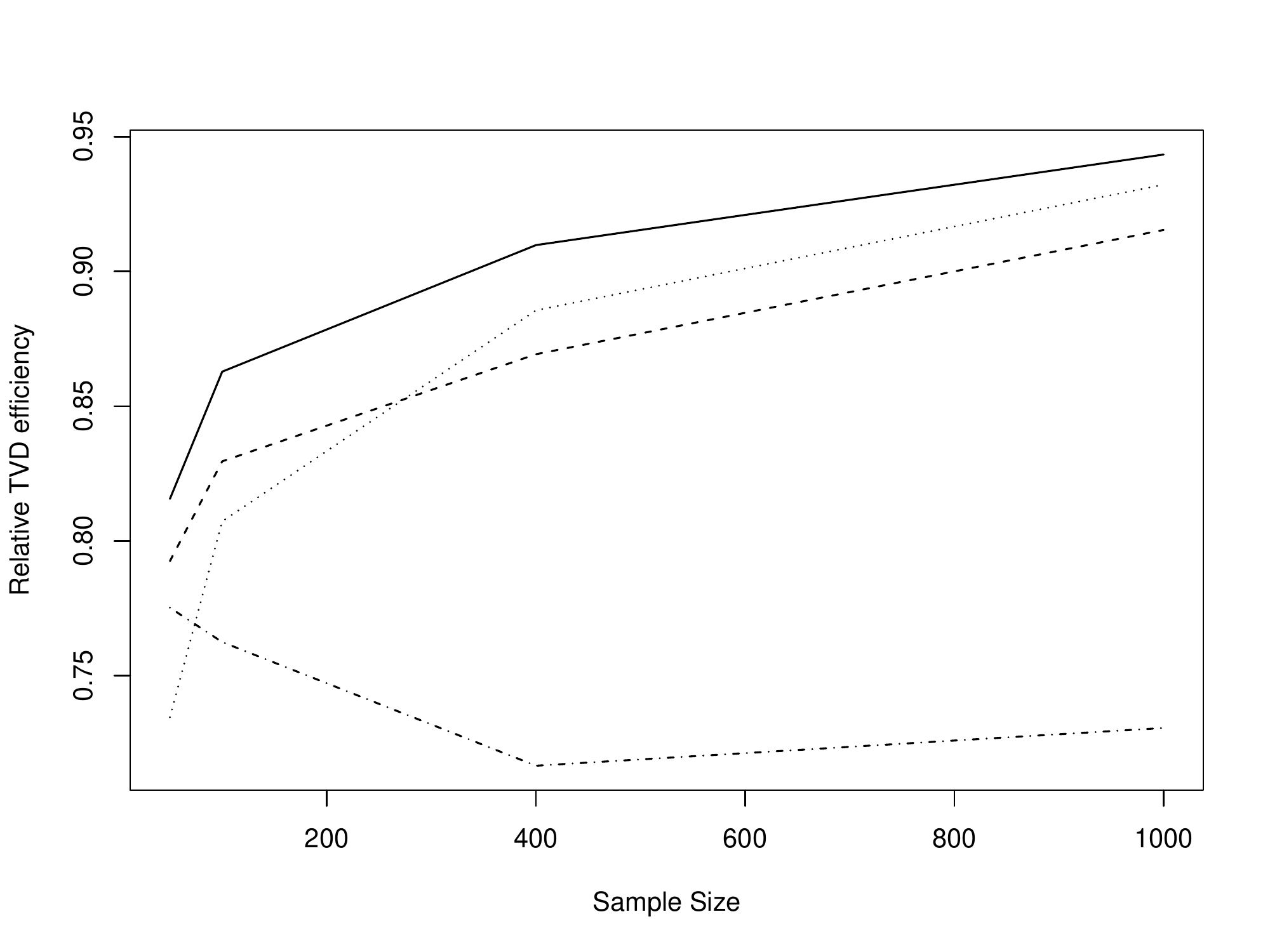}
\includegraphics[width=0.8\textwidth]{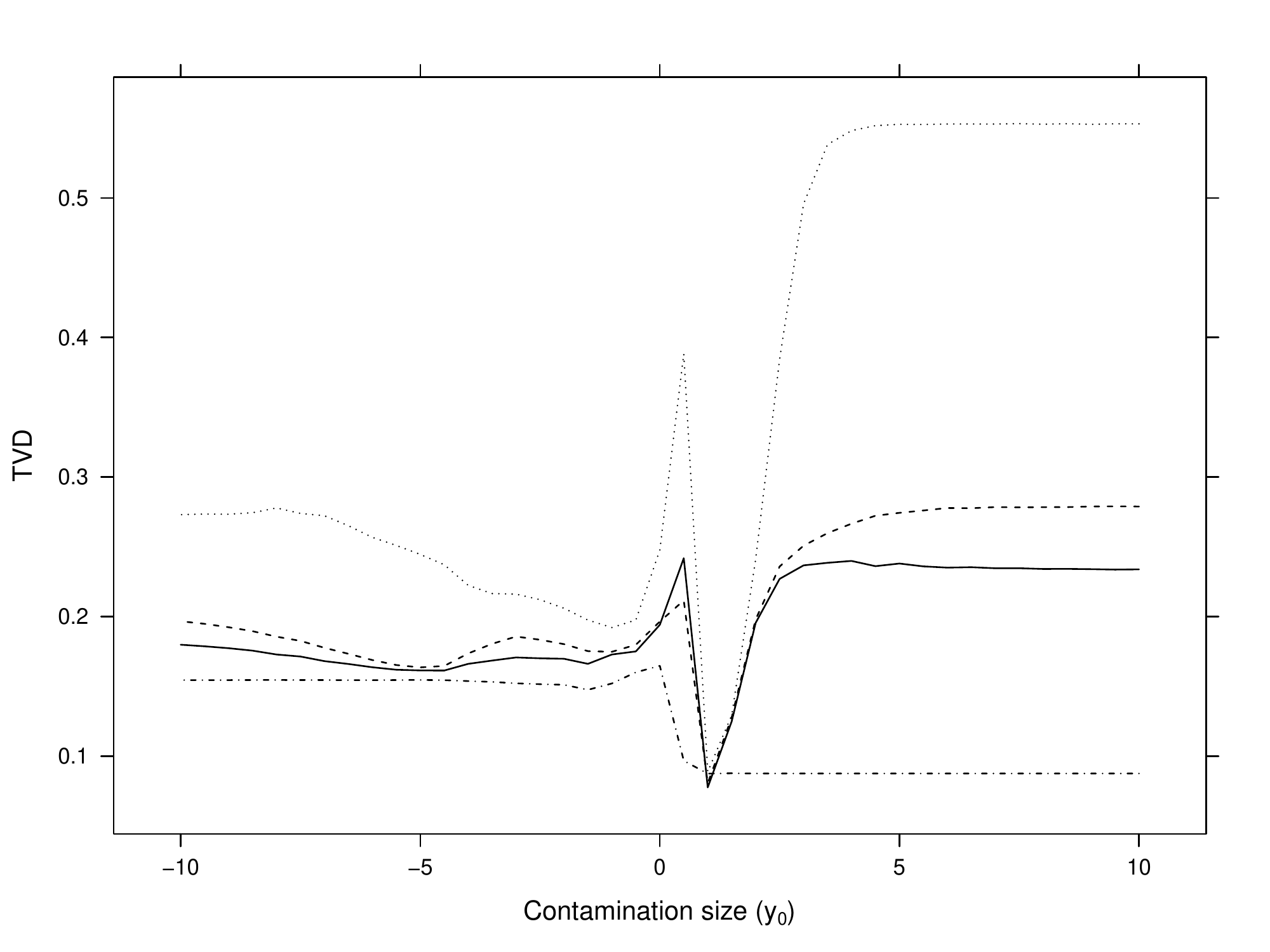}
\end{center}
\caption{Case without regressors. Top: Estimated relative TVD efficiency versus sample size. Bottom:
Average TVD under $10\%$ contamination and sample size $n=100$. Legend: TQ$\tau$ (dashed and dotted line), 1TML (dashed line), 2TML (solid line) and 1SWL (dotted line). The censoring proportion is $15\%$ and $\lambda=1$.}
\label{FigWithout085}
\end{figure}

\subsection{Regression case}
A second set of experiments was run to investigate the behavior of the MM-TQ$\tau$ estimator defined in Section \ref{SecMMTQtaureg}, the 1TML and 2TML estimators defined in Section \ref{Sec1TMLreg}, and the 1SWL estimator defined in Section \ref{SecSM1SWLreg} of the Appendix.

In each experiment, $n$ pairs of observations $(\mathbf{x}_{i},y_{i}^{\ast})$ were generated according to the regression model
\begin{align}
y_{i} &  =\mu_{0}+\mathbf{\beta}_{0}^{\top}\mathbf{x}_{i}+\sigma_{0}u_{i},\label{simmod1}\\
y_{i}^{\ast} &  =\min(y_{i},c_{i}),\label{simmod2}\\
c_{i} &  =\mu_{c}+e_{i},\label{simmod3}\\
\delta_{i} &  =1\text{ if }y_{i}^{\ast}=y_{i}\text{ and }0\text{ otherwise,}\label{simmod4}
\end{align}
with $\mu_{0}=0$, $\sigma_{0}=1$, $\lambda_{0}=1$ and $2$, $\mathbf{\beta}_{0}^{\top}=(2,3)$, $\mathbf{x}_{i}=(x_{i1},x_{i2})$, where $x_{i1}\sim N(0,1)$, $x_{i2}\sim\mathcal{B}(1,0.4)$ with $x_{i1}$ and $x_{i2}$ independent and $e_{i}\sim GLG(\mu_{c},1,1)$. The parameter $\mu_{0}$ was chosen so that the censoring fraction was $15\%$ or $25\%$.

To compare the estimators we defined, for each set of parameters $(\mathbf{\theta}, \mathbf{\beta})$ and covariable vector $\mathbf{x}$,
\begin{equation*}
\text{TVD}(\mathbf{\theta}, \mathbf{\beta}, \mathbf{x}) = \int\left\vert f_{(\mathbf{\theta},\mathbf{\beta})}(y|\mathbf{x})-f_{(\mathbf{\theta}_{0},\mathbf{\beta}_{0})}(y|\mathbf{x})\right\vert dy ,
\end{equation*}
where
\begin{equation*}
f_{(\mathbf{\theta}, \mathbf{\beta})}(y|\mathbf{x})=f_{\lambda}\left( (y -\mu- \mathbf{\beta}^{\top}\mathbf{x})/\sigma \right) .
\end{equation*}
Then, we measured the performance of the estimator $(\mathbf{\hat{\theta}},\mathbf{\hat{\beta}})$ by the mean value of $\text{TVD}(\mathbf{\hat{\theta}},\mathbf{\hat{\beta}},\mathbf{x})$, where $\mathbf{x}$ is independent of the sample used to compute $(\mathbf{\hat{\theta}}, \mathbf{\hat{\beta}})$, that is by
\begin{equation*}
\text{MTVD}(\mathbf{\hat{\theta}},\mathbf{\hat{\beta}})=E(\text{TVD}(\mathbf{\hat{\theta}},\mathbf{\hat{\beta}},x).
\end{equation*}
The expectation was taken on $(\mathbf{\hat{\theta}},\mathbf{\hat{\beta}})$ and $\mathbf{x}$ and estimated using the simulated values $(\mathbf{\hat{\theta}}_{k},\mathbf{\hat{\beta}}_{k})$ and $\mathbf{x}_{k1}^{\ast},\ldots,\mathbf{x}_{ki}^{\ast},\ldots,\mathbf{x}_{kn}^{\ast}$ ($1\leq k \leq N$):
\begin{equation*}
\text{ATVD}(\mathbf{\hat{\theta},\hat{\beta}})=\frac{1}{N n} \sum_{k=1}^{N}\sum_{i=1}^{n}\text{TVD}(\mathbf{\hat{\theta}}_{k},\mathbf{\hat{\beta}}_{k},\mathbf{x}_{ki}^{\ast}).
\end{equation*}

\subsubsection{Simulation under the nominal model}
We studied the efficiency of the estimators under the nominal model for $n=50$, $100$, $400$ and $1000$. Figure \ref{FigWith085} (top) shows the relative TVD efficiency of MM-TQ$\tau$, 1TML, 2TML and 1SWL with respect to ML for $\lambda_{0}=1$, $\mu_{0}=0$, $\sigma_{0}=1$, and censoring fraction $0.15\%$.

The efficiency of 1TML, 2TML and 1SWL is clearly higher than the efficiency of MM-TQ$\tau$. This is an expected result, since 1TML, 2TML and 1SWL are asymptotically fully efficient under the nominal model.

\subsubsection{Simulation under point mass contamination}
We also studied the behavior of the estimators under point mass contamination for $n=50$, $100$, $400$, $1000$. The values of the parameters were the same as in the case of no contamination. We generated $n$ ``good'' observations $(\mathbf{x}_{i},y_{i}^{\ast})$ according to (\ref{simmod1})-(\ref{simmod4}). We then replaced $10\%$ values $y_{i}^{\ast}$ with a value $y_{0}$ ranging from $-10$ to $20$. For each value of $y_{0}$ the number of replications was $1200$. Figure \ref{FigWith085} (bottom) shows ATVD of the estimators as a function of $y_{0}$ for sample size $n=100$ and contamination level $10\%$. We observe that MM-TQ$\tau$, 1TML, and 2TML are very resistant under point mass contamination, while the 1SWL is highly sensitive to outliers on the right tail of the distribution.

\subsubsection{Empirical finite sample breakdown point}
We were not able to obtain the breakdown point of the proposed estimators. To fill this gap, we performed a Monte Carlo simulation to explore the behavior of the maximum mean square error as a function of the contamination level. This provides information about the highest contamination the proposed estimators could cope with and hence about the finite sample breakdown point. We used the same setting as in the previous subsection, $n=1000$, $\lambda_{0}=1$, and censoring fraction $0.15$. Several values of the contamination level $\epsilon$ in the interval $[0,0.3]$ and several values $y_{0}$ in the interval $[-100,100]$ were considered. For each pair $(\epsilon,y_{0})$, we run $100$ Monte Carlo replications and computed the maximum MSE (MMSE) of the regression parameters $(\mu,\beta)$ --note that $\mu$ is the intercept--, the scale parameter ($\sigma$) and the shape parameter ($\lambda$). Results for the regression parameters (slopes) are reported in Figure \ref{FigBreakdown}; they show that the MMSE starts to increase rapidly around the $20\%$ level. This behavior is consistent with the MMSE of the initial regression parameters provided by the MM non parametric estimator.

\begin{figure}[ptb]
\begin{center}
\includegraphics[width=0.8\textwidth]{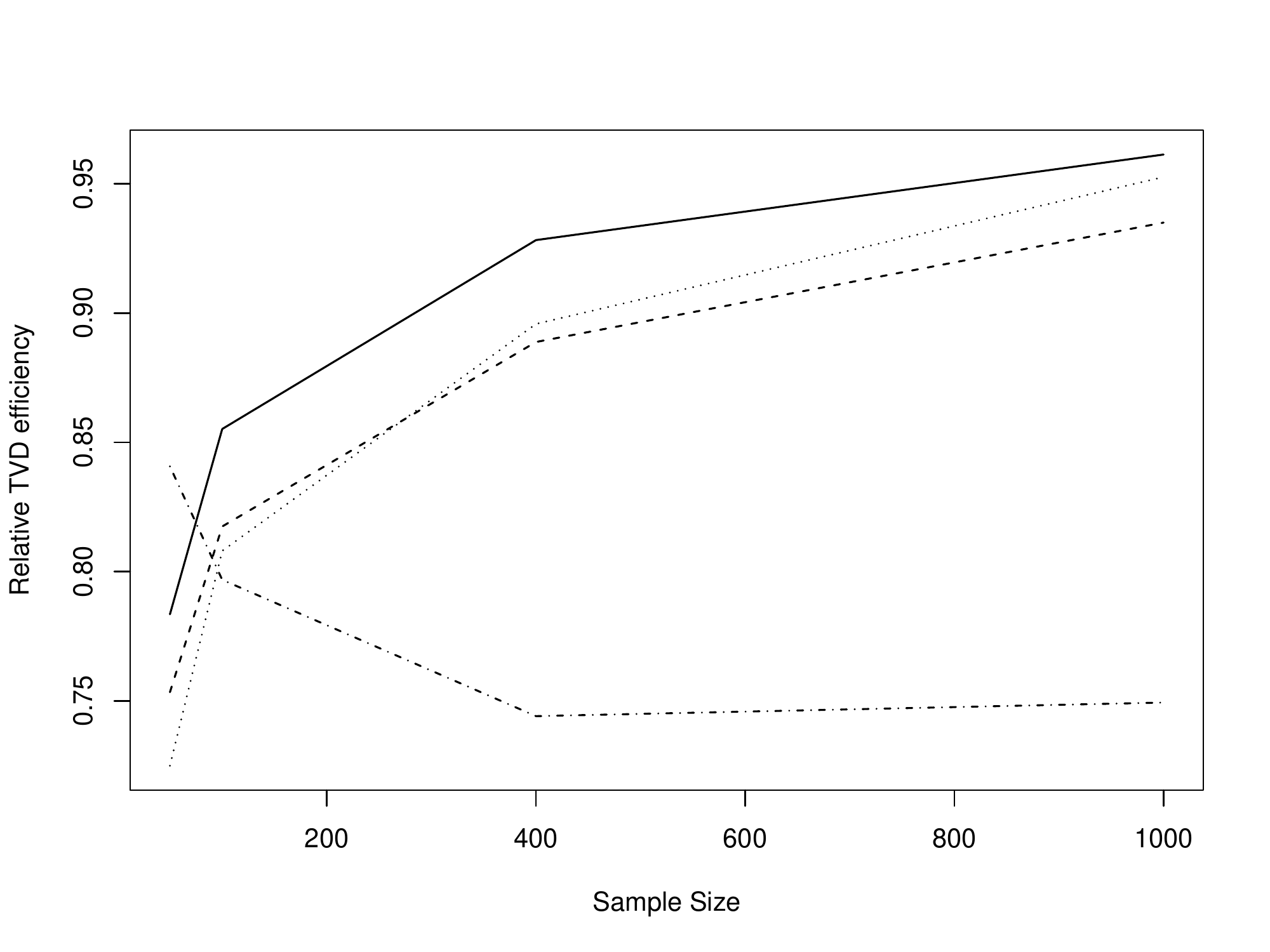}
\includegraphics[width=0.8\textwidth]{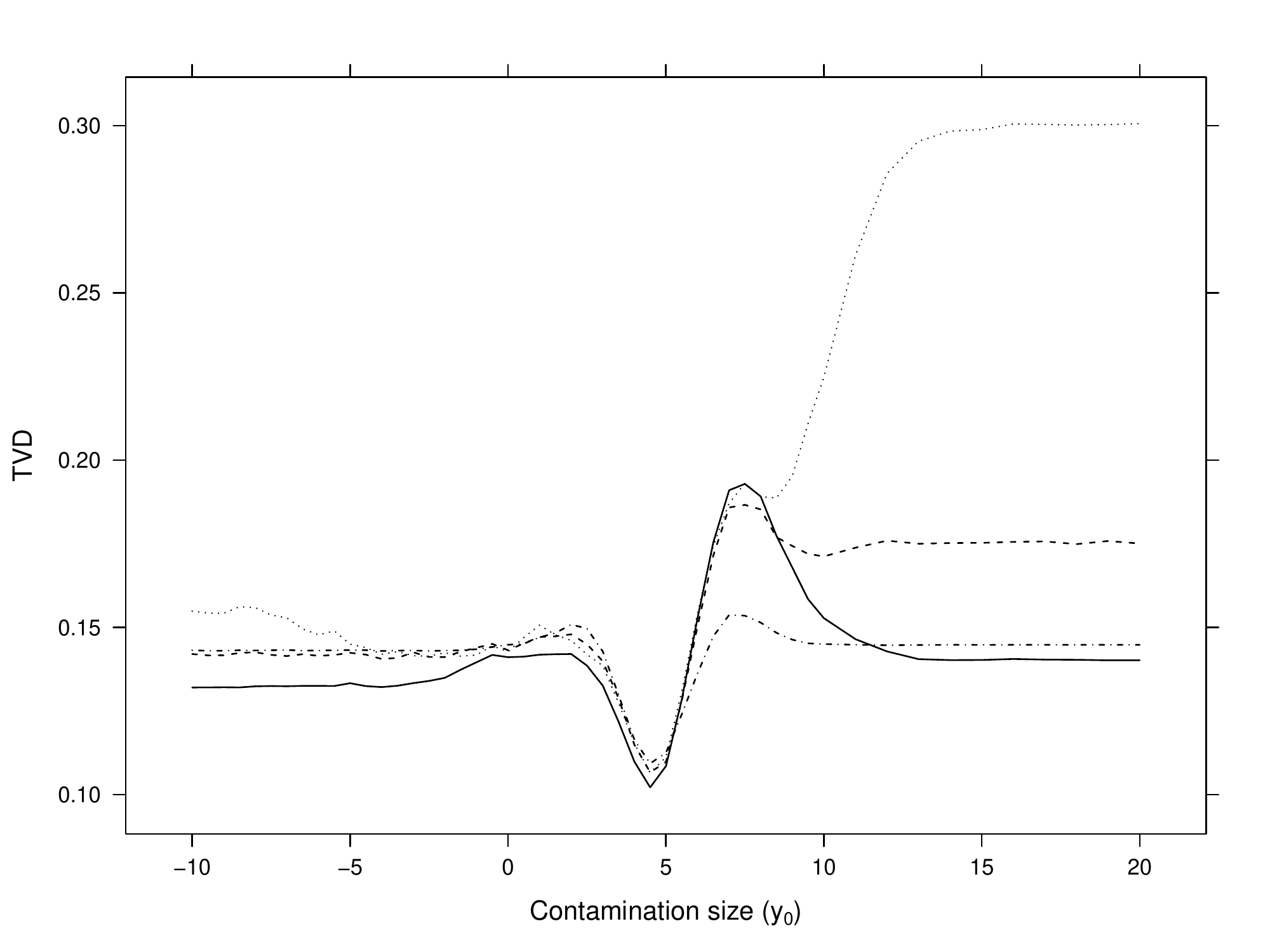}
\end{center}
\caption{Regression case. Top: Estimated relative TVD efficiency versus sample size. Bottom: Average TVD under $10\%$ contamination and sample size $n=100$. Legend: MM-TQ$\tau$ (dashed and dotted line), 1TML (dashed line), 2TML (solid line) and 1SWL (dotted line). The censoring proportion is $15\%$ and $\lambda=1$.}
\label{FigWith085}
\end{figure}

\begin{figure}[ptb]
\begin{center}
\includegraphics[width=0.9\textwidth]{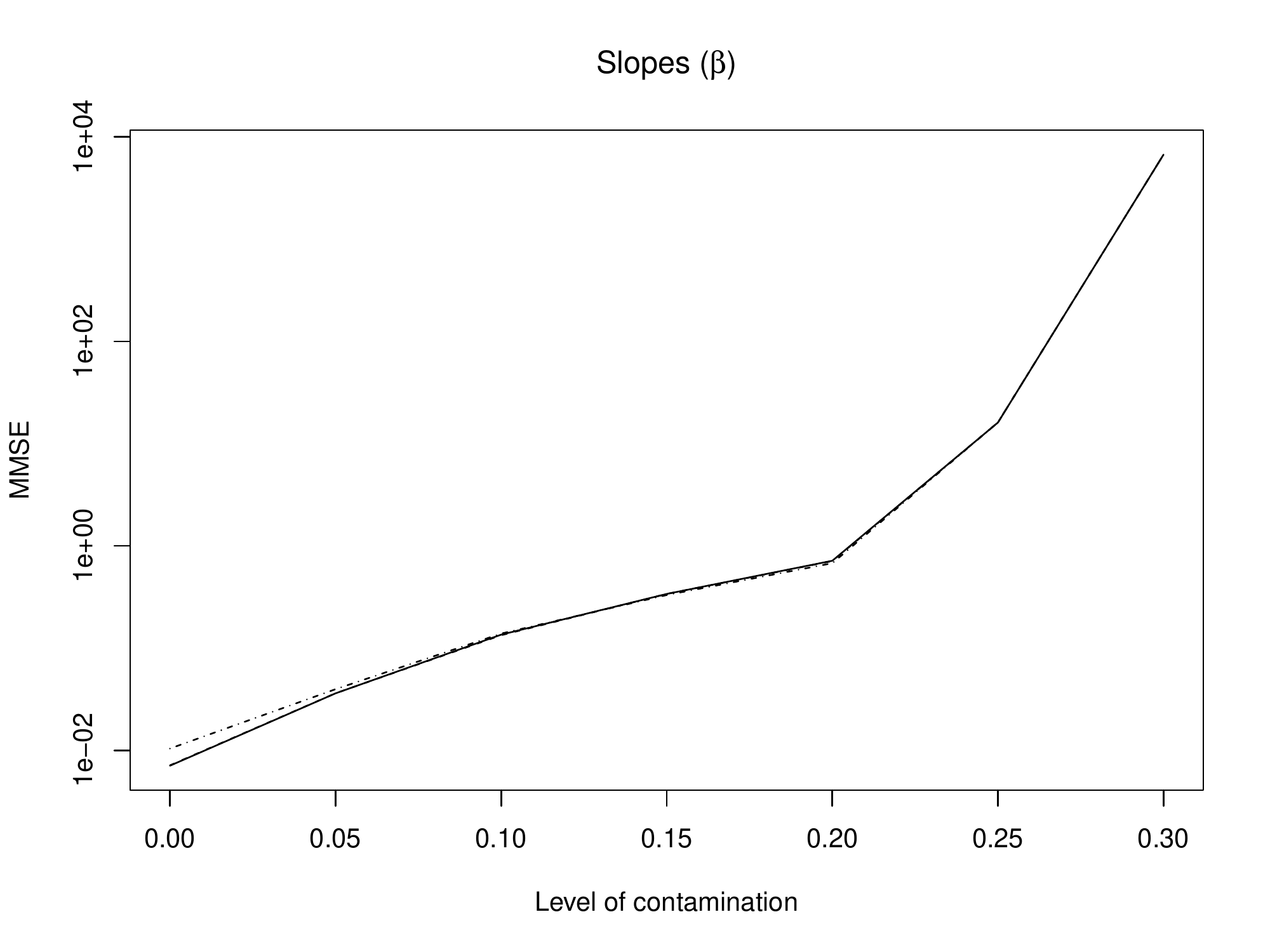}
\end{center}
\caption{Estimated maximum MSE of the regression parameters (slopes) for MM-TQ$_\tau$ (dashed and dotted line), 1TML (dashed line), and 2TML (solid line). The censoring proportion is $15\%$, $n=1000$ and $\lambda=1$. The MMSEs of the 1TML and 2TML are almost always overlapping. The y-axis is on log scale.}
\label{FigBreakdown}
\end{figure}

\begin{figure}[ptb]
\begin{center}
\vspace{-15mm}
\includegraphics[width=0.58\textwidth]{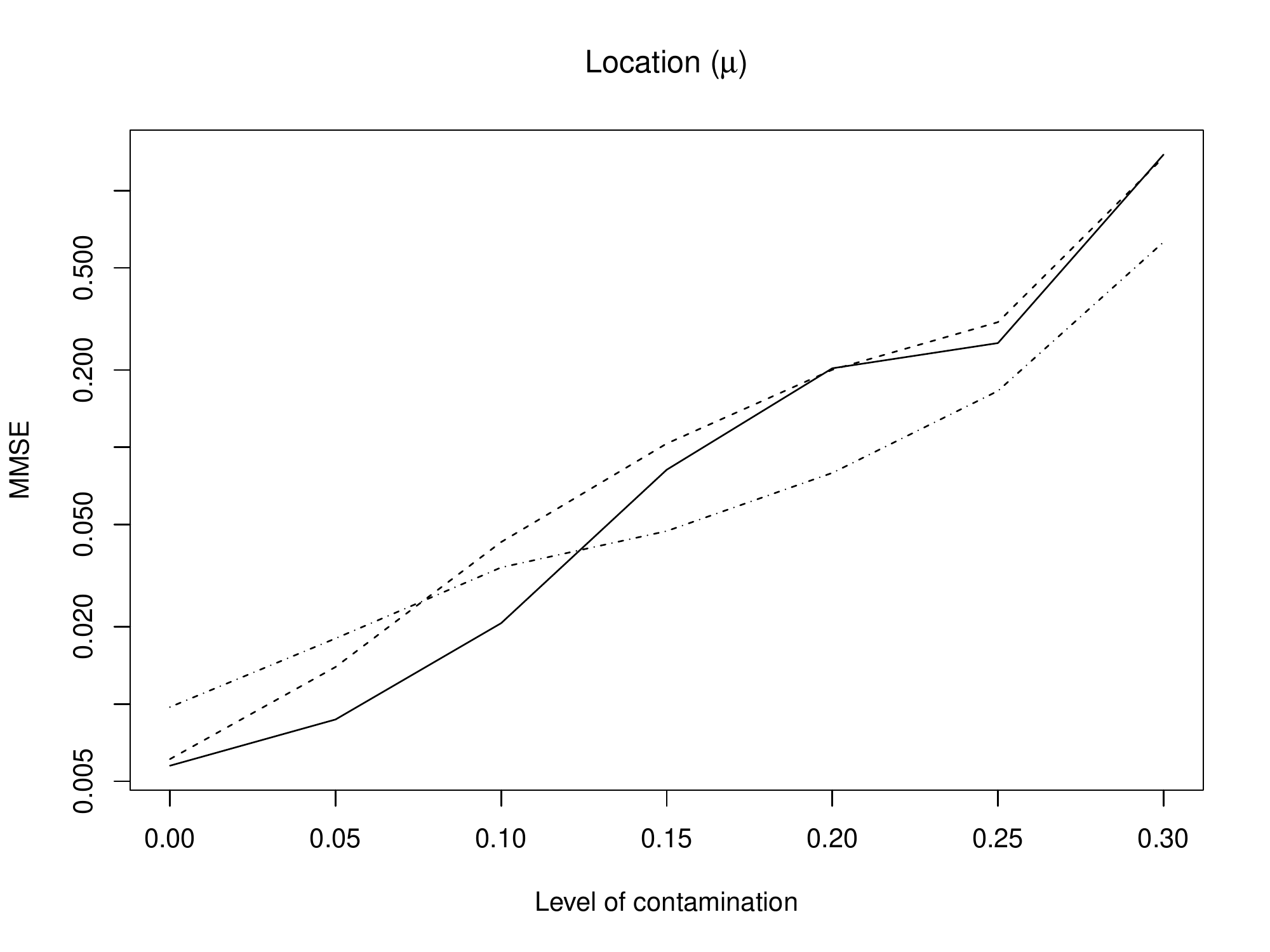}
\includegraphics[width=0.58\textwidth]{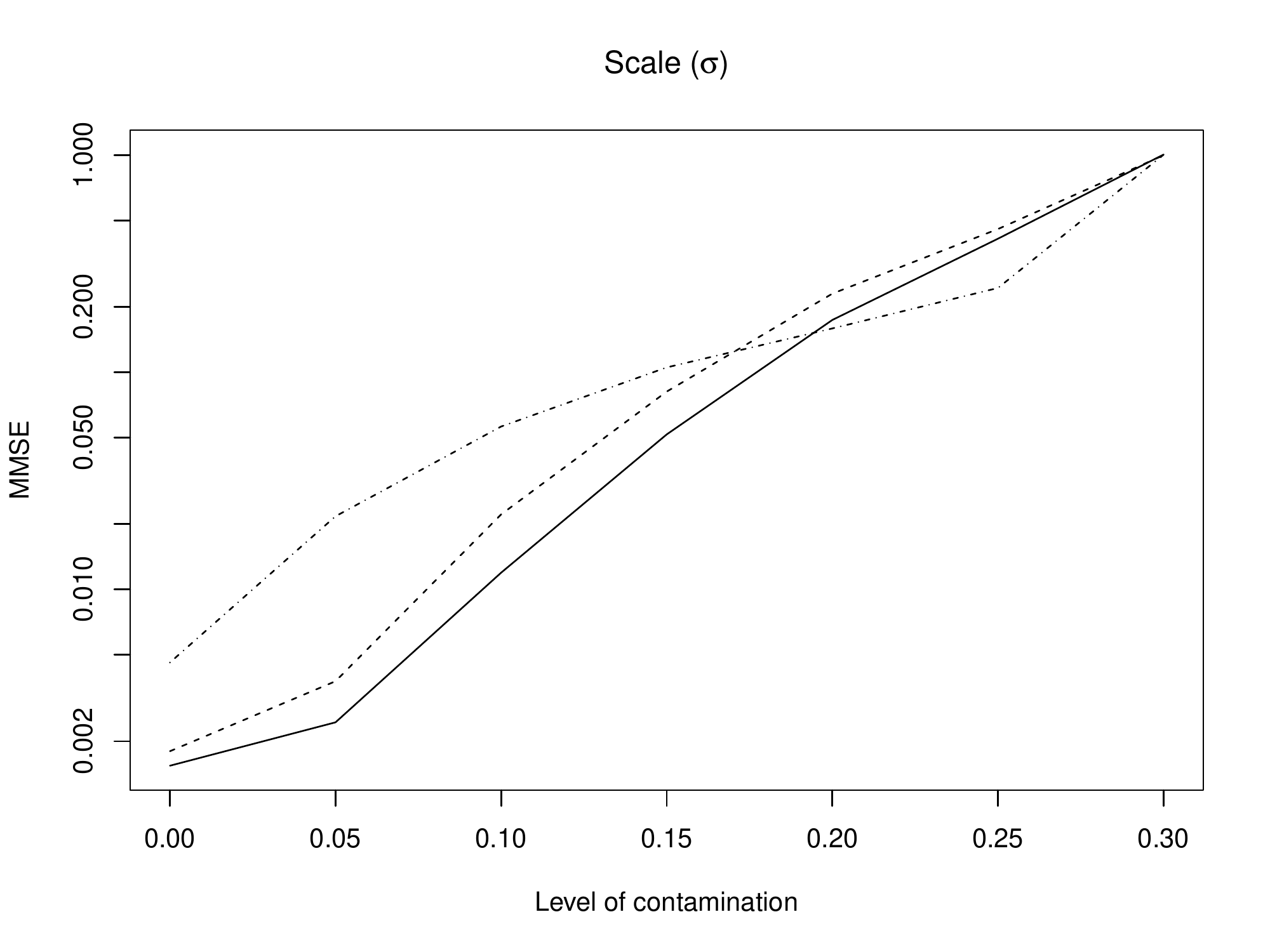}
\includegraphics[width=0.58\textwidth]{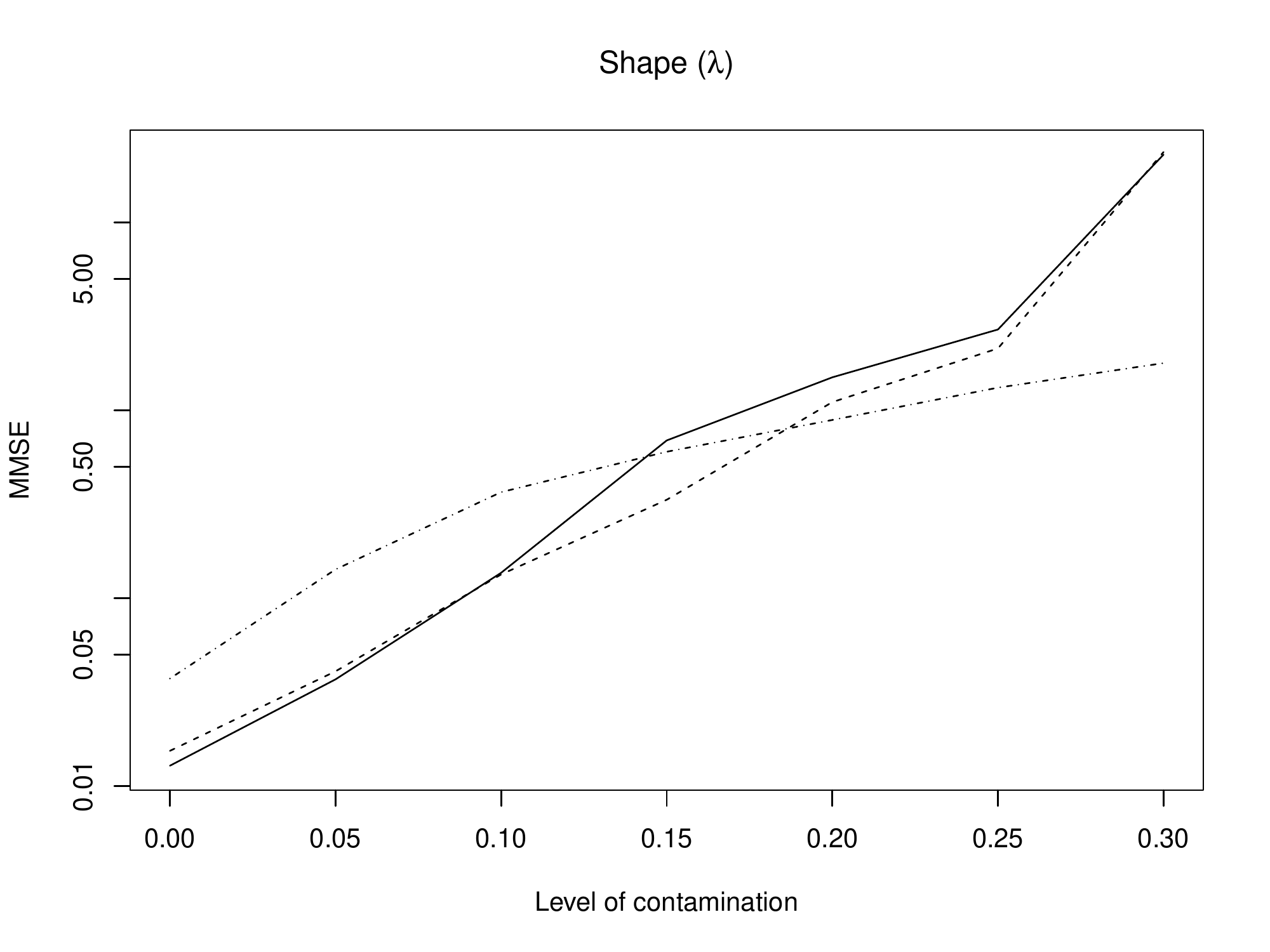}
\end{center}
\caption{Regression case. Estimated maximum MSE as function of contamination level for TQ$\tau$ (dashed and dotted line), 1TML (dashed line) and 2TML (solid line). Top: location parameter, middle: scale parameter, bottom: shape parameter. The censoring proportion is $15\%$, $n=1000$ and $\lambda=1$.}
\label{FigwithLocationScaleShape}
\end{figure}

\section{Illustrations with real data}
\label{SecExamples} 
In modern hospital management, stays are classified into ``diagnosis related groups'' (DRGs; \citet{fetter1980} which are designed to be as homogeneous as possible with respect to diagnosis, treatment, and resource consumption. The ``mean cost of stay'' of each DRGs is periodically estimated with the help of administrative data on a national basis and used to determine ``standard prices'' for hospital funding and reimbursement. Since it is difficult to measure cost, ``length of stay'' (LOS) is often used as a proxy. In designing and ``refining'' the groups, the relationship between LOS and other variables which are usually available on administrative files has to be assessed and taken into account. We discuss two examples in this domain.

\subsection{Major cardiovascular interventions}
In a first example, we consider a sample of $75$ stays in a particular
hospital and DRG ``Major cardiovascular interventions''. Of these stays, $45$ were censored because the patients were transferred to a different hospital before dismissal. The data -- shown in Figure \ref{FigExaMCI} and made available in \citet{marazzi2013} -- were first analyzed in
\citet{locatelli2010} and \citet{locatelli2013}. These authors studied the relationship between LOS and two covariates: Age of the patient ($x_{1}$) and Admission type ($x_{2}=0$ for planned admissions, $x_{2}=1$ for emergency admissions) with the help of the model $y=\alpha+\beta_{1}x_{1}+\beta_{2}x_{2}+\gamma x_{1}x_{2}+\sigma u$, where $y=\log($LOS$)$ and $u$ is following a Gaussian model. They observed that two young patients had exceptionally high non censored LOS and, as a consequence, the ML estimator yielded an unexpected large estimate of the interaction $\gamma$. Therefore, they proposed the use of a robust parametric procedure called ``weighted maximum likelihood'' (WML/G) based on the Gaussian error model (that performed better than log-Weibull). They compared WML/G with other published robust procedures an found that the robust estimates of $\gamma$ were close to zero.

Here, we assume a GLG error model and consider the ML, 1TML, 2TML, 1SWL and 2SWL regression estimates reported in Table \ref{TabExaMCI}. Estimated regression lines are reported in Figure \ref{FigExaMCI} as well. For comparison, we also report the ML estimate with Gaussian errors (ML/G) and WML/Gauss. We first notice the good agreement among the robust coefficient estimates based on GLG. The one and two steps TML and WL estimates provide the same inferences as ML after removal of the outliers.

Apart from $\mu$, $\sigma$ and $\beta_{2}$, ML yields larger absolute values for the estimates; however, none of them are significant because standard errors are inflated by the outliers. Not surprisingly, the main differences between the robust procedures based on GLG and those based on the Gaussian model concern the intercept terms ($\mu$ and $\beta_{1}$); however, the robust prediction lines based on GLG (Figure \ref{FigExaMCI}) seem to provide a better fit to the bulk of the data. This observation is supported by the plots in Figure \ref{FigExaMCIFhat}, where three types of distributions of the standardized residuals are displayed: KM, parametric (normal and GLG), and semiparametric (normal and GLG). Note the very large steps of KM corresponding to the two extreme non censored observations. The reason is that KM puts the mass of several censored residuals on these two points. The ML survival functions are strongly affected by these two points (see Figure 4a in \citet[]{locatelli2010}). Both WML/Gauss and 2TML distribution functions behave much better: with two exceptions, their residuals follow the models very well. However 2TML is clearly better in the left tail. Finally, we note that the use of GLG provides a reasonable fit for the two young patients with high non censored LOS, while the censored observations corresponding to emergency admission in the right bottom corner are considered outliers.

\begin{table}[ptb]
\caption{Estimates of the regression model for length of stay of ``Major cardiovascular interventions''. Boldface: p-values smaller than $0.05$ for the null hypothesis that the parameter equals zero. Abbreviations: c.d.$=$ complete data set, o.r.$=$ outliers removed, /G$=$ Gaussian errors.}
\begin{center}
\scalebox{0.75}{
\begin{tabular}[c]{lcccccc}
\hline
& $\mu$ & $10\beta_{1}$ & $10\beta_{2}$ & $10\gamma$ & $\sigma$ & $\lambda$ \\
\hline
1TML & \textbf{2.16} (0.11) & $-$1.84 (8.73) & \textbf{0.08} (0.02) & 0.09 (0.12) & 0.45 (0.08) & \textbf{$-$1.82} (0.14)\\
2TML & \textbf{2.16} (0.11) & $-$1.84 (8.71) & \textbf{0.08} (0.02) & 0.09 (0.12) & 0.45 (0.08) & \textbf{$-$1.82} (0.14)\\
1SWL & \textbf{2.16} (0.16) & $-$1.84 (8.10) & \textbf{0.08} (0.03) & 0.09 (0.12) & 0.45 (0.07) & \textbf{$-$1.82} (0.43)\\
2SWL & \textbf{2.16} (0.16) & $-$1.84 (8.12) & \textbf{0.08} (0.03) & 0.09 (0.12) & 0.45 (0.07) & \textbf{$-$1.82} (0.43)\\
ML c.d. &  2.05 (1.10) & 22.03 (123.00) & 0.07 (0.13) & $-$0.27 (2.04) & 0.45 (1.59) & $-$3.34 (17.54)\\
ML o.r. & \textbf{2.09} (0.10) & 1.11 (10.02) & \textbf{0.07} (0.02) & 0.04 (0.14) & 0.48 (0.10) & \textbf{$-$2.79} (0.58)\\
ML/G & 2.93 & 23.42 & 0.11 & $-$0.30 & -- & --\\
WML/G & 2.44 & 10.57 & 0.11 & $-$0.10 & -- & --\\
\hline
\end{tabular}
} 
\end{center}
\label{TabExaMCI}
\end{table}

\begin{figure}[ptb]
\begin{center}
\includegraphics[width=0.9\textwidth]{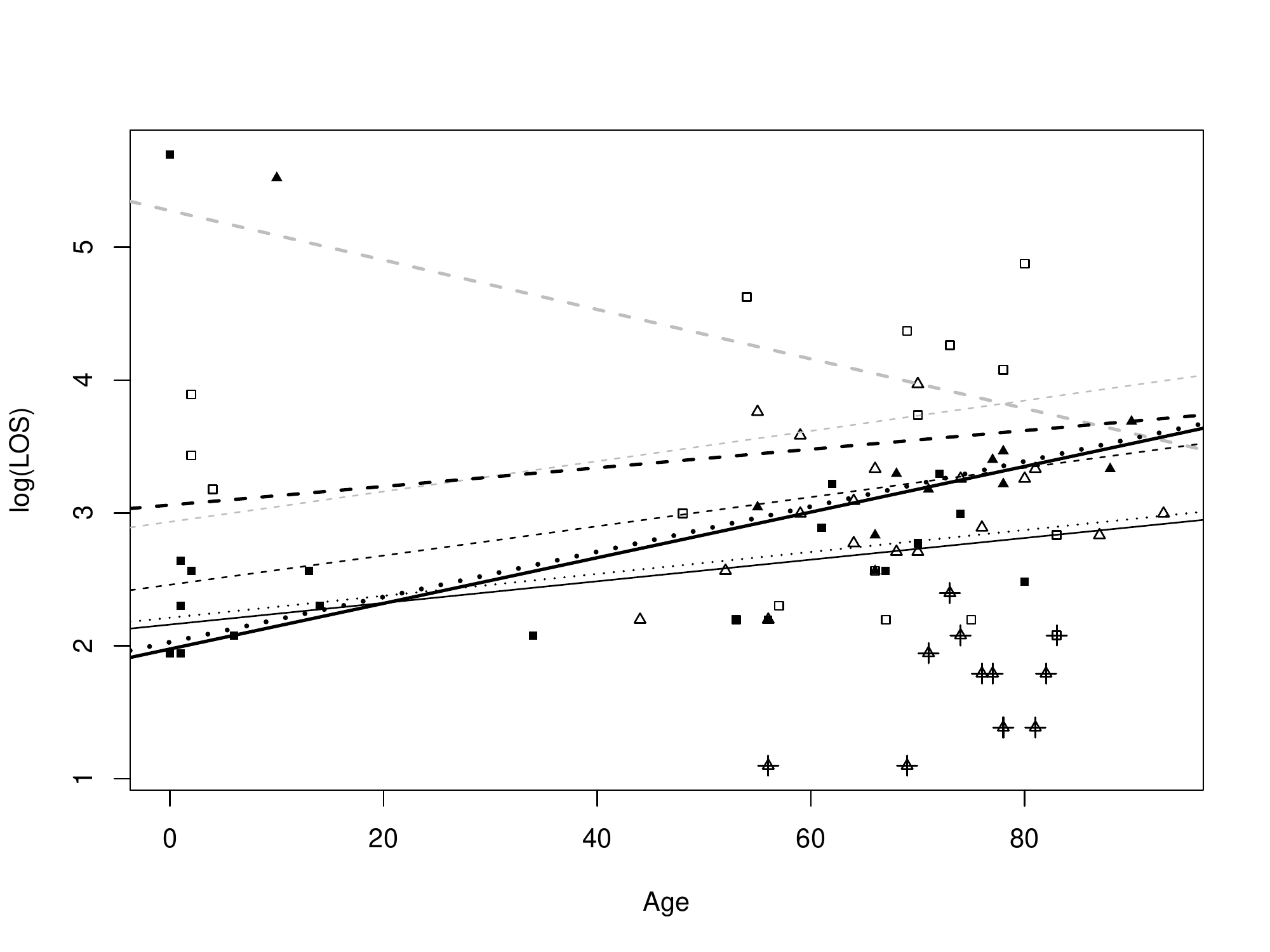}
\end{center}
\caption{Estimates of the regression models for length of stay of ``Major cardiovascular interventions''. Thin lines refer to planned admissions, while thick lines refer to emergency admission. ML is grey and dashed, WML/G is black and dashed, 2TML is black and solid, 2SWL is black and dotted. Filled marks represent complete observation; empty marks represent censored observation; square marks are planned admission; triangle are emergency admission. Outliers according to 2TML are marked with crosses.}
\label{FigExaMCI}
\end{figure}

\begin{figure}[ptb]
\begin{center}
\includegraphics[width=0.9\textwidth]{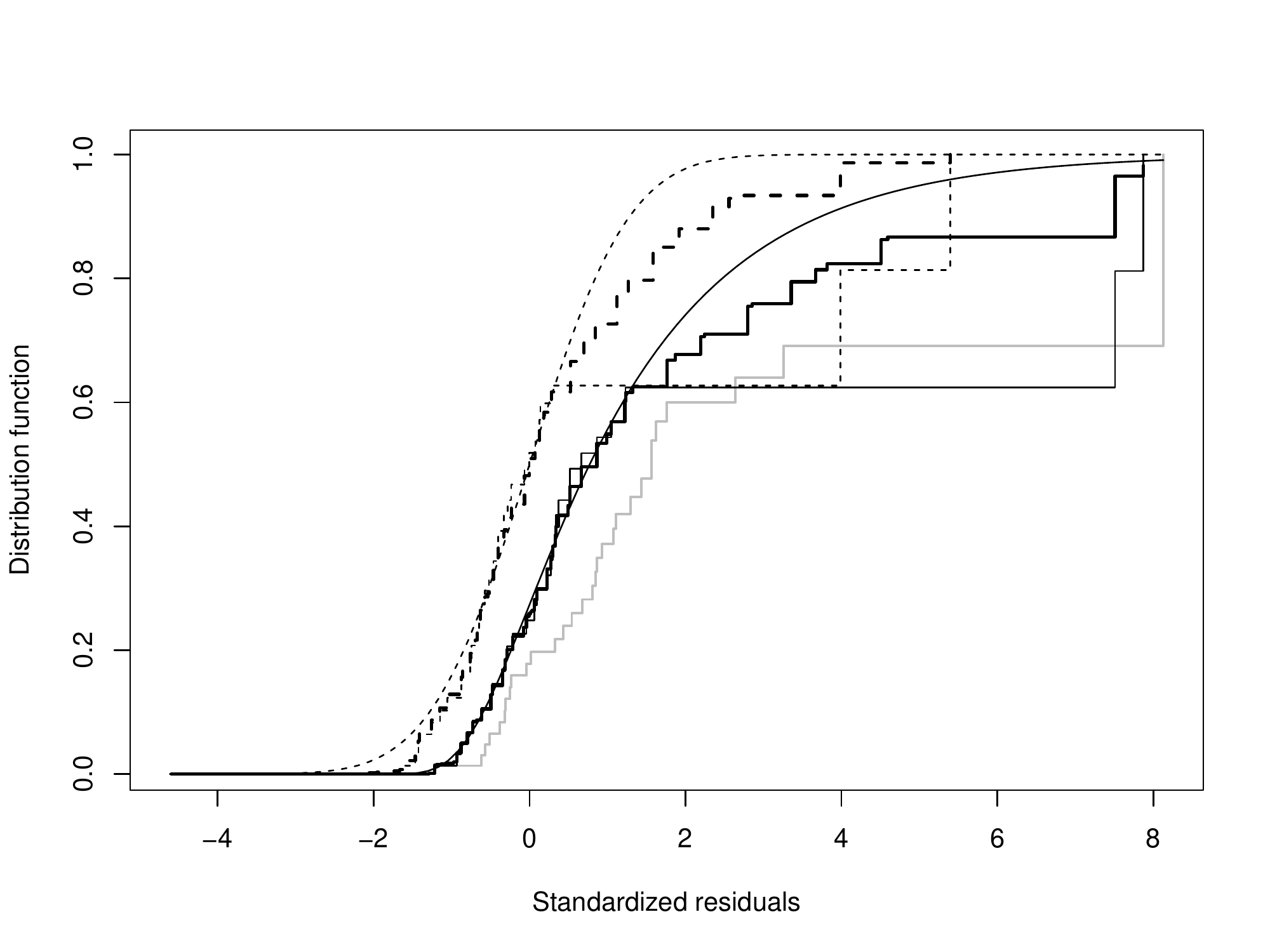}
\end{center}
\caption{Standardized residuals cdf of different estimates for the ``Major cardiovascular interventions'' data. Kaplan Meyer cdf of the ML residuals is grey; WML/G residual cdf is black and dashed; 2TML residual cdf is black and solid. Thin lines are Kaplan-Meyer cdf; thick lines are semiparametric cdf; the smooth line is the parametric cdf.}
\label{FigExaMCIFhat}
\end{figure}

\subsection{Minor bladder interventions}
In a second example, we consider a sample of $48$ stays for DRG ``Minor
bladder interventions''. The data are shown in Figure \ref{FigExaMBI}. Six patients were transferred to a different hospital before dismissal. Four young patients have surprisingly large values of LOS. We study the relationship between LOS and Age considering the model $y=\mu+\beta x+\sigma u$, where $x=$Age and $y=\log($LOS$)$. The ML, the 1TML, 2TML and 1SWL estimates and estimated standard errors are reported in Table \ref{TabExaMBI}; the corresponding prediction lines are drawn in Figure \ref{FigExaMBI}.

\begin{table}[ptb]
\caption{Estimates of the regression model for length of stay of ``Minor bladder interventions''. Abbreviations: c.d.$=$ complete data set; o.r.$=$ outliers removed.}
\begin{center}
\scalebox{0.75}{ 
\begin{tabular}[c]{lcccccc}
\hline
Method & $\mu$ & $10\beta$ & $\sigma$ & $\lambda$ &  & \\
\hline
1TML & 0.992 (0.365) &  0.127 (0.056) &
 0.480 (0.050) &  0.489 (0.170) &  & \\
2TML & 0.992 (0.365) &  0.127 (0.056) &
 0.482 (0.047) &  0.489 (0.172) &  & \\
1SWL & 0.992 (0.600) &  0.130 (0.080) &
 0.481 (0.063) &  0.490 (0.120) &  & \\
2SWL &  0.992 (0.606) &  0.131 (0.081) &
 0.484 (0.064) &  0.489 (0.120) &  & \\
ML c.d. & 2.633 (0.491) &  $-$0.113 (0.056) &
 0.686 (0.066) &  0.049 (0.565) &  & \\
ML o.r. & 0.525 (0.363) &  0.161 (0.052) &
 0.564 (0.069) & $-$0.449 (0.339) &  & \\
\hline
\end{tabular}
} 
\end{center}
\label{TabExaMBI}
\end{table}

According to ML, log(LOS) does not seem to depend on Age (the p-value is $0.0444$) and a Gauss distribution seems to be adequate (the p-value for $\lambda$ is $0.931$). The robust estimates are similar and provide a much larger slope (p-value $=0.019$). Moreover -- as it is expected
from this data -- they suggest a positive linear relationship and an asymmetric error model (the p-value for $\lambda$ is $0.0024$). Clearly, the outliers (those with weights equals to zero in 2TML are marked with crosses in Figure \ref{FigExaMBI}) have an important leverage effect on the ML coefficients and shape parameter. Removing the outliers, ML becomes similar to the robust estimates. In practice, this simple analysis suggests that the possibility of splitting this particular DRG into two groups should be further investigated.

\begin{figure}[ptb]
\begin{center}
\includegraphics[width=0.9\textwidth]{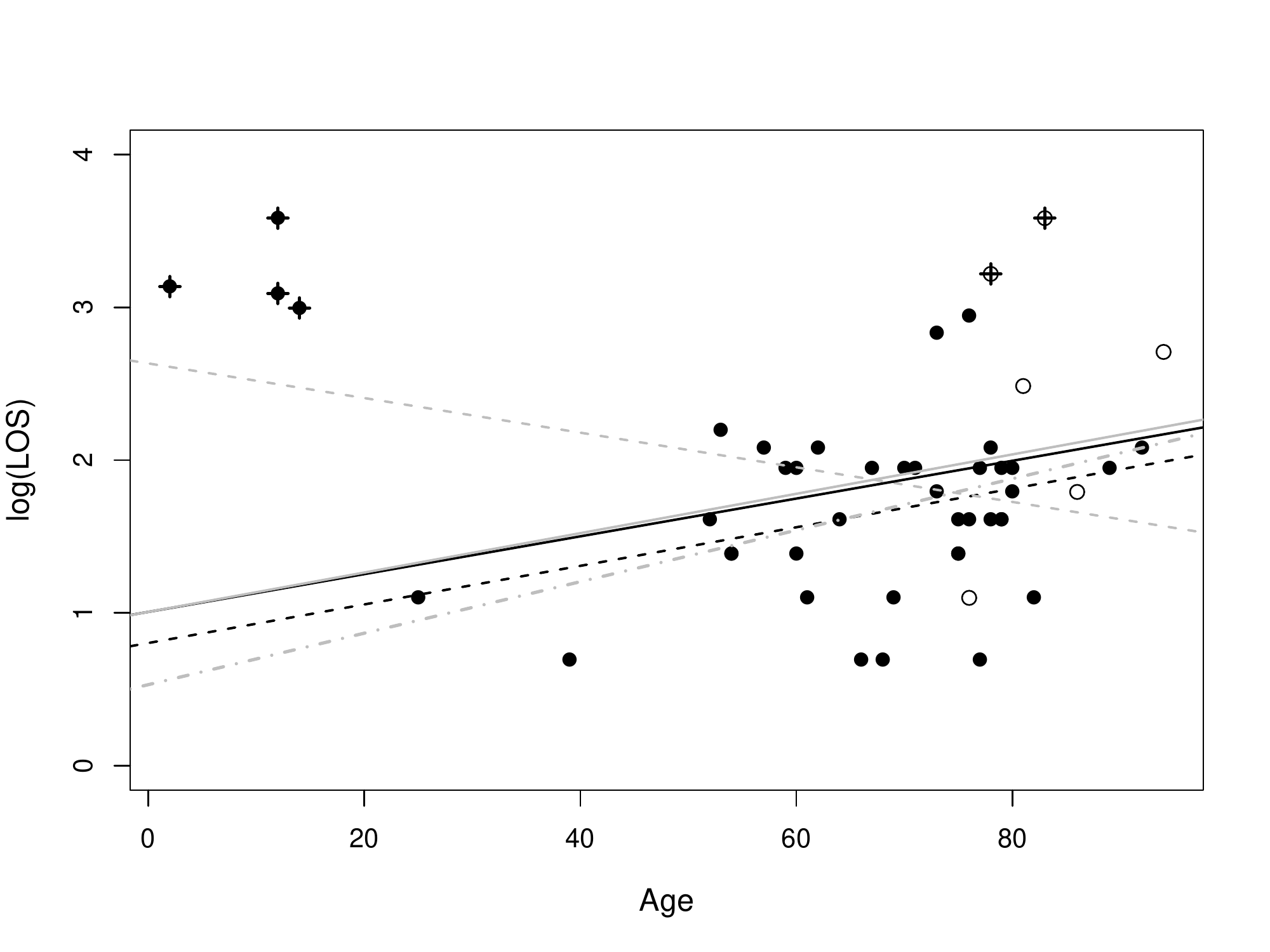}
\end{center}
\caption{Estimates of the regression models for length of stay of ``Major bladder interventions''. Maximum Likelihood based on the complete data set (ML c.d.) is grey and dashed, initial non parametric MM is black and dashed, Maximum Likelihood after removing outliers according to 2TML (ML o.r.) is gray and dash and dotted, 2TML is black and solid and 2SWL is gray and solid. Outliers according to 2TML are marked with a cross.}
\label{FigExaMBI}
\end{figure}

\section{Concluding remarks}
\label{SecConclusions} 
As mentioned in the introduction, the GLG model is a very flexible family of distributions which is used to describe asymmetrically distributed data in many real applications. In this paper, we considered estimators which are simultaneously robust and efficient for AFT models, when the errors follow a GLG distribution and the data may contain censored observations. The estimation procedures have two main components: an initial highly robust but not necessarily efficient estimator and a final efficient estimator which starts with the initial one.

We first considered the case, where no covariables are present and, in this case, we proposed an initial estimator that minimizes a $\tau$ scale of the differences between theoretical an empirical quantiles of order smaller than $(1-\alpha)$, where $0<\alpha<1$ is a trimming fraction. The final estimator is a one step weighted likelihood estimator, where the weights penalizing the outliers are derived from the initial estimator.

For the case, where covariables are present, the proposed estimators were derived in three steps. In a first step we used a regression MM-estimator as proposed in \citet{salibianbarrera2008} to obtain initial slope estimates and to compute the corresponding residuals. In a second step, we computed an initial estimator of the GLG parameters by applying the procedure for the no covariables case to these residuals. In the third step we obtained a final estimator of all the parameters using a one step truncated ML starting with the initial estimator.

We provided asymptotic results and extensive Monte Carlo results showing that the final estimators are highly efficient and maintain the same robustness level as the initial ones.



\section*{Acknowledgements}
All statistical analysis were performed on SCSCF (www.dais.unive.it/scscf), a multiprocessor cluster system owned by Ca' Foscari University of Venice running under GNU/Linux.

This research was partially supported by the Italian-Argentinian project ``Metodi robusti per la previsione del costo e della durata della degenza ospedaliera'' funded by the collaboration program MINCYT-MAE AR14MO6.

V\'{\i}ctor Yohai research was also partially supported by Grants
20020130100279 from Universidad of Buenos Aires, PIP 112-2008-01-00216 and 112-2011-01-00339 from CONICET and PICT 2011-0397 from ANPCYT.

\clearpage

\appendix
\section{Appendix}
\label{appendix}

The Appendix contains in subsection \ref{SecSMConsistencyTQtau} the proof for the  consistency of the TQ$\tau$ estimator for both cases: without and with covariates, in subsection \ref{SecSMAsymptotic1TML} the derivation of the asymptotic distribution of the 1TML and 2TML estimators and  in subection \ref{SMSec1SWL} the definition of the one step weighted likelihood estimators.

\subsection{Consistency of the TQ$\tau$ estimator}
\label{SecSMConsistencyTQtau} 
Consistency and $n^{1/2}$-consistency are proved for the TQ$\tau$ estimator for the case without covariates and for the regression case in the next two subsubsections.

\subsubsection{The case without covariables}
Consider the TQ$\tau$ estimator with trimming proportion $\alpha$ defined in Section \ref{SecTQtau} to estimate the parameter $\mathbf{\theta}$ of a GLG distribution. We need the following assumptions.

\begin{enumerate}[label=\textbf{B\arabic*}]
\item \label{assSMB1} $\Theta$ is a compact set.
\item \label{assSMB2} For all $\mathbf{\theta}$ and all $y\in\mathbb{R}$ we have $0<F_{\mathbf{\theta}}(y)<1$.
\item \label{assSMB3} $F_{\mathbf{\theta}}(y)$ has a continuous and bounded density $f_{\mathbf{\theta}}(y)>0$ in $y$ and $\mathbf{\theta}$.
\item \label{assSMB4} Given $\mathbf{\theta}_{1}\neq\mathbf{\theta}_{2}$ there is only a finite number of values $y$ such that $F_{\mathbf{\theta}_{1}}(y)=F_{\mathbf{\theta}_{2}}(y)$.
\item \label{assSMB6} Let $Q(u,\mathbf{\theta})$ be defined as the unique value $q$ such that $F_{\mathbf{\theta}}(q)=u$.
\end{enumerate}

Note that \textbf{\ref{assSMB1}}-\textbf{\ref{assSMB3}} imply that $Q(u,\mathbf{\theta})$ is continuous in $u$ and $\mathbf{\theta}$ and strictly increasing in $u$. Put $\Delta Q(u,\mathbf{\theta})=\partial Q(u,\mathbf{\theta})/\partial\mathbf{\theta}$ and $E_{0}=\{\mathbf{e}\in\mathbb{R}^{p}:||\mathbf{e}||=1\}$. Then, for all $\mathbf{e}\in E_{0}$, there is only a finite number of values $u$ such that $\Delta Q(u,\mathbf{\theta}_{0})^{\top}\mathbf{e}=0$.

\begin{theorem}
\label{TeoSMConsistencyTQtau} Assume $\rho_{1}$ and $\rho_{2}$ satisfy assumption \ref{assA1} and \ref{assSMB1}-\ref{assSMB6}. Let $\mathbf{\tilde{\theta}}$ be the TQ$\tau$ estimator. Then,
\begin{equation*}
\tilde{\mathbf{\theta}} \rightarrow\mathbf{\theta} \qquad\text{a.s.}, n
\rightarrow\infty
\end{equation*}
and
\begin{equation*}
n^{1/2}\left(  \tilde{\mathbf{\theta}} - \mathbf{\theta} \right)  = O_{P}(1)
.
\end{equation*}

\end{theorem}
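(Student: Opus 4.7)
The plan is to reformulate the TQ$\tau$ objective as a continuous functional of the Kaplan--Meier distribution $\tilde{F}_{n}$ restricted to the truncated range $(-\infty, F_{\mathbf{\theta}_{0}}^{-1}(1-\alpha)]$, and then transfer the classical asymptotic properties of $\tilde{F}_{n}$ (strong uniform consistency and $\sqrt{n}$-boundedness of the KM process on compact subintervals of the support, in the spirit of Stute and Wang) to the argmin $\tilde{\mathbf{\theta}}_{n}$. Writing the residuals as $\tilde{r}_{n,i}(\mathbf{\theta}) = t_{(i)} - Q(\tilde{u}_{n,i}, \mathbf{\theta})$ makes it clear that the TQ$\tau$ estimator is minimizing a $\tau$-distance between the KM quantile function and the theoretical quantile function over the interval $[0, 1-\alpha]$.

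\textbf{Consistency.} First I would verify that $k_{\alpha}/n \to 1-\alpha$ a.s., which follows directly from the Stute--Wang uniform consistency of $\tilde{F}_{n}$ together with assumption \ref{assSMB3}. For each fixed $\mathbf{\theta} \in \Theta$, the $\tau$-scale of $\{\tilde{r}_{n,i}(\mathbf{\theta})\}_{i=1}^{k_{\alpha}}$ converges a.s.\ to a limit functional $\tau_{\infty}(\mathbf{\theta})$ obtained by replacing $\tilde{F}_{n}$ with $F_{\mathbf{\theta}_{0}}$. Uniformity in $\mathbf{\theta}$ is inherited from compactness of $\Theta$ (\ref{assSMB1}), joint continuity of $Q(u,\mathbf{\theta})$ (\ref{assSMB3}, \ref{assSMB6}) on the truncated range, and the fact that the $\tau$-scale is continuous in its arguments under \ref{assA1}. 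Next I would argue that $\mathbf{\theta}_{0}$ is the unique minimizer of $\tau_{\infty}$: at the truth the residual quantile differences vanish identically on $[0, 1-\alpha]$, while for $\mathbf{\theta} \neq \mathbf{\theta}_{0}$ assumption \ref{assSMB4} guarantees that $Q(u, \mathbf{\theta}) \neq Q(u, \mathbf{\theta}_{0})$ on a set of positive Lebesgue measure inside $[0, 1-\alpha]$, so the $\tau$-scale is strictly positive. Standard argmin-continuity then yields $\tilde{\mathbf{\theta}}_{n} \to \mathbf{\theta}_{0}$ a.s.

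\textbf{Rate.} For the $n^{1/2}$-rate, I would perform a stochastic expansion of the first-order condition for the $\tau$-scale around $\mathbf{\theta}_{0}$. The gradient at $\mathbf{\theta}_{0}$ is a smooth integral functional of the KM process $\sqrt{n}(\tilde{F}_{n} - F_{\mathbf{\theta}_{0}})$ on the closed subinterval $(-\infty, F_{\mathbf{\theta}_{0}}^{-1}(1-\alpha)]$, which is $O_{p}(1)$ by Stute's CLT for KM integrals (the integrability conditions are mild once the right tail is cut off by $\alpha$). The Hessian converges to a matrix that is invertible at $\mathbf{\theta}_{0}$ by a Jacobian computation based on \ref{assSMB3}--\ref{assSMB6} (non-degeneracy of $\partial Q / \partial \mathbf{\theta}$ along any unit direction $\mathbf{e} \in E_{0}$). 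Combining the two gives $\sqrt{n}(\tilde{\mathbf{\theta}}_{n} - \mathbf{\theta}_{0}) = O_{p}(1)$.

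\textbf{Main obstacle.} The hardest step will be dealing with the implicit M-scale $s_{2}(\mathbf{\theta})$ sitting inside the $\tau$-functional: one has to show that $s_{2}(\mathbf{\theta})$, defined through an M-equation involving $\rho_{1}$, inherits smoothness in $\mathbf{\theta}$ and stability under perturbations of $\tilde{F}_{n}$, so that the chain-rule expansion above is legitimate. Boundedness and continuity of $\rho_{1}$ and $\rho_{2}$ from \ref{assA1} together with the implicit function theorem applied to the defining equation of $s_{2}$ should deliver this. The secondary subtlety is the boundary behavior near the trimming cut-off $k_{\alpha}$, where $\tilde{F}_{n}$ is most variable; the fact that $\alpha > 0$ keeps us strictly inside the support of $F_{\mathbf{\theta}_{0}}$ and, under \ref{assSMB2}--\ref{assSMB3}, allows one to absorb this boundary fluctuation into an $o_{p}(n^{-1/2})$ remainder.
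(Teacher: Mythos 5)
Your consistency argument (uniform a.s.\ convergence of the $\tau$-scale over the compact $\Theta$, identification of $\mathbf{\theta}_{0}$ via \ref{assSMB4}, argmin continuity) is sound and is essentially what the paper imports from Theorem 1 of \citet{agostinelli2014b}. The genuine gap is in your rate argument. You propose a first-order-condition expansion with a gradient that is $O_{p}(1)$ and a Hessian converging to an invertible deterministic matrix; if that program went through it would deliver asymptotic normality, which the paper explicitly states is false for the Q$\tau$ estimator. The obstruction is structural: at $\mathbf{\theta}_{0}$ the residuals $\tilde{r}_{n,i}(\mathbf{\theta}_{0})$ are uniformly $O_{p}(n^{-1/2})$, so $s_{2}$ and $\tau$ are themselves $O_{p}(n^{-1/2})$; since $\tau^{2}$ is homogeneous of degree $2$ but (for bounded $\rho_{1},\rho_{2}$) not a quadratic form, the objective in a $n^{-1/2}$-neighborhood of $\mathbf{\theta}_{0}$ is a degree-2 homogeneous functional of (noise $+$ linear drift) that admits no quadratic Taylor expansion at its minimum; the "weights" $\rho_{j}^{\prime}(\tilde{r}_{n,i}/s_{2})$ do not stabilize, so the Jacobian of the first-order condition converges (if at all) to a random, not deterministic, matrix. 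The correct rate argument must instead lower-bound $\tau(\tilde{r}_{n,\cdot}(\mathbf{\theta}))$ for $\|\mathbf{\theta}-\mathbf{\theta}_{0}\|\geq Kn^{-1/2}$ directly, using homogeneity and the monotonicity property (iv) of scales together with the non-degeneracy of $\Delta Q(u,\mathbf{\theta}_{0})^{\top}\mathbf{e}$; this is what Theorem 2 of \citet{agostinelli2014b} does, and the paper's whole proof reduces to checking that its single probabilistic input, $n^{1/2}\sup_{t}|F_{n}^{-1}(t)-F_{0}^{-1}(t)|=O_{p}(1)$ on the working quantile range, survives the replacement of $F_{n}$ by the Kaplan--Meier $\tilde{F}_{n}$. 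That check is Lemma \ref{LemSMsqFNtilde}: the Breslow--Crowley bound $n^{1/2}\sup_{t}|\tilde{F}_{n}(t)-F_{0}(t)|=O_{p}(1)$ converted to the quantile scale via Lemma \ref{LemSM2}.

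Two smaller points. First, your conversion to the quantile process must be restricted to a range $[\alpha_{0},1-\alpha]$ with $\alpha_{0}>0$ (the paper uses $[0.10,1-\alpha]$), not down to $u=0$ as your $(-\infty,F_{\mathbf{\theta}_{0}}^{-1}(1-\alpha)]$ formulation suggests: Lemma \ref{LemSM2} needs the density bounded away from zero on a neighborhood of the quantile range, which fails in the far left tail. Second, your concern about differentiating through the implicit M-scale $s_{2}(\mathbf{\theta})$ is well placed but is ultimately moot once the expansion strategy is abandoned; in the comparison argument one only needs the scale axioms (i)--(vi) and equivariance, not smoothness of $s_{2}$ in $\mathbf{\theta}$.
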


\begin{proof}[Proof]
The proof of consistency is similar to the proof of Theorem 1 in \citet{agostinelli2014b} while the proof of $n^{1/2}$ consistency is similar to the proof of Theorem 2 in \citet{agostinelli2014b}. The only difference is that we replace the empirical distribution $F_{n}$ by the Kaplan-Meier distribution $\tilde{F}_{n}$. Note that the only property of $F_{n}$ that was used in the proof of Theorem 2 is 
\begin{equation*}
n^{1/2}\sup_{t}\left\vert F_{n}^{-1}(t)-F_{0}^{-1}(t)\right\vert =O_{p}(1).
\end{equation*}
Then, in order to prove this Theorem, it is enough to prove that
\begin{equation}
n^{1/2}\sup_{t}\left\vert \tilde{F}_{n}^{-1}(t)-F_{0}^{-1}(t)\right\vert=O_{p}(1), \label{equSMsqFNtilde}
\end{equation}
and this is proved in Lemma \ref{LemSMsqFNtilde} below. Since we are considering an $\alpha$-trimmed Q$\tau$, we also need a slightly modified
version of Lemma 2 in \citet{agostinelli2014b}, where instead of considering the range of quantiles in the interval $[0.1,0.9]$, we consider the range $[0.10,1-\alpha]$. The proof is exactly the same and it is omitted.
\end{proof}

Suppose now that $F_{n}$ is a sequence of estimators of a distribution
function $F_{0}$ with support $(a,b)$, where $a$ may be $-\infty$ and $b$ may be $+\infty$. We consider the random variables $V_{n}=n^{1/2}\sup_{x}|F_{n}(x)-F_{0}(x)|$ and $S_{n}=n^{1/2}\sup_{\alpha\leq u\leq\beta}|F_{n}^{-1}(u)-F_{0}^{-1}(u)|$, where $0<\alpha<\beta<1$. The following assumptions are required.

\begin{enumerate}[label=\textbf{C\arabic*}]
\item \label{assSMD1} $V_{n}$ is bounded in probability.
\item \label{assSMD2} $f_{0}(x)=F_{0}^{\prime}(x)$ is continuous and positive in $(a,b)$.
\end{enumerate}

\begin{lemma}
\label{LemSM2} Assume \ref{assSMD1} and \ref{assSMD2}. Then, $S_{n}$ is bounded in probability.
\end{lemma}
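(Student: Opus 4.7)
The plan is to convert the uniform bound $V_n = O_p(1)$ on $|F_n - F_0|$ into a uniform bound on $|F_n^{-1} - F_0^{-1}|$ over $[\alpha,\beta]$, using the positivity of $f_0$ to locally invert the inequality $|F_n(x) - F_0(x)| \le V_n/\sqrt{n}$.

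For the setup, note that \ref{assSMD2} makes $F_0$ continuous and strictly increasing on $(a,b)$, so $F_0^{-1}$ is well defined on $[\alpha,\beta]$ with image $I := [F_0^{-1}(\alpha), F_0^{-1}(\beta)]$, a compact subinterval of $(a,b)$. I pick $\eta > 0$ small enough that $I_\eta := [F_0^{-1}(\alpha) - \eta, F_0^{-1}(\beta) + \eta] \subset (a,b)$; continuity and positivity of $f_0$ on the compact set $I_\eta$ then provide $\delta > 0$ with $f_0 \ge \delta$ on $I_\eta$.

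For the core argument, given $\varepsilon > 0$, assumption \ref{assSMD1} supplies $M$ with $P(V_n > M) < \varepsilon$ for all $n$; I fix $c > M/\delta$ and restrict to $n$ large enough that $c/\sqrt{n} < \eta$. On the event $\{V_n \le M\}$, for any $u \in [\alpha, \beta]$ and $x_u := F_0^{-1}(u)$, the points $x_u \pm c/\sqrt{n}$ lie in $I_\eta$, so the mean value theorem together with $f_0 \ge \delta$ yields $F_0(x_u + c/\sqrt{n}) \ge u + c\delta/\sqrt{n}$ and hence
\[
F_n(x_u + c/\sqrt{n}) \ge F_0(x_u + c/\sqrt{n}) - M/\sqrt{n} \ge u + (c\delta - M)/\sqrt{n} > u.
\]
By the definition of the generalized inverse this forces $F_n^{-1}(u) \le x_u + c/\sqrt{n}$; the symmetric inequality $F_n(x_u - c/\sqrt{n}) < u$ analogously gives $F_n^{-1}(u) \ge x_u - c/\sqrt{n}$. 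Thus $S_n \le c$ on $\{V_n \le M\}$, and since $\varepsilon$ was arbitrary, $S_n = O_p(1)$.

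The main subtlety this approach sidesteps is that $F_n^{-1}(u)$ typically lies at a jump of $F_n$, so $|F_n(F_n^{-1}(u)) - u|$ is not directly controlled by $V_n/\sqrt{n}$; comparing $F_n$ to $F_0$ at the deterministic points $x_u \pm c/\sqrt{n}$ bypasses the jump entirely and delivers the bound uniformly in $u$. The finitely many small $n$ for which $c/\sqrt{n} \ge \eta$ are absorbed by enlarging $c$ once more.
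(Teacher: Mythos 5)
Your proof is correct and is essentially the paper's own argument: both compare $F_n$ with $F_0$ at the deterministic shifted points $F_0^{-1}(u)\pm K/n^{1/2}$, use a positive lower bound on $f_0$ over a slightly enlarged compact subinterval of $(a,b)$ to show $F_n$ exceeds (resp.\ falls below) $u$ there on the event $\{V_n\leq M\}$, and conclude $S_n\leq K$ on that event for $n$ large. The notation differs (your $\delta,\eta,c,M$ play the roles of the paper's $\eta,\delta,K_0,K_1$), but the decomposition and the key inequality are the same.
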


\begin{proof}[Proof]
Take $\delta=\min(F_{0}^{-1}(\alpha)/2,(1-F_{0}^{-1}(\beta))/2)$ and let
\begin{equation}
\eta=\inf_{F^{-1}(\alpha)-\delta\leq x\leq F^{-1}(\beta)+\delta}f_{0}(x) . \label{equSMeta}
\end{equation}
By \ref{assSMD1}, for every $\varepsilon>0$ there exists $K_{1}$ such
\begin{equation}
P(V_{n}>K_{1})<\varepsilon. \label{equSMVnK1}
\end{equation}
Take $n_{0}$ such that
\begin{equation}
K_{1}/n_{0}^{1/2}<\delta\label{equSMn0}
\end{equation}
and $K_{0}$, such that
\begin{equation}
K_{0} > K_{1}/\eta. \label{equSMK0K1}
\end{equation}
We will show that, for $n \geq n_{0}$,
\begin{equation*}
P(S_{n}>K_{0})<\varepsilon.
\end{equation*}
To prove this, it is enough to show that for $n \geq n_{0}$
\begin{equation}
\{S_{n}>K_{0}\} \subset\{V_{n}>K_{1}\} . \label{equSMunion3}
\end{equation}
By (\ref{equSMVnK1}), to prove this it is enough to show that, for $n \geq n_{0}$ and for all $u$ such that $\alpha\leq u \leq\beta$, we have
\begin{equation}
V_{n}\leq K_{1}\Longrightarrow n^{1/2}(F_{n}^{-1}(u)-F_{0}^{-1}(u))\leq K_{0} \label{equSMimp11}
\end{equation}
and
\begin{equation}
V_{n}\leq K_{1}\Longrightarrow n^{1/2}(F_{n}^{-1}(u)-F_{0}^{-1}(u))\geq-K_{0}. \label{equSMimp22}
\end{equation}
Suppose that $n\geq n_{0}$, $V_{n}\leq K_{1}$ and $\alpha\leq u \leq\beta$. Then, using (\ref{equSMeta}), (\ref{equSMn0}), (\ref{equSMK0K1}), we get
\begin{align*}
& F_{n}\left(  F_{0}^{-1}(u)+\frac{K_{0}}{n^{1/2}}\right)  >F_{0}\left( F_{0}^{-1}(u)+\frac{K_{0}}{n^{1/2}}\right)  -\frac{K_{1}}{n^{1/2}} \\
&  >F_{0}(F_{0}^{-1}(u))+\frac{\eta K_{0}}{n^{1/2}}-\frac{K_{1}}{n^{1/2}} > u+\frac{\eta K_{0}-K_{1}}{n^{1/2}}>u,
\end{align*}
and therefore $F_{n}^{-1}(u)\leq F_{0}^{-1}(u)+K_{0}/n^{1/2}$. Then (\ref{equSMimp11}) holds. The proof of (\ref{equSMimp22}) is similar.
\end{proof}

\begin{lemma}
\label{LemSMsqFNtilde} Assume $\rho_{1}$ and $\rho_{2}$ satisfy assumption \ref{assA1} and \ref{assSMB1}-\ref{assSMB6} and \ref{assSMD1}-\ref{assSMD2} then
\begin{equation*}
n^{1/2}\sup_{t}\left\vert \tilde{F}_{n}^{-1}(t)-F_{0}^{-1}(t)\right\vert = O_{p}(1).
\end{equation*}
\end{lemma}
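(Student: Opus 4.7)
The plan is to apply Lemma~\ref{LemSM2} with $F_n=\tilde F_n$, so that the quantile statement reduces to a uniform $n^{1/2}$-bound on the Kaplan-Meier distribution function itself. Hypothesis \ref{assSMD2} holds immediately from assumption \ref{assSMB3}, which supplies a continuous and strictly positive density $f_0=F_0'$ on the support. The only substantive task is therefore to verify hypothesis \ref{assSMD1} for $\tilde F_n$, namely
\begin{equation*}
V_n \;:=\; n^{1/2}\sup_{x\le T}\bigl|\tilde F_n(x)-F_0(x)\bigr| \;=\; O_p(1),
\end{equation*}
for a suitable $T$ that dominates the quantile range actually used by the $\alpha$-trimmed Q$\tau$ estimator.

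I would fix $T$ strictly between $F_0^{-1}(1-\alpha)$ and the upper endpoint of the support of the observed variable $y^\ast=\min(y,c)$, so that $P(y^\ast>T)>0$ under the standing assumption that the censoring distribution leaves mass to the right of every quantile used by the estimator. On such an interval, standard Kaplan-Meier theory (Gill's martingale integral representation together with the resulting tightness in $\ell^\infty((-\infty,T])$) yields $\sqrt n(\tilde F_n-F_0)\Rightarrow W$ for a tight Gaussian process $W$, and the continuous mapping theorem gives $V_n=O_p(1)$. This is exactly \ref{assSMD1}.

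Invoking Lemma~\ref{LemSM2} for $F_n=\tilde F_n$ on the range $[\alpha',1-\alpha]$ (for any $\alpha'\in(0,F_0(T))$, in particular $\alpha'=0.1$ as required by the modified version of Lemma~2 of \citet{agostinelli2014b} cited in the proof of Theorem~\ref{TeoSMConsistencyTQtau}) then gives
\begin{equation*}
n^{1/2}\sup_{\alpha'\le u\le 1-\alpha}\bigl|\tilde F_n^{-1}(u)-F_0^{-1}(u)\bigr|=O_p(1),
\end{equation*}
which is the form of (\ref{equSMsqFNtilde}) actually used in the consistency argument.

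The main obstacle is the right-tail behavior. Unlike the empirical distribution function, the Kaplan-Meier estimator is not $n^{1/2}$-close to $F_0$ uniformly on the whole real line: it can be inconsistent near the upper end of its support when censoring is heavy there, and the martingale variance bounds explode as one approaches the largest observation. Trimming is precisely what confines the argument to an interval $(-\infty,T]$ on which $V_n=O_p(1)$ can be proved, and this is why the $\alpha$-trimming in the TQ$\tau$ estimator is essential for the $n^{1/2}$-consistency result rather than being merely a robustness device.
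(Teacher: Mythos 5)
Your proof follows the same route as the paper's: verify hypothesis \ref{assSMD1} for the Kaplan--Meier estimator and then invoke Lemma~\ref{LemSM2} to pass from the distribution function to its quantiles (the paper simply cites \citet{breslow1974} for the uniform $n^{1/2}$ bound where you appeal to Gill's martingale representation). Your added caveat that the supremum must be restricted to a region $(-\infty,T]$ with $P(y^{\ast}>T)>0$, and that the $\alpha$-trimming is precisely what makes this restriction harmless, is a genuine refinement rather than a deviation: as written, the paper's $\sup_{t}$ over all $t$ overstates what the Breslow--Crowley result actually delivers, since the Kaplan--Meier estimator need not be $n^{1/2}$-consistent near the upper end of the support under heavy censoring, so your version is the one that is both correct and sufficient for the use made of the lemma in Theorem~\ref{TeoSMConsistencyTQtau}.
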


\begin{proof}[Proof]
Let $\tilde{F}_{n}$ be the Kaplan-Meier distribution applied to $Y_{n}^{\ast}=(y_{1}^{\ast},\cdots,y_{n}^{\ast})$ and $\Delta_{n}$. \citet{breslow1974} showed that
\begin{equation*}
n^{1/2}\sup_{t}\left\vert \tilde{F}_{n}(t)-F_{0}(t)\right\vert =O_{p}(1),
\end{equation*}
and hence, by Lemma \ref{LemSM2}, we have the result.
\end{proof}

\subsubsection{The case with covariables}
We now move to the regression estimators defined in Section \ref{SecRegression}. We suppose that $(y_{i}^{\ast},\mathbf{x}_{i},\delta_{i})$ ($1\leq i\leq n$, $\mathbf{x}_{i}\in\mathbb{R}^{p}$) is a sample of observations which follow the model (\ref{model}), i.e.,
\begin{equation*}
y_{i}=\mu_{0}+\mathbf{\beta}_{0}^{\top}\mathbf{x}_{i}+\sigma_{0} u_{i}, \qquad i=1,\ldots,n,
\end{equation*}
and $u_{i} \sim f_{0,1,\lambda_{0}}$. Let $\mathbf{\bar{\beta}}_{n}$ be an estimator of $\mathbf{\beta}_{0}$. We consider the following assumptions
\begin{enumerate}[label=\textbf{D\arabic*}]
\item \label{assSME1} $n^{1/2}(\mathbf{\bar{\beta}}_{n}-\mathbf{\beta}_{0})$ is bounded in probability.
\item \label{assSME2} There exist $C$ such that for all $i$ we have $||\mathbf{x}_{i}||\leq C$.
\end{enumerate}
We define
\begin{align*}
\upsilon_{i}  &  = y_{i}-\mathbf{\beta}_{0}^{\top}\mathbf{x}_{i}=\mu_{0}+\sigma_{0}u_{i},\\
\upsilon_{i}^{\ast}  &  = y_{i}^{\ast}-\mathbf{\beta}_{0}^{\top}\mathbf{x}_{i}=\min(\upsilon_{i},c_{i}-\mathbf{\beta}_{0}^{\top}\mathbf{x}_{i}),
\end{align*}
and the residuals
\begin{equation*}
\omega_{i}^{\ast} = y_{i}^{\ast}-\mathbf{\bar{\beta}}_{n}^{\top}\mathbf{x}_{i}.
\end{equation*}
Then,
\begin{equation*}
\omega_{i}^{\ast}=\min(\upsilon_{i},c_{i}-\mathbf{\beta}_{0}^{\top} \mathbf{x}_{i})+(\mathbf{\beta}_{0}-\mathbf{\bar{\beta}}_{n})^{\top} \mathbf{x}_{i}=\upsilon_{i}^{\ast}+(\mathbf{\bar{\beta}}_{n}-\mathbf{\beta}_{0})^{\top}\mathbf{x}_{i}.
\end{equation*}
We consider that the estimate $\tilde{\mathbf{\theta}}$ of $\mathbf{\theta}_{0} = (\mu_{0}, \sigma_{0}, \lambda_{0})$ is the TQ$\tau$ based on the residuals $(\omega_{i}^{\ast}, \delta_{i})$, $1 \le i \le n$ and we let $\tilde{\mathbf{\gamma}} = (\bar{\mathbf{\beta}}, \tilde{\mathbf{\theta}})$ and $\mathbf{\gamma}_{0} = (\mathbf{\beta}_{0}, \mathbf{\theta}_{0})$. Suppose that, using a sample $X_{n}=(x_{1},\ldots,x_{n})$ and the censorship indicators $\Delta_{n}=(\delta_{1},\ldots,\delta_{n})$, we have an estimator $F_{n}^{X_{n},\Delta_{n}}(x)$ of a distribution $F_{0}$. We assume that
\begin{enumerate}[label=\textbf{E\arabic*}]
\item \label{assSMC1} $n^{1/2}\sup_{x}|F_{n}^{X_{n},\Delta_{n}}(x)-F_{0}(x)|$ is bounded in probability.
\item \label{assSMC2} Suppose that given two samples $X_{n}=(x_{1},\ldots,x_{n})$ and $X_{n}^{\ast}=(x_{1}^{\ast},\ldots,x_{n}^{\ast})$ such that $x_{i}\geq x_{i}^{\ast}$ for all $i$, then $F_{n}^{X_{n},\Delta_{n}}(x)\leq$ $F_{n}^{X_{n}^{\ast},\Delta_{n}}(x)$.
\item \label{assSMC3} Given a real number $z$ and a sample $X_{n}=(x_{1},\ldots,x_{n})$ , let $X_{n}+z$ $=(x_{1}+z,\ldots,x_{n}+z)$. Then $F_{n}^{X_{n}+z,\Delta_{n}}(x)=F_{n}^{X_{n},\Delta_{n}}(x-z)$.
\item \label{assSMC4} Let $Q_{n}=(q_{1},\ldots,q_{n})$ be a sequence of random variables such that \hbox{   } $n^{1/2}\sup_{1\leq i\leq n}|q_{i}|$ is bounded in probability.
\end{enumerate}

\begin{theorem}
\label{TeoSMConsistencyTQtaureg} Assume \ref{assSME1}-\ref{assSME2} and \ref{assSMC1}-\ref{assSMC4} then $\tilde{\mathbf{\gamma}}$ is a consistent estimator of $\mathbf{\gamma}_{0}$ and furthermore
\begin{equation*}
n^{1/2}\sup_{t}\left\vert \tilde{\mathbf{\gamma}} -\mathbf{\gamma}_{0} \right\vert =O_{p}(1) .
\end{equation*}
\end{theorem}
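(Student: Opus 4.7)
The plan is to reduce the regression theorem to Theorem~\ref{TeoSMConsistencyTQtau} by showing that the Kaplan--Meier estimator built from the residuals $\omega_i^{\ast}=\upsilon_i^{\ast}+q_i$, where $q_i=(\bar{\mathbf{\beta}}_n-\mathbf{\beta}_0)^{\top}\mathbf{x}_i$, still admits the uniform $n^{1/2}$ rate (\ref{equSMsqFNtilde}) against the true distribution $F_0$ of $\upsilon_i=\mu_0+\sigma_0 u_i$. The $n^{1/2}$-consistency of $\bar{\mathbf{\beta}}_n$ is given by assumption \ref{assSME1}, so the only real work is to prove that $n^{1/2}(\tilde{\mathbf{\theta}}-\mathbf{\theta}_0)=O_p(1)$; the joint statement for $\tilde{\mathbf{\gamma}}=(\bar{\mathbf{\beta}}_n,\tilde{\mathbf{\theta}})$ then follows immediately.

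First I would verify that the perturbations $q_i$ satisfy assumption \ref{assSMC4}. Cauchy--Schwarz together with \ref{assSME2} gives $|q_i|\le C\|\bar{\mathbf{\beta}}_n-\mathbf{\beta}_0\|$, so by \ref{assSME1}
\begin{equation*}
n^{1/2}\sup_{1\le i\le n}|q_i|\;\le\; C\,n^{1/2}\|\bar{\mathbf{\beta}}_n-\mathbf{\beta}_0\|\;=\;O_p(1).
\end{equation*}
Writing $q_{-}=\min_i q_i$ and $q_{+}=\max_i q_i$, both $q_{\pm}$ are $O_p(n^{-1/2})$.

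The core step is to sandwich $\tilde{F}_n^{\omega^{\ast},\Delta_n}$ between two translates of $\tilde{F}_n^{\upsilon^{\ast},\Delta_n}$. Since $\upsilon_i^{\ast}+q_{-}\le\omega_i^{\ast}\le\upsilon_i^{\ast}+q_{+}$ for every $i$, the monotonicity property \ref{assSMC2} yields
\begin{equation*}
\tilde{F}_n^{\upsilon^{\ast}+q_{+},\Delta_n}(t)\;\le\;\tilde{F}_n^{\omega^{\ast},\Delta_n}(t)\;\le\;\tilde{F}_n^{\upsilon^{\ast}+q_{-},\Delta_n}(t),
\end{equation*}
and the translation invariance \ref{assSMC3} rewrites the extremes as $\tilde{F}_n^{\upsilon^{\ast},\Delta_n}(t-q_{\pm})$. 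Adding and subtracting $F_0(t-q_{\pm})$ and using \ref{assSMC1} (which, by Breslow--Crowley applied to the true residuals $\upsilon_i^{\ast}$, gives the KM uniform rate) together with the mean-value bound $|F_0(t-q_{\pm})-F_0(t)|\le\|f_0\|_\infty|q_{\pm}|$ (bounded density from the GLG model, cf.\ \ref{assSMB3}), I obtain
\begin{equation*}
n^{1/2}\sup_{t}\bigl|\tilde{F}_n^{\omega^{\ast},\Delta_n}(t)-F_0(t)\bigr|\;=\;O_p(1).
\end{equation*}
Then Lemma~\ref{LemSM2} upgrades this to a uniform bound on the inverse over any closed subinterval of $(0,1)$,
\begin{equation*}
n^{1/2}\sup_{\alpha\le u\le 1-\alpha}\bigl|(\tilde{F}_n^{\omega^{\ast},\Delta_n})^{-1}(u)-F_0^{-1}(u)\bigr|\;=\;O_p(1),
\end{equation*}
which is precisely the ingredient (\ref{equSMsqFNtilde}) that drove the proof of Theorem~\ref{TeoSMConsistencyTQtau}.

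With this inverse rate in hand, the trimmed quantile $\tau$--argument used in Theorem~\ref{TeoSMConsistencyTQtau} (namely the adaptation of Theorems~1 and~2 of \citet{agostinelli2014b} in which $F_n$ is replaced by the KM estimator, and the quantile range is $[0.10,1-\alpha]$) applies verbatim with $\tilde{F}_n^{\omega^{\ast},\Delta_n}$ in place of $\tilde{F}_n$, and yields $\tilde{\mathbf{\theta}}\to\mathbf{\theta}_0$ a.s.\ together with $n^{1/2}(\tilde{\mathbf{\theta}}-\mathbf{\theta}_0)=O_p(1)$. Concatenating with the rate for $\bar{\mathbf{\beta}}_n$ from \ref{assSME1} gives the claim $n^{1/2}(\tilde{\mathbf{\gamma}}-\mathbf{\gamma}_0)=O_p(1)$. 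The main obstacle is the sandwich step: the monotonicity and translation assumptions \ref{assSMC2}--\ref{assSMC3} are exactly what is needed to turn a perturbation of the data into a translation of the argument, and without them one could not transfer the Breslow--Crowley rate for the ideal residuals $\upsilon_i^{\ast}$ to the actual residuals $\omega_i^{\ast}$ uniformly in $t$.
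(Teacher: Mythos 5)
Your proposal is correct and follows essentially the same route as the paper: reduce to the uniform $n^{1/2}$ rate for the Kaplan--Meier estimator of the perturbed residuals $\omega_i^{\ast}$, obtain it from the Breslow--Crowley rate via the monotonicity/translation sandwich (\ref{assSMC2})--(\ref{assSMC3}) and a mean-value bound on $F_0$, and then pass to quantiles via Lemma~\ref{LemSM2}. The only difference is presentational: you inline the sandwich argument with the random shifts $q_{\pm}$, whereas the paper packages it as Lemma~\ref{LemSM1} and works on the high-probability event $\{n^{1/2}\sup_i|q_i|\le K_2\}$ with a deterministic shift, which is the slightly cleaner way to justify applying the uniform bounds with a random translation.
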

In order to prove Theorem \ref{TeoSMConsistencyTQtaureg} we need the following Lemma \ref{LemSM1}.
\begin{lemma}
\label{LemSM1} Assume \ref{assSMC1}-\ref{assSMC4}. Then, $U_{n}=n^{1/2} \sup_{x}|F_{n}^{X_{n}+Q_{n},\Delta_{n}}(x)-F_{0}(x)|$ is bounded in probability.
\end{lemma}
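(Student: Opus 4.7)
The plan is to reduce the perturbed estimator to the unperturbed one by a sandwich argument based on monotonicity in the sample (\ref{assSMC2}) and translation equivariance (\ref{assSMC3}), and then to combine the uniform rate in (\ref{assSMC1}) with a Lipschitz bound on $F_{0}$ to close the estimate.

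First I would set $M_{n}=\max_{1\le i\le n}|q_{i}|$, so that by \ref{assSMC4} we have $n^{1/2}M_{n}=O_{p}(1)$. Since each $q_{i}$ lies in $[-M_{n},M_{n}]$, the sample $X_{n}+Q_{n}$ is coordinatewise sandwiched between $X_{n}-M_{n}\mathbf{1}$ and $X_{n}+M_{n}\mathbf{1}$, where $\mathbf{1}$ denotes the all-ones vector. Applying \ref{assSMC2} to both sides (the cdf estimator decreases when the sample moves to the right) and then \ref{assSMC3} to absorb the constant shift, I obtain the key bracket
\[
F_{n}^{X_{n},\Delta_{n}}(x-M_{n})\;\le\;F_{n}^{X_{n}+Q_{n},\Delta_{n}}(x)\;\le\;F_{n}^{X_{n},\Delta_{n}}(x+M_{n}),
\]
valid for every $x$.

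Next I would decompose, for the upper bound,
\[
F_{n}^{X_{n}+Q_{n},\Delta_{n}}(x)-F_{0}(x)\;\le\;\bigl[F_{n}^{X_{n},\Delta_{n}}(x+M_{n})-F_{0}(x+M_{n})\bigr]+\bigl[F_{0}(x+M_{n})-F_{0}(x)\bigr].
\]
The first bracket is $O_{p}(n^{-1/2})$ uniformly in $x$ by \ref{assSMC1}, which remains valid when evaluated at the random shifted argument $x+M_{n}$ because the bound there is taken as a supremum over all reals. For the second bracket, in the intended application $F_{0}=F_{\mathbf{\theta}_{0}}$ has a bounded density (by \ref{assSMB3}), so the mean value theorem gives $|F_{0}(x+M_{n})-F_{0}(x)|\le\|f_{0}\|_{\infty}M_{n}$, which is $O_{p}(n^{-1/2})$ uniformly by \ref{assSMC4}. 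A symmetric argument using the lower bracket $F_{n}^{X_{n},\Delta_{n}}(x-M_{n})$ produces the reverse inequality, and together they yield $n^{1/2}\sup_{x}|F_{n}^{X_{n}+Q_{n},\Delta_{n}}(x)-F_{0}(x)|=O_{p}(1)$.

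The delicate point is that the perturbations $q_{i}$ vary with $i$, so \ref{assSMC3} cannot be invoked directly; collapsing the random vector $Q_{n}$ to the worst-case constant shift $\pm M_{n}$ via the monotonicity assumption \ref{assSMC2} is the crux of the argument. A secondary subtlety is the implicit use of Lipschitzness of $F_{0}$, which is not listed among \ref{assSMC1}--\ref{assSMC4} but is inherited in context from the GLG smoothness hypothesis \ref{assSMB3}.
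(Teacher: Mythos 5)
Your proof is correct and follows essentially the same route as the paper's: both reduce the heterogeneous perturbation $Q_{n}$ to a worst-case constant shift via \ref{assSMC2} and \ref{assSMC3}, and then combine \ref{assSMC1} with the Lipschitz bound $|F_{0}(x+t)-F_{0}(x)|\le \sup_{x}f_{0}(x)\,|t|$. The only cosmetic difference is that you carry the random bound $M_{n}$ through the estimate, whereas the paper first restricts to the event $\{n^{1/2}\sup_{i}|q_{i}|\le K_{2}\}$ and works with the deterministic shift $K_{2}/n^{1/2}$; the paper likewise uses the bounded density of $F_{0}$ (its $K_{3}=\sup_{x}f_{0}(x)$) without listing it among \ref{assSMC1}--\ref{assSMC4}, so your flagging of that implicit hypothesis is apt.
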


\begin{proof}[Proof]
We have to prove that given $\varepsilon>0$ there exists $K_{0}$ such that
\begin{equation}
P(\sup_{n}U_{n}>K_{0})<\varepsilon. \label{equSMtoprove}
\end{equation}
Let $V_{n}=n^{1/2}\sup_{x}|F_{n}^{X_{n},\Delta_{n}}(x)-F_{0}(x)|$. By \ref{assSMC1}, there exists $K_{1}$ such that
\begin{equation}
P(\sup_{n}V_{n}>K_{1})<\varepsilon/3 . \label{equSMass1}
\end{equation}
Let $W_{n}=n^{1/2}\sup_{1\leq i\leq n}|q_{i}|$. By \ref{assSMC4}, there exists $K_{2}$ such that
\begin{equation}
P(\sup_{n}W_{n}>K_{2})<\varepsilon/3 . \label{equSMass5}
\end{equation}
Let $K_{3}=\sup_{x}f_{0}(x)$ and put $K_{0}=K_{1}+K_{2}K_{3}$. Then, by \ref{assSMC2} and \ref{assSMC3} we have
\begin{align*}
&  P(U_{n}>K_{0})=P\left(  n^{1/2}\sup_{x}\left\vert F_{n}^{X_{n}+Q_{n},\Delta_{n}}(x)-F_{0}(x)\right\vert >K_{0}\right) \\
&  \leq P\left(  n^{1/2}\sup_{x}\left\vert F_{n}^{X_{n}+\frac{K_{2}}{n^{1/2}},\Delta_{n}}(x)-F_{0}(x)\right\vert >K_{0}\right) \\
&  +P\left(  n^{1/2}\sup_{x}\left\vert F_{n}^{X_{n}-\frac{K_{2}}{n^{1/2}},\Delta_{n}}(x)-F_{0}(x)\right\vert >K_{0}\right)  +P(W_{n}>K_{2}).
\end{align*}
By \ref{assSMC3} we have
\begin{equation}
n^{1/2}\sup_{x}\left\vert F_{0}\left(  x-\frac{K_{2}}{n^{1/2}}\right)-F_{0}(x)\right\vert \leq K_{2}K_{3}. \label{equSMLasK}
\end{equation}
Then, using (\ref{equSMLasK}), we get
\begin{align}
&  P\left(  n^{1/2}\sup_{x}\left\vert F_{n}^{X_{n}+\frac{K_{2}}{n^{1/2}},\Delta_{n}}(x)-F_{0}(x)\right\vert >K_{0}\right) \nonumber\\
&  =P\left(  n^{1/2}\sup_{x}\left\vert F_{n}^{X_{n},\Delta_{n}}\left(x-\frac{K_{2}}{n^{1/2}}\right)  -F_{0}(x)\right\vert >K_{1}+K_{2}K_{3}\right) \nonumber\\
&  \leq P\left(  n^{1/2}\sup_{x}\left\vert F_{n}^{X_{n},\Delta_{n}}\left(x-\frac{K_{2}}{n^{1/2}}\right)  -F_{0}\left(  x-\frac{K_{2}}{n^{1/2}}\right) \right\vert \right. \nonumber\\
&  +\left.  n^{1/2}\sup_{x}\left\vert F_{0}\left(  x-\frac{K_{2}}{n^{1/2}}\right)  -F_{0}(x)\right\vert >K_{1}+K_{2}K_{3}\right) \nonumber\\
&  \leq P\left(  n^{1/2}\sup_{x}\left\vert F_{n}^{X_{n},\Delta_{n}}\left(x-\frac{K_{2}}{n^{1/2}}\right)  -F_{0}\left(  x-\frac{K_{2}}{n^{1/2}}\right) \right\vert >K_{1}\right) \nonumber\\
&  \leq P\left(  V_{n}>K_{1}\right)  \leq\frac{\varepsilon}{3}. \label{equSMfin1}
\end{align}
Similarly we can prove that
\begin{equation}
P\left(  n^{1/2}\sup_{x}\left\vert F_{n}^{X_{n}-\frac{K_{2}}{n^{1/2}},\Delta_{n}}(x)-F_{0}(x)\right\vert >K_{0}\right)  \leq\frac{\varepsilon}{3}. \label{equSMfin2}
\end{equation}
Then, from (\ref{equSMass5}), (\ref{equSMfin1}) and (\ref{equSMfin2}) we get (\ref{equSMtoprove}).
\end{proof}

\begin{proof}[Proof of Theorem \ref{TeoSMConsistencyTQtaureg}]
Let $F_{n}$ be the Kaplan-Meier distribution applied to $V_{n}^{\ast}=(\upsilon_{1}^{\ast},\ldots,\upsilon_{n}^{\ast})$ and $\Delta_{n}$, let $\tilde{F}_{n}$ be the Kaplan-Meier distribution applied to $(\Omega_{n}^{\ast},\Delta_{n})$, where $\Omega_{n}^{\ast}=(\omega_{1}^{\ast},\ldots,\omega_{n}^{\ast})$ and $\Delta_{n}=(\delta_{1},\ldots,\delta_{n})$ and let $F_{0}$ be the distribution of the $\upsilon_{i}$s. The proof is similar to that of Theorem \ref{TeoSMConsistencyTQtau} and hence it is sufficient to prove that
\begin{equation*}
n^{1/2}\sup_{t}|\tilde{F}_{n}^{-1}(t)-F_{0}^{-1}(t)|=O_{P}(1).
\end{equation*}
\citet{breslow1974} showed that
\begin{equation*}
n^{1/2}\sup_{t}\left\vert F_{n}(t)-F_{0}(t)\right\vert =O_{p}(1).
\end{equation*}
Then, since the Kaplan-Meier distribution satisfies \ref{assSMC2} and \ref{assSMC3}, Lemma \ref{LemSM1} implies
\begin{equation*}
n^{1/2}\sup_{t}|\tilde{F}_{n}(t)-F_{0}(t)|=O_{P}(1).
\end{equation*}
On the other hand $\Omega_{n}^{\ast}=V_{n}^{\ast}+Q_{n}$ where $Q_{n}=(\mathbf{\beta}_{0}-\mathbf{\bar{\beta}}_{n})^{\text{T}}\mathbf{x}_{i}$ satisfies \ref{assSMC4} and by Lemma \ref{LemSM2} the result holds.
\end{proof}

\subsection{Asymptotic distribution of the 1TML estimator}
\label{SecSMAsymptotic1TML} 
In this Section we study the asymptotic behavior of the 1TML estimator. Note that the asymptotic behavior of the 2TML is obtained in a very similar way and hence it is not reported. We suppose that $\tilde{\mathbf{\gamma}}$ is an is an initial $n^{1/2}$ consistent estimator, e.g. the TQ$\tau$ estimator, and and that $\mathbf{\gamma}_{0}$ is the true parameter value. We need the following assumptions:
\begin{enumerate}[label=\textbf{F\arabic*}]
\item \label{assSMF1} $n^{1/2}(\tilde{\mathbf{\gamma}} - \mathbf{\gamma}_{0})$ is bounded in probability.
\item \label{assSMF2} The matrix $\mathbf{G}(\mathbf{\gamma}) = E\left(\nabla_{\mathbf{\gamma}} \mathbf{v}(\mathbf{x}, y,\delta,\mathbf{\gamma})\right)  $ (where $\mathbf{v}(\mathbf{x}, y,\delta,\mathbf{\gamma})$ is defined in equation (\ref{equML3})) is non singular.
\item \label{assSMF3} The vector $\mathbf{x}$ has second order moments.
\end{enumerate}

\begin{theorem}
\label{TeoSMAsymptotic1TMLreg} Assume (i) the weight function $w$ satisfies \ref{assA2}, (ii) the function $w$ is continuously differentiable (iii) \ref{assSMF1}-\ref{assSMF3} holds, then the 1TML estimator $\hat{\mathbf{\gamma}}_{1}$ is such that
\begin{equation*}
\sqrt{n}(\hat{\mathbf{\gamma}}_{1} - \mathbf{\gamma}_{0}) \overset{\mathcal{L}}{\rightarrow} N(0, \mathbf{G}(\mathbf{\gamma}_{0})^{-1} \mathbf{M}(\mathbf{\gamma}_{0})\mathbf{G}(\mathbf{\gamma}_{0})^{-\top}).
\end{equation*}
\end{theorem}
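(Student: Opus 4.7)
The plan is to treat $\hat{\mathbf{\gamma}}_{1}$ as a standard one-step Newton--Raphson estimator and to exploit the fact that the random cutoff $\vartheta^{\ast}$ vanishes asymptotically, so that the weighted estimating equations reduce to the ML ones and the cancellation characteristic of one-step estimators yields the ML asymptotic variance.

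First I would establish that $\vartheta^{\ast}=1/\varphi^{\ast}\to 0$ in probability. The remark following the definition of $\varphi^{\ast}$ in Section \ref{SecTML} (attributed to \citet{gervini2002}) is that, under the model, $\varphi^{\ast}\to\infty$ a.s., because for thresholds tending to infinity the right tail of $M_{n}^{(\varphi)}$ agrees with that of $M_{\tilde{\lambda}}$. Combined with \ref{assA2} on $\omega$, this gives $w(\mathbf{x},y,\tilde{\mathbf{\gamma}},\vartheta^{\ast})\to 1$ in probability for each fixed observation; by dominated convergence, $\tilde{\mathbf{h}}\to\mathbf{0}$ and $\mathbf{J}(\tilde{\mathbf{\gamma}},\vartheta^{\ast})\to\mathbf{G}(\mathbf{\gamma}_{0})$ in probability, the last step using \ref{assSMF2} together with the standard score identity $E_{\mathbf{\gamma}_{0}}[\nabla_{\mathbf{\gamma}}\mathbf{d}]=\mathbf{G}(\mathbf{\gamma}_{0})$ under the model.

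Second, rewrite the Newton--Raphson update as
\[
\sqrt{n}(\hat{\mathbf{\gamma}}_{1}-\mathbf{\gamma}_{0})=\sqrt{n}(\tilde{\mathbf{\gamma}}-\mathbf{\gamma}_{0})-\mathbf{J}(\tilde{\mathbf{\gamma}},\vartheta^{\ast})^{-1}\sqrt{n}\bigl(\mathbf{u}(\tilde{\mathbf{\gamma}},\tilde{\mathbf{\gamma}},\vartheta^{\ast})-\tilde{\mathbf{h}}\bigr).
\]
I would then Taylor-expand $\mathbf{u}(\tilde{\mathbf{\gamma}},\tilde{\mathbf{\gamma}},\vartheta^{\ast})-\tilde{\mathbf{h}}$ around $\mathbf{\gamma}_{0}$, treating the second argument of $\mathbf{u}$ and the argument of $\tilde{\mathbf{h}}$ as coupled to the first one. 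The key observation is that the partial derivatives of $\tilde{\mathbf{h}}$ exactly absorb the contributions obtained by differentiating the weights in $\mathbf{u}$ with respect to $\tilde{\mathbf{\gamma}}$, so the total derivative collapses to $\mathbf{J}(\mathbf{\gamma}_{0},\vartheta^{\ast})$ at $\mathbf{\gamma}_{0}$. This yields
\[
\sqrt{n}\bigl(\mathbf{u}(\tilde{\mathbf{\gamma}},\tilde{\mathbf{\gamma}},\vartheta^{\ast})-\tilde{\mathbf{h}}\bigr)=\sqrt{n}\,\mathbf{U}_{n}(\mathbf{\gamma}_{0},\vartheta^{\ast})+\mathbf{J}(\mathbf{\gamma}_{0},\vartheta^{\ast})\sqrt{n}(\tilde{\mathbf{\gamma}}-\mathbf{\gamma}_{0})+o_{p}(1),
\]
where $\mathbf{U}_{n}(\mathbf{\gamma}_{0},\vartheta^{\ast})$ is the centered weighted empirical average $n^{-1}\sum_{i}w_{i}\mathbf{v}(\mathbf{x}_{i},y_{i}^{\ast},\delta_{i},\mathbf{\gamma}_{0})$ (with the appropriate population centering from the semiempirical $E_{n,\mathbf{\gamma}_{0}}$). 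Substituting back and using the two Jacobian limits established above, the $\sqrt{n}(\tilde{\mathbf{\gamma}}-\mathbf{\gamma}_{0})$ terms cancel, giving
\[
\sqrt{n}(\hat{\mathbf{\gamma}}_{1}-\mathbf{\gamma}_{0})=-\mathbf{G}(\mathbf{\gamma}_{0})^{-1}\sqrt{n}\,\mathbf{U}_{n}(\mathbf{\gamma}_{0},\vartheta^{\ast})+o_{p}(1).
\]

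Third, I would apply a central limit theorem to $\sqrt{n}\,\mathbf{U}_{n}(\mathbf{\gamma}_{0},\vartheta^{\ast})$. Since $\vartheta^{\ast}\to 0$ and $|w|\le 1$, dominated convergence lets one replace $w$ by $1$ in the limiting variance; assumption \ref{assSMF3} guarantees the finite second moments of the coordinates $d_{3+k}$ that involve $\mathbf{x}$. The limiting covariance is therefore $E[\mathbf{v}\mathbf{v}^{\top}]=\mathbf{M}(\mathbf{\gamma}_{0})$, and Slutsky gives the stated normal limit. The main obstacle, I expect, is precisely the handling of the random cutoff $\vartheta^{\ast}$: it depends on both $\tilde{\mathbf{\gamma}}$ and the empirical $M_{n}$, and it enters the estimating function nonlinearly through $w$. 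Making the Taylor expansion and the CLT uniform in $\vartheta^{\ast}$ near $0$ and in $\mathbf{\gamma}$ near $\mathbf{\gamma}_{0}$ calls for a stochastic equicontinuity argument over the class of weighted score functions (a bracketing/Donsker-type argument), combined with a careful treatment of the semiparametric expectation $E_{n,\mathbf{\gamma}}$ under right censoring along the lines of \citet{locatelli2010}. Once these uniform controls are in place, the cancellation and CLT steps are standard.
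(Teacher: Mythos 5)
Your proposal is correct in outline and follows essentially the same route as the paper's proof: the same one-step Newton--Raphson decomposition with cancellation of the $\sqrt{n}(\tilde{\mathbf{\gamma}}-\mathbf{\gamma}_{0})$ term via consistency of the Jacobians, the same centering argument based on the score identity $E_{\mathbf{\gamma}}[\mathbf{v}(\mathbf{z},\mathbf{\gamma})]=\mathbf{0}$ together with $w(\cdot,\cdot,0)\equiv 1$ (the paper's Lemma \ref{LemSMAsymptotic1TML1}), and the same appeal to stochastic equicontinuity of the weighted score process to handle the random arguments $\tilde{\mathbf{\gamma}}$ and $\vartheta^{\ast}$ (the paper's Lemma \ref{LemSMAsymptotic1TML2}, via the Jain--Marcus functional CLT) before applying the CLT and Slutsky. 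Your identification of the limiting covariance of the centered score sum as $\mathbf{M}(\mathbf{\gamma}_{0})$ is in fact slightly more careful than the paper's notation at that step.
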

To prove the Theorem we need some additional notations and the lemma stated below. Let
\begin{equation*}
\mathbf{\eta}(\mathbf{z}_{i},\mathbf{\gamma},\tilde{\mathbf{\gamma}},\vartheta^{\ast})=(\eta_{1}(\mathbf{z}_{i},\mathbf{\gamma},\tilde{\mathbf{\gamma}},\vartheta^{\ast}),\cdots,\eta_{p+3}(\mathbf{z}_{i},\mathbf{\gamma},\tilde{\mathbf{\gamma}},\vartheta^{\ast}))^{\top}
\end{equation*}
so that, for given $\mathbf{\gamma}_{1}$ and $\mathbf{\gamma}_{2}$,
\begin{equation*}
\eta_{k}(\mathbf{z}_{i},\mathbf{\gamma}_{1},\mathbf{\gamma}_{2},\vartheta^{\ast})=E_{\mathbf{\gamma}_{1}}\left[  w\left(  \mathbf{z},\mathbf{\gamma}_{2},\vartheta^{\ast}\right)  v_{k}\left(  \mathbf{z},\mathbf{\gamma}_{1}\right)  |\mathbf{z}_{i}\right]  \qquad k=1,\cdots,(p+3),
\end{equation*}
where $v_{1},\cdots,v_{p+3}$ are the score functions evaluated at $\mathbf{\gamma}_{1}$. Then, the functions $u_{k}(\mathbf{\gamma},\tilde{\mathbf{\gamma}},\vartheta^{\ast})$ in (\ref{equTMLreg}) can be written as follows:
\begin{equation*}
u_{k}(\mathbf{\gamma},\tilde{\mathbf{\gamma}},\vartheta^{\ast})=\frac{1}{n}\sum_{i=1}^{n}\eta_{k}(\mathbf{z}_{i},\mathbf{\gamma},\tilde{\mathbf{\gamma}},\vartheta^{\ast}),\qquad k=1,\cdots,p+3.
\end{equation*}
In addition, for given $\mathbf{\gamma}_{1}$, $\mathbf{\gamma}_{2}$, $\mathbf{\gamma}_{3}$, let
\begin{equation*}
\mathbf{h}(\mathbf{\gamma}_{1},\mathbf{\gamma}_{2},\mathbf{\gamma}_{3},\vartheta^{\ast})=(h_{1}(\mathbf{\gamma}_{1},\mathbf{\gamma}_{2},\mathbf{\gamma}_{3},\vartheta^{\ast}),\cdots,h_{p+3}(\mathbf{\gamma}_{1},\mathbf{\gamma}_{2},\mathbf{\gamma}_{3},\vartheta^{\ast}))^{\top},
\end{equation*}
where
\begin{equation*}
h_{k}(\mathbf{\gamma}_{1},\mathbf{\gamma}_{2},\mathbf{\gamma}_{3},\vartheta^{\ast})=E_{\mathbf{\gamma}_{1}}\left[  \eta_{k}(\mathbf{z},\mathbf{\gamma}_{2},\mathbf{\gamma}_{3},\vartheta^{\ast})\right] , \qquad k=1,\cdots,p+3.
\end{equation*}
We also have:
\begin{equation*}
h_{k}(\mathbf{\gamma},\tilde{\mathbf{\gamma}},\mathbf{\gamma},\vartheta^{\ast})=E_{\mathbf{\gamma}}\left[  w\left(  \mathbf{z},\tilde{\mathbf{\gamma}},\vartheta^{\ast}\right)  v_{k}\left(  \mathbf{z},\mathbf{\gamma}\right) \right]  ,\qquad k=1,\cdots,p+3.
\end{equation*}
Therefore, equations (\ref{equTMLreg}) can be written as
\begin{equation*}
\frac{1}{n}\sum_{i=1}^{n}\mathbf{\eta}(\mathbf{z}_{i},\mathbf{\gamma},\tilde{\mathbf{\gamma}},\vartheta^{\ast})=\mathbf{h}(\tilde{\mathbf{\gamma}},\tilde{\mathbf{\gamma}},\tilde{\mathbf{\gamma}},\vartheta^{\ast}).
\end{equation*}
Finally, notice that
\begin{equation*}
\mathbf{J}(\tilde{\mathbf{\gamma}},\vartheta^{\ast})=\frac{1}{n}\sum_{i=1}^{n}\nabla_{\mathbf{\gamma}}\mathbf{\eta}(\mathbf{z}_{i},\mathbf{\gamma},\tilde{\mathbf{\gamma}},\vartheta^{\ast})|_{\mathbf{\gamma}=\tilde{\mathbf{\gamma}}}.
\end{equation*}

\begin{lemma}
\label{LemSMAsymptotic1TML1} Under the assumptions of Theorem \ref{TeoSMAsymptotic1TMLreg}, we have
\begin{equation*}
\sqrt{n} | \mathbf{h}(\tilde{\mathbf{\gamma}}, \tilde{\mathbf{\gamma}},\tilde{\mathbf{\gamma}}, \vartheta^{\ast}) - \mathbf{h}(\mathbf{\gamma}_{0},\tilde{\mathbf{\gamma}}, \mathbf{\gamma}_{0}, \vartheta^{\ast}) | \overset{P}{\rightarrow} \mathbf{0} \qquad n \rightarrow\infty.
\end{equation*}
\end{lemma}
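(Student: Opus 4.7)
The plan is to establish the claim by a first-order Taylor expansion of the difference around $\tilde{\mathbf{\gamma}} = \mathbf{\gamma}_0$, exploiting the fact that the two $\mathbf{h}$ expressions coincide at that point. I would begin by applying the tower property to the first term: since the outer parameter $\mathbf{\gamma}_1$ and the inner parameter $\mathbf{\gamma}_2$ of $\mathbf{h}(\tilde{\mathbf{\gamma}}, \tilde{\mathbf{\gamma}}, \tilde{\mathbf{\gamma}}, \vartheta^{\ast})$ agree, the iterated expectation collapses to a single expectation against the full-data law at $\tilde{\mathbf{\gamma}}$. When $\tilde{\mathbf{\gamma}} = \mathbf{\gamma}_0$, the second $\mathbf{h}$ also reduces to $E_{\mathbf{\gamma}_0}[w(\mathbf{z},\mathbf{\gamma}_0,\vartheta^{\ast})\mathbf{d}(\mathbf{z},\mathbf{\gamma}_0)]$ by the tower property applied to its inner conditional expectation. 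Hence $\Delta(\tilde{\mathbf{\gamma}}) := \mathbf{h}(\tilde{\mathbf{\gamma}}, \tilde{\mathbf{\gamma}}, \tilde{\mathbf{\gamma}}, \vartheta^{\ast}) - \mathbf{h}(\mathbf{\gamma}_0, \tilde{\mathbf{\gamma}}, \mathbf{\gamma}_0, \vartheta^{\ast})$ vanishes identically at $\tilde{\mathbf{\gamma}} = \mathbf{\gamma}_0$.

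By hypotheses (i)--(iii) of Theorem \ref{TeoSMAsymptotic1TMLreg}, the weight function $w$ is continuously differentiable and the relevant integrals admit integrable dominating envelopes (from the smoothness of the GLG density), so differentiation under the integral sign is legitimate. A first-order Taylor expansion then gives
\begin{equation*}
\Delta(\tilde{\mathbf{\gamma}}) = \nabla_{\tilde{\mathbf{\gamma}}}\Delta(\mathbf{\gamma}_0)^{\top}(\tilde{\mathbf{\gamma}} - \mathbf{\gamma}_0) + O\bigl(|\tilde{\mathbf{\gamma}} - \mathbf{\gamma}_0|^{2}\bigr).
\end{equation*}

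The critical algebraic step -- and the main obstacle -- is to verify that $\nabla_{\tilde{\mathbf{\gamma}}}\Delta(\mathbf{\gamma}_0) = \mathbf{0}$. Differentiating the first term of $\Delta$ brings in the full-data score identity $\nabla_{\mathbf{\gamma}}\log p_{\mathbf{\gamma}}(y\mid\mathbf{x}) = -\mathbf{d}(\mathbf{x},y,\mathbf{\gamma})$; differentiating the second term brings in the conditional density $p_{\tilde{\mathbf{\gamma}}}(y\mid\mathbf{z})$, whose log-gradient decomposes as $(1-\delta)[\mathbf{s}(\mathbf{x},y^{\ast},\mathbf{\gamma}) - \mathbf{d}(\mathbf{x},y,\mathbf{\gamma})]$ via the truncation $p_{\mathbf{\gamma}}(y\mid\mathbf{x})/S_{\mathbf{\gamma}}(y^{\ast}\mid\mathbf{x})$ on $\{y>y^{\ast}\}$. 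Collecting all contributions and applying the identity $E_{\mathbf{\gamma}}[\mathbf{d}(\mathbf{x},y,\mathbf{\gamma})\mid y>y^{\ast},\mathbf{x}] = \mathbf{s}(\mathbf{x},y^{\ast},\mathbf{\gamma})$ cited in Section \ref{SecML}, the matching linear terms in the two expansions cancel, so $\Delta(\tilde{\mathbf{\gamma}}) = O(|\tilde{\mathbf{\gamma}} - \mathbf{\gamma}_0|^{2})$.

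Combining this quadratic bound with assumption \ref{assSMF1}, $|\tilde{\mathbf{\gamma}} - \mathbf{\gamma}_0| = O_{P}(n^{-1/2})$, yields
\begin{equation*}
\sqrt{n}\,|\Delta(\tilde{\mathbf{\gamma}})| = O_{P}\bigl(\sqrt{n}\cdot |\tilde{\mathbf{\gamma}} - \mathbf{\gamma}_0|^{2}\bigr) = O_{P}(n^{-1/2}) = o_{P}(1),
\end{equation*}
which is the desired conclusion. The data-dependence of $\vartheta^{\ast}$ does not disturb the argument, since $\vartheta^{\ast}$ enters both terms of $\Delta$ identically and only through the continuous bounded weight $\omega$ of assumption \ref{assA2}; uniformity of the Taylor remainder over $\vartheta^{\ast}$ then follows by dominated convergence.
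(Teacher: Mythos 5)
Your overall strategy (Taylor-expand $\Delta(\tilde{\mathbf{\gamma}})=\mathbf{h}(\tilde{\mathbf{\gamma}},\tilde{\mathbf{\gamma}},\tilde{\mathbf{\gamma}},\vartheta^{\ast})-\mathbf{h}(\mathbf{\gamma}_{0},\tilde{\mathbf{\gamma}},\mathbf{\gamma}_{0},\vartheta^{\ast})$ around $\mathbf{\gamma}_{0}$, note $\Delta(\mathbf{\gamma}_{0})=\mathbf{0}$, and pair the linear term with \ref{assSMF1}) starts out parallel to the paper's, but the pivotal step is wrong. You claim that $\nabla_{\tilde{\mathbf{\gamma}}}\Delta(\mathbf{\gamma}_{0})=\mathbf{0}$ \emph{exactly} for a fixed truncation level $\vartheta^{\ast}>0$, via a cancellation driven by the score identities. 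Writing $\Phi(\mathbf{a},\mathbf{b})=E_{\mathbf{a}}[w(\mathbf{z},\mathbf{b},\vartheta^{\ast})\mathbf{v}(\mathbf{z},\mathbf{a})]$, one has $\Delta(\tilde{\mathbf{\gamma}})=\Phi(\tilde{\mathbf{\gamma}},\tilde{\mathbf{\gamma}})-\Phi(\mathbf{\gamma}_{0},\tilde{\mathbf{\gamma}})$, so the $\nabla_{\mathbf{b}}$ contributions cancel at $\tilde{\mathbf{\gamma}}=\mathbf{\gamma}_{0}$ and what survives is $\nabla_{\mathbf{a}}\Phi(\mathbf{a},\mathbf{\gamma}_{0})|_{\mathbf{a}=\mathbf{\gamma}_{0}}=E_{\mathbf{\gamma}_{0}}[w\,\nabla_{\mathbf{\gamma}}\mathbf{v}]-E_{\mathbf{\gamma}_{0}}[w\,\mathbf{v}\mathbf{v}^{\top}]$. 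This is a \emph{weighted} Bartlett identity, and it fails for $w\not\equiv 1$: in the uncensored scalar case it equals $-\int w(z,\mathbf{\gamma}_{0},\vartheta^{\ast})\,\partial_{\gamma}^{2}f_{\gamma}(z)|_{\gamma=\gamma_{0}}\,dz$, which vanishes only when $w$ can be pulled out of the integral so that $\int\partial^{2}f=\partial^{2}\int f=0$. (Heuristically: if your cancellation held, the uncentered weighted equations would have no first-order bias and the correction term $\tilde{\mathbf{h}}$ in (\ref{equTMLreg}) would be superfluous, whereas it is introduced precisely to remove that bias.) Consequently the quadratic bound $\Delta=O(|\tilde{\mathbf{\gamma}}-\mathbf{\gamma}_{0}|^{2})$, and a fortiori your rate $O_{P}(n^{-1/2})$ for $\sqrt{n}|\Delta|$, are not available.

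The ingredient you are missing is exactly the one you dismiss in your last sentence: the asymptotic behavior of $\vartheta^{\ast}$. The paper splits $\Delta$ into $\mathbf{h}(\tilde{\mathbf{\gamma}},\tilde{\mathbf{\gamma}},\tilde{\mathbf{\gamma}},\vartheta^{\ast})-\mathbf{h}(\mathbf{\gamma}_{0},\mathbf{\gamma}_{0},\mathbf{\gamma}_{0},\vartheta^{\ast})$ plus $\mathbf{h}(\mathbf{\gamma}_{0},\mathbf{\gamma}_{0},\mathbf{\gamma}_{0},\vartheta^{\ast})-\mathbf{h}(\mathbf{\gamma}_{0},\tilde{\mathbf{\gamma}},\mathbf{\gamma}_{0},\vartheta^{\ast})$, applies the mean value theorem to each piece, and shows that each gradient factor is $o_{P}(1)$ because under the model $\varphi^{\ast}\rightarrow\infty$ a.s.\ (hence $\vartheta^{\ast}\rightarrow 0$), so that $w(\cdot,\cdot,\vartheta^{\ast})\rightarrow 1$; only in that limit do the identities $\nabla_{\mathbf{\gamma}}E_{\mathbf{\gamma}}[\mathbf{v}(\mathbf{z},\mathbf{\gamma})]=\mathbf{0}$ and $\nabla_{\mathbf{\gamma}}w(\mathbf{z},\mathbf{\gamma},0)=\mathbf{0}$ kill the linear terms. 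Combined with $\sqrt{n}(\tilde{\mathbf{\gamma}}-\mathbf{\gamma}_{0})=O_{P}(1)$ this yields $o_{P}(1)\cdot O_{P}(1)=o_{P}(1)$, which is the weaker but sufficient conclusion. To repair your argument you would need to show that the gradient you compute tends to zero in probability, which is precisely this asymptotic-$\vartheta^{\ast}$ argument.
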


\begin{proof}[Proof]
We prove the lemma by showing that
\begin{equation} \label{equSMTMLh0}
\sqrt{n} | \mathbf{h}(\tilde{\mathbf{\gamma}},\tilde{\mathbf{\gamma}}, \tilde{\mathbf{\gamma}}, \vartheta^{\ast}) - \mathbf{h}(\mathbf{\gamma}_{0}, \mathbf{\gamma}_{0}, \mathbf{\gamma}_{0}, \vartheta^{\ast}) | \overset{P}{\rightarrow} \mathbf{0} \qquad n \rightarrow\infty,
\end{equation}
and
\begin{equation} \label{equSMTMLh1}
\sqrt{n} | \mathbf{h}(\mathbf{\gamma}_{0}, \mathbf{\gamma}_{0}, \mathbf{\gamma}_{0}, \vartheta^{\ast}) - \mathbf{h}(\mathbf{\gamma}_{0}, \tilde{\mathbf{\gamma}}, \mathbf{\gamma}_{0}, \vartheta^{\ast}) | \overset{P}{\rightarrow} \mathbf{0} \qquad n \rightarrow\infty.
\end{equation}
Let us consider (\ref{equSMTMLh0}). We can write
\begin{equation*}
\sqrt{n} (\mathbf{h}(\tilde{\mathbf{\gamma}}, \tilde{\mathbf{\gamma}}, \tilde{\mathbf{\gamma}}, \vartheta^{\ast}) - \mathbf{h}(\mathbf{\gamma}_{0}, \mathbf{\gamma}_{0}, \mathbf{\gamma}_{0}, \vartheta^{\ast})) = \nabla_{\mathbf{\gamma}} \mathbf{h}(\mathbf{\gamma}, \mathbf{\gamma}, \mathbf{\gamma}, \vartheta^{\ast}) |_{\mathbf{\gamma}=\mathbf{\gamma}^{\ast}} \sqrt{n}(\tilde{\mathbf{\gamma}} - \mathbf{\gamma}),
\end{equation*}
where $\mathbf{\gamma}^{\ast}$ is between $\tilde{\mathbf{\gamma}}$ and $\mathbf{\gamma}_{0}$. According to \ref{assSMF1}, the second factor is bounded in probability. The first factor is asymptotically zero since
\begin{align*}
\lim_{n \rightarrow\infty} \nabla_{\mathbf{\gamma}} \mathbf{h}(\mathbf{\gamma}, \mathbf{\gamma}, \mathbf{\gamma}, \vartheta^{\ast}) |_{\mathbf{\gamma}=\mathbf{\gamma}^{\ast}}  &  = \nabla_{\mathbf{\gamma}} \lim_{n \rightarrow\infty} \mathbf{h}(\mathbf{\gamma}, \mathbf{\gamma}, \mathbf{\gamma}, \vartheta^{\ast}) |_{\mathbf{\gamma}=\mathbf{\gamma}^{\ast}}\\
&  = \nabla_{\mathbf{\gamma}} \mathbf{h}(\mathbf{\gamma}, \mathbf{\gamma}, \mathbf{\gamma}, 0) |_{\mathbf{\gamma}=\mathbf{\gamma}_{0}}\\
&  = \nabla_{\mathbf{\gamma}} E_{\mathbf{\gamma}}(\mathbf{v}(\mathbf{z},\mathbf{\gamma})) |_{\mathbf{\gamma}=\mathbf{\gamma}_{0}}
\end{align*}
but $E_{\mathbf{\gamma}}(\mathbf{v}(\mathbf{z}, \mathbf{\gamma})) = \mathbf{0}$ for all $\mathbf{\gamma}$. For (\ref{equSMTMLh1}) we can write
\begin{align*}
&  \sqrt{n} (\mathbf{h}(\mathbf{\gamma}_{0}, \mathbf{\gamma}_{0},\mathbf{\gamma}_{0}, \vartheta^{\ast}) - \mathbf{h}(\mathbf{\gamma}_{0},\tilde{\mathbf{\gamma}}, \mathbf{\gamma}_{0}, \vartheta^{\ast}) )\\
&  = \sqrt{n} E_{\mathbf{\gamma}_{0}} (\mathbf{v}(\mathbf{z}, \mathbf{\gamma
}_{0}) (w(\mathbf{z}, \tilde{\mathbf{\gamma}}, \vartheta^{\ast}) -w(\mathbf{z}, \mathbf{\gamma}_{0}, \vartheta^{\ast})))\\
&  = E_{\mathbf{\gamma}_{0}} (\mathbf{v}(\mathbf{z}, \mathbf{\gamma}_{0}) \nabla_{\mathbf{\gamma}} w(\mathbf{z}, \mathbf{\gamma}, \vartheta^{\ast})|_{\mathbf{\gamma} = \mathbf{\gamma}^{\ast}}^{\top}\sqrt{n}(\tilde{\mathbf{\gamma}} - \mathbf{\gamma})),
\end{align*}
where $\mathbf{\gamma}^{\ast}$ is between $\tilde{\mathbf{\gamma}}$ and $\mathbf{\gamma}_{0}$. The second term is such that
\begin{align*}
\lim_{n \rightarrow\infty} \nabla_{\mathbf{\gamma}} w(\mathbf{z},\mathbf{\gamma}, \vartheta^{\ast}) |_{\mathbf{\gamma} = \mathbf{\gamma}^{\ast}}  &  = \nabla_{\mathbf{\gamma}} \lim_{n \rightarrow\infty} w(\mathbf{z}, \mathbf{\gamma}, \vartheta^{\ast}) |_{\mathbf{\gamma} = \mathbf{\gamma}^{\ast}} \\
&  = \nabla_{\mathbf{\gamma}} w(\mathbf{z}, \mathbf{\gamma}, 0)|_{\mathbf{\gamma} = \mathbf{\gamma}_{0}},
\end{align*}
but $w(\mathbf{z}, \mathbf{\gamma}, 0) \equiv1$ for all $\mathbf{\gamma}$ and hence $\nabla_{\mathbf{\gamma}} w(\mathbf{z}, \mathbf{\gamma}, 0) \equiv0$ for all $\mathbf{\gamma}$.
\end{proof}

\begin{lemma} \label{LemSMAsymptotic1TML2} 
Let $\mathbf{z}_{i}$, $1\leq i\leq n$, be i.i.d. random vectors in $\mathbb{R}^{q}$ with distribution $G$ and $\mathbf{\zeta}$ a parameter in $\mathbb{R}^{s}$. Let $f(\mathbf{z},\mathbf{\zeta}): \mathbb{R}^{q}\mathbb{R}^{s}\rightarrow\mathbb{R}$ be a function continuously differentiable with respect to $\mathbf{\zeta}$, such that (i) $E_{G}(f(\mathbf{z},\mathbf{\zeta}))=0$ for all $\mathbf{\zeta}$, (ii) $E_{G}(f^{2}(\mathbf{z},\mathbf{\zeta}_{0}))<\infty$, and (iii) there exist $\delta>0$ and $c(\mathbf{z}):\mathbb{R}^{q}\rightarrow\mathbb{R}$ satisfying
\begin{equation*}
\sup_{\| \mathbf{\zeta}-\mathbf{\zeta}_{0} \| \leq\delta}\left\vert \frac{\partial f(\mathbf{z},\mathbf{\zeta})}{\partial\mathbf{\zeta}}\right\vert \leq c(\mathbf{z}),
\end{equation*}
where $E_{G}(c^{2}(\mathbf{z}))<\infty$. Let $\mathbf{\zeta}_{n}$ be a sequence of random variables converging to $\mathbf{\zeta}_{0}$ a.s.. Then,
\begin{equation}
\label{equSMlem2}n^{-1/2}\sum_{i=1}^{n}(f(\mathbf{z}_{i},\mathbf{\zeta}_{n})-f(\mathbf{z_{i}}, \mathbf{\zeta}_{0})) \overset{P}{\rightarrow} 0 .
\end{equation}
\end{lemma}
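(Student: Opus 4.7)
The target
$T_n := n^{-1/2}\sum_{i=1}^n[f(\mathbf{z}_i,\mathbf{\zeta}_n)-f(\mathbf{z}_i,\mathbf{\zeta}_0)]$
is an empirical process evaluated at a random parameter. The naive route of bounding $|T_n|$ directly by (iii) gives only $|T_n|\le\|\mathbf{\zeta}_n-\mathbf{\zeta}_0\|\cdot n^{-1/2}\sum_i c(\mathbf{z}_i)=O_P(\sqrt n\,\|\mathbf{\zeta}_n-\mathbf{\zeta}_0\|)$, which requires a $\sqrt n$-rate on $\mathbf{\zeta}_n$ that is not assumed, so the plan is to center-and-use-empirical-process-theory. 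Set $g_n=f(\cdot,\mathbf{\zeta}_n)-f(\cdot,\mathbf{\zeta}_0)$ and write $\mathbb{G}_n g := n^{-1/2}\sum_i[g(\mathbf{z}_i)-Pg]$. Condition (i) implies $Pf(\cdot,\mathbf{\zeta})=0$ for every $\mathbf{\zeta}$, hence $Pg_n=0$ and $T_n=\mathbb{G}_n g_n$. This centering is the decisive use of hypothesis~(i): it removes an $O(\sqrt n)$-sized bias that would otherwise dominate.

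The key step is to show that the parametric class $\mathcal{F}_\delta=\{f(\cdot,\mathbf{\zeta}):\|\mathbf{\zeta}-\mathbf{\zeta}_0\|\le\delta\}$ is $G$-Donsker for some $\delta>0$. Condition (iii) supplies the Lipschitz-in-parameter bound $|f(\mathbf{z},\mathbf{\zeta}_1)-f(\mathbf{z},\mathbf{\zeta}_2)|\le c(\mathbf{z})\|\mathbf{\zeta}_1-\mathbf{\zeta}_2\|$ on $B(\mathbf{\zeta}_0,\delta)$, with the square-integrable envelope $c$. Because $\mathbf{\zeta}$ ranges over a bounded subset of $\mathbb{R}^s$, covering $B(\mathbf{\zeta}_0,\delta)$ by $O(\varepsilon^{-s})$ Euclidean balls of radius $\varepsilon/\|c\|_{L^2(G)}$ induces an $L^2(G)$-bracketing of $\mathcal{F}_\delta$ of matching size, giving $\log N_{[\,]}(\varepsilon,\mathcal{F}_\delta,L^2(G))=O(\log(1/\varepsilon))$. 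The resulting bracketing integral is therefore finite, and $\mathcal{F}_\delta$ is Donsker by the standard bracketing CLT for Lipschitz parametric classes.

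Two routine verifications then close the argument. First, on the event $A_n=\{\|\mathbf{\zeta}_n-\mathbf{\zeta}_0\|\le\delta\}$, whose probability tends to one by the a.s.\ convergence of $\mathbf{\zeta}_n$, condition (iii) yields $\|g_n\|_{L^2(G)}^2\le E_G[c(\mathbf{z})^2]\,\|\mathbf{\zeta}_n-\mathbf{\zeta}_0\|^2\to 0$ a.s. Second, the asymptotic equicontinuity property of a Donsker class (for instance van der Vaart, \emph{Asymptotic Statistics}, Lemma~19.24) states that for any $\varepsilon,\eta>0$ there exist $\tau>0$ and $N$ such that for $n\ge N$ the event $\sup_{g\in\mathcal{F}_\delta,\,\|g-f(\cdot,\mathbf{\zeta}_0)\|_{L^2}<\tau}|\mathbb{G}_n(g-f(\cdot,\mathbf{\zeta}_0))|<\varepsilon$ has probability at least $1-\eta$; combining this with $\|g_n\|_{L^2(G)}\stackrel{P}{\to}0$ gives $\mathbb{G}_n g_n=T_n\stackrel{P}{\to}0$, which is (\ref{equSMlem2}).

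The main obstacle is the Donsker/equicontinuity step; everything else is bookkeeping once this abstract machinery is imported. The three hypotheses are used cleanly: (i) centers the empirical process and eliminates the dominant bias, (iii) both delivers the bracketing-entropy bound and drives the $L^2$-smallness of $g_n$, and (ii) is not strictly needed for convergence in probability but ensures nondegeneracy for the neighbouring CLT-type assertions that invoke this lemma.
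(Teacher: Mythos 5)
Your proof is correct and follows essentially the same route as the paper's: the paper establishes tightness of the process $S_{n}(\mathbf{\zeta})=n^{-1/2}\sum_{i}f(\mathbf{z}_{i},\mathbf{\zeta})$ on a neighbourhood of $\mathbf{\zeta}_{0}$ by invoking the Jain--Marcus central limit theorem for $C(S)$-valued random variables, which is exactly the Donsker property for a Lipschitz-in-parameter class with square-integrable envelope that you derive via bracketing entropy. Your write-up is in fact more complete than the paper's, since you make explicit the asymptotic-equicontinuity step and the $L^{2}(G)$-smallness of $g_{n}$ that the paper leaves implicit after asserting tightness.
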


\begin{proof}[Proof]
Take $\varepsilon>0$ and consider the sequence of processes $S_{n}(\mathbf{\zeta})$, $\mathbf{\zeta} \in S = \left\{  \mathbf{\zeta}: \| \mathbf{\zeta} - \mathbf{\zeta}_{0} \| \leq\varepsilon\right\}$ in the space $C$ of the continuous functions in $S$:
\begin{equation*}
S_{n}(\mathbf{\zeta})=\frac{1}{n^{1/2}}\sum_{i=1}^{n} f(\mathbf{z}_{i}, \mathbf{\zeta}) .
\end{equation*}
To prove the Lemma is enough to show that $S_{n}(\mathbf{\zeta})$ is tight. According to Theorem 1 of \cite{jain1975}, $S_{n}(\mathbf{\zeta})$ converges in distribution to a Gaussian process $S(\lambda)$ and therefore it is tight.
\end{proof}
Finally, we report Lemma A3.1 of \citet{marazzi2009}
\begin{lemma}[Lemma A3.1 in \citet{marazzi2009}] \label{LemSMAsymptotic1TML3} 
Let $\mathbf{z}_{i}$, $1\leq i\leq n$, be i.i.d. random vectors in $\mathbb{R}^{q}$ with distribution $G$ and $\mathbf{\zeta}$ a parameter in $\mathbb{R}^{s}$. Let $f(\mathbf{z},\mathbf{\zeta}): \mathbb{R}^{q}\mathbb{R}^{s} \rightarrow\mathbb{R}^{k}$ be a continuous function, such that there exist $\delta>0$ and $c(\mathbf{z}):\mathbb{R}^{q}\rightarrow\mathbb{R}$ satisfying
\begin{equation*}
\sup_{\| \mathbf{\zeta}-\mathbf{\zeta}_{0} \| \leq\delta} \left\vert f(\mathbf{z},\mathbf{\zeta}) \right\vert \leq c(\mathbf{z}) \qquad a.s.,
\end{equation*}
and $E_{G}(c(\mathbf{z}))<\infty$. Let $\mathbf{\zeta}_{n}$ be a sequence of random variables converging to $\mathbf{\zeta}_{0}$ in probability. Then,
\begin{equation*}
\frac{1}{n} \sum_{i=1}^{n} f(\mathbf{z}_{i},\mathbf{\zeta}_{n}) \overset{P}{\rightarrow} E_{G} \left(  f(\mathbf{z_{i}}, \mathbf{\zeta}_{0})\right)
.
\end{equation*}
\end{lemma}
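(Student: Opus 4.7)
The plan is to write the empirical average as the sum of a ``random fluctuation at the true parameter'' term plus an ``oscillation'' term, handle the first by the ordinary weak law, and control the second by a domination argument. Concretely, decompose
\begin{equation*}
\frac{1}{n}\sum_{i=1}^{n} f(\mathbf{z}_{i},\mathbf{\zeta}_{n}) - E_{G}[f(\mathbf{z},\mathbf{\zeta}_{0})] = A_{n} + B_{n},
\end{equation*}
where $A_{n} = n^{-1}\sum_{i=1}^{n}\bigl[f(\mathbf{z}_{i},\mathbf{\zeta}_{n}) - f(\mathbf{z}_{i},\mathbf{\zeta}_{0})\bigr]$ and $B_{n} = n^{-1}\sum_{i=1}^{n} f(\mathbf{z}_{i},\mathbf{\zeta}_{0}) - E_{G}[f(\mathbf{z},\mathbf{\zeta}_{0})]$. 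Taking $\mathbf{\zeta}=\mathbf{\zeta}_{0}$ in the domination hypothesis gives $|f(\mathbf{z},\mathbf{\zeta}_{0})|\leq c(\mathbf{z})$ a.s., so $f(\cdot,\mathbf{\zeta}_{0})$ is $G$-integrable and Khinchin's weak law yields $B_{n}\overset{P}{\to}\mathbf{0}$ (componentwise, hence in norm).

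Next I would control $A_{n}$ via the oscillation envelope
\begin{equation*}
g_{\eta}(\mathbf{z}) = \sup_{\|\mathbf{\zeta}-\mathbf{\zeta}_{0}\|\leq\eta} |f(\mathbf{z},\mathbf{\zeta}) - f(\mathbf{z},\mathbf{\zeta}_{0})|, \qquad 0<\eta\leq\delta.
\end{equation*}
Measurability of $g_{\eta}$ follows because continuity of $f(\mathbf{z},\cdot)$ lets the supremum over the closed ball be replaced by the supremum over a countable dense subset. The triangle inequality together with the domination assumption gives $g_{\eta}(\mathbf{z})\leq 2c(\mathbf{z})$ for all $\eta\leq\delta$, and continuity of $f(\mathbf{z},\cdot)$ at $\mathbf{\zeta}_{0}$ gives $g_{\eta}(\mathbf{z})\downarrow 0$ pointwise as $\eta\downarrow 0$. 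Dominated convergence then yields $E_{G}[g_{\eta}]\to 0$ as $\eta\to 0$.

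Finally I would combine the pieces. Given $\varepsilon>0$, choose $\eta\in(0,\delta]$ small enough that $E_{G}[g_{\eta}]<\varepsilon/4$. Because $\mathbf{\zeta}_{n}\overset{P}{\to}\mathbf{\zeta}_{0}$, the event $\Omega_{n} = \{\|\mathbf{\zeta}_{n}-\mathbf{\zeta}_{0}\|\leq\eta\}$ satisfies $P(\Omega_{n})\to 1$. On $\Omega_{n}$ we have $|A_{n}|\leq n^{-1}\sum_{i=1}^{n} g_{\eta}(\mathbf{z}_{i})$, and the right-hand side tends in probability to $E_{G}[g_{\eta}]<\varepsilon/4$ by the weak law applied to the i.i.d. integrable variables $g_{\eta}(\mathbf{z}_{i})$. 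Hence $P(|A_{n}|>\varepsilon/2)\to 0$, and together with $B_{n}\overset{P}{\to}\mathbf{0}$ this gives $P(|A_{n}+B_{n}|>\varepsilon)\to 0$, establishing the claim.

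The main obstacle is the technical but unavoidable step of justifying that $g_{\eta}$ is measurable and that dominated convergence applies to it; once that is in place the argument is the standard ``uniform law of large numbers on a shrinking neighborhood'' routine. No uniform continuity of $f$ in $\mathbf{\zeta}$ is needed, only pointwise continuity at $\mathbf{\zeta}_{0}$, which is what makes the envelope approach suitable here.
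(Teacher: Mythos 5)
Your proposal is correct. One thing to be aware of: the paper does not actually prove this lemma at all --- it is imported verbatim, by citation, as Lemma A3.1 from the supplementary material of \citet{marazzi2009}, so there is no internal argument to compare yours against. What you have written is the standard ``dominated oscillation'' uniform-law argument, and it is sound: the decomposition into the fluctuation term $B_{n}$ (handled by Khinchin's weak law, using $|f(\mathbf{z},\mathbf{\zeta}_{0})|\leq c(\mathbf{z})$) and the oscillation term $A_{n}$ (handled by the envelope $g_{\eta}$) is exactly the right structure; the two delicate points --- measurability of $g_{\eta}$ via a countable dense subset of the ball, and the pointwise monotone decrease $g_{\eta}\downarrow 0$ from continuity of $f(\mathbf{z},\cdot)$ at $\mathbf{\zeta}_{0}$, which licenses dominated convergence under the bound $g_{\eta}\leq 2c$ --- are both addressed correctly, and the final union-bound over the event $\Omega_{n}$ and the weak law for $g_{\eta}(\mathbf{z}_{i})$ closes the argument. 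Your observation that only continuity at $\mathbf{\zeta}_{0}$ (not uniform continuity, not differentiability) is needed is accurate and is precisely why the lemma can be stated under such weak hypotheses, in contrast to the companion Lemma \ref{LemSMAsymptotic1TML2} in the paper, which requires a continuously differentiable $f$ and square-integrable envelopes because it must deliver a $n^{-1/2}$-rate (tightness of an empirical process via \citet{jain1975}) rather than a mere law of large numbers. In short: your proof is a legitimate, self-contained replacement for a step the paper delegates to an external reference.
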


\begin{proof}[Proof of Theorem \ref{TeoSMAsymptotic1TMLreg}]
We recall that the 1TML estimator is defined as
\begin{equation*}
\hat{\mathbf{\gamma}}_{1} = \tilde{\mathbf{\gamma}} - J(\tilde{\mathbf{\gamma}},\vartheta^{\ast})^{-1} \frac{1}{n} \sum_{i=1}^{n} \left(  \mathbf{\eta}(\mathbf{z}_{i}, \tilde{\mathbf{\gamma}}, \tilde{\mathbf{\gamma}}, \vartheta^{\ast}) - \mathbf{h}(\tilde{\mathbf{\gamma}}, \tilde{\mathbf{\gamma}}, \tilde{\mathbf{\gamma}}, \vartheta^{\ast})\right)
\end{equation*}
We consider an expansion of $\mathbf{\eta}$ in its second argument around $\mathbf{\gamma}_{0}$ has follows
\begin{equation*}
\sum_{i=1}^{n} \mathbf{\eta}(\mathbf{z}_{i}, \tilde{\mathbf{\gamma}}, \tilde{\mathbf{\gamma}}, \vartheta^{\ast}) = \sum_{i=1}^{n} \mathbf{\eta}(\mathbf{z}_{i}, \mathbf{\gamma}_{0}, \tilde{\mathbf{\gamma}}, \vartheta^{\ast}) + J(\mathbf{\gamma}^{\ast},\vartheta^{\ast}) (\tilde{\mathbf{\gamma}}- \mathbf{\gamma}_{0})
\end{equation*}
and $\mathbf{\gamma}^{\ast}$ is between $\tilde{\mathbf{\gamma}}$ and $\mathbf{\gamma}_{0}$. Then,
\begin{align}
\sqrt{n} (\hat{\mathbf{\gamma}}_{1} - \mathbf{\gamma}_{0})  &  = \sqrt{n} (\tilde{\mathbf{\gamma}} - \mathbf{\gamma}_{0}) \nonumber \label{equSMAsymptoticTMLreg1}\\
& - n J(\tilde{\mathbf{\gamma}},\vartheta^{\ast})^{-1} \frac{1}{\sqrt{n}} \sum_{i=1}^{n} \left[  \mathbf{\eta}(\mathbf{z}_{i}, \mathbf{\gamma}_{0}, \tilde{\mathbf{\gamma}}, \vartheta^{\ast}) - \mathbf{h}(\tilde{\mathbf{\gamma}}, \tilde{\mathbf{\gamma}}, \tilde{\mathbf{\gamma}}, \vartheta^{\ast}) \right] \\
& - \sqrt{n} J(\tilde{\mathbf{\gamma}},\vartheta^{\ast})^{-1} J(\mathbf{\gamma}^{\ast},\vartheta^{\ast}) (\tilde{\mathbf{\gamma}} - \mathbf{\gamma}_{0})\nonumber\\
&  = - n J(\tilde{\mathbf{\gamma}},\vartheta^{\ast})^{-1} \frac{1}{\sqrt{n}} \sum_{i=1}^{n} \left[  \mathbf{\eta}(\mathbf{z}_{i}, \mathbf{\gamma}_{0}, \tilde{\mathbf{\gamma}}, \vartheta^{\ast}) - \mathbf{h}(\tilde{\mathbf{\gamma}}, \tilde{\mathbf{\gamma}}, \tilde{\mathbf{\gamma}}, \vartheta^{\ast}) \right] \nonumber\\
&  + (1 - J(\tilde{\mathbf{\gamma}},\vartheta^{\ast})^{-1} J(\mathbf{\gamma}^{\ast},\vartheta^{\ast})) \sqrt{n} (\tilde{\mathbf{\gamma}} - \mathbf{\gamma}_{0}).\nonumber
\end{align}
By Lemma \ref{LemSMAsymptotic1TML1}, we have
\begin{align}
& \frac{1}{\sqrt{n}} \sum_{i=1}^{n} \left[  \mathbf{\eta}(\mathbf{z}_{i}, \mathbf{\gamma}_{0}, \tilde{\mathbf{\gamma}}, \vartheta^{\ast}) - \mathbf{h}(\tilde{\mathbf{\gamma}}, \tilde{\mathbf{\gamma}}, \tilde{\mathbf{\gamma}}, \vartheta^{\ast}) \right] \nonumber\\
&  = \frac{1}{\sqrt{n}} \sum_{i=1}^{n} \left[  \mathbf{\eta}(\mathbf{z}_{i},\mathbf{\gamma}_{0}, \tilde{\mathbf{\gamma}}, \vartheta^{\ast}) -\mathbf{h}(\mathbf{\gamma}_{0}, \tilde{\mathbf{\gamma}}, \mathbf{\gamma}_{0}, \vartheta^{\ast}) \right]  + o_{p}(1).\nonumber
\end{align}
Let $\Sigma(\mathbf{\gamma}_{0})$ be the asymptotic covariance matrix of the maximum likelihood estimators and by Lemma \ref{LemSMAsymptotic1TML2}
\begin{equation} \label{equSMAsymptoticTMLreg2}
\frac{1}{\sqrt{n}} \sum_{i=1}^{n} \left[ \mathbf{\eta}(\mathbf{z}_{i}, \mathbf{\gamma}_{0}, \tilde{\mathbf{\gamma}}, \vartheta^{\ast}) - \mathbf{h}(\mathbf{\gamma}_{0}, \tilde{\mathbf{\gamma}}, \mathbf{\gamma}_{0}, \vartheta^{\ast}) \right]  \overset{\mathcal{L}}{\rightarrow} N(\mathbf{0},\Sigma(\mathbf{\gamma}_{0})).
\end{equation}
Finally, by Lemma \ref{LemSMAsymptotic1TML3} we have
\begin{equation} \label{equSMAsymptoticTMLreg3}
\mathbf{J}(\tilde{\mathbf{\gamma}},\vartheta^{\ast}) = \mathbf{G}(\mathbf{\gamma}_{0}) + o_{P}(1)
\end{equation}
and $\mathbf{J}(\mathbf{\gamma}^{\ast},\vartheta^{\ast}) = \mathbf{G}(\mathbf{\gamma}_{0}) + o_{P}(1)$. Hence, $(1 - \mathbf{J}(\tilde{\mathbf{\gamma}},\vartheta^{\ast})^{-1} \mathbf{J}(\mathbf{\gamma}^{\ast},\vartheta^{\ast})) \overset{P}{\rightarrow} 0$ by Slutsky Theorem. The asymptotic normality of the 1TML estimator follows from (\ref{equSMAsymptoticTMLreg1}), (\ref{equSMAsymptoticTMLreg2}), (\ref{equSMAsymptoticTMLreg3}) and the Slutsky theorem.
\end{proof}
We now consider the case without covariates.  Next Theorem follows immediately from Theorem \ref{TeoSMAsymptotic1TMLreg}.
\begin{theorem} \label{TeoSMAsymptotic1TML} 
Let us consider observations $(y_{i}^{\ast}, \delta_{i})$ ($1 \le i \le n$) following a GLG model with parameter $\mathbf{\theta}_{0} = (\mu_{0},\sigma_{0},\lambda_{0})$. Let $\tilde{\mathbf{\theta}}$ be an initial estimator. Assume: (i) $n^{1/2}(\tilde{\mathbf{\theta}} - \mathbf{\theta}_{0})$ is bounded in probability; (ii) the matrix $\mathbf{G}(\mathbf{\theta})$ is non singular; (iii) the weight function $w$ satisfies \ref{assA2}; (iv) the function $w$ is continuously differentiable. Then, the 1TML estimator $\hat{\mathbf{\theta}}_{1}$ is such that:
\begin{equation*}
\sqrt{n}(\hat{\mathbf{\theta}}_{1} - \mathbf{\theta}_{0}) \overset{\mathcal{L}}{\rightarrow} N(0, \mathbf{G}(\mathbf{\theta}_{0})^{-1} \mathbf{M}(\mathbf{\theta}_{0})\mathbf{G}(\mathbf{\theta}_{0})^{-\top} )
\end{equation*}
\end{theorem}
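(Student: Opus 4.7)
The plan is to derive Theorem \ref{TeoSMAsymptotic1TML} as an immediate specialization of Theorem \ref{TeoSMAsymptotic1TMLreg}, which has just been established. The no-covariate setting corresponds to the degenerate case of the AFT model (\ref{model}) with $p=0$ (no carrier vector). Under this reduction, $\mathbf{\gamma}_0$ collapses to $\mathbf{\theta}_0$, the score vector $\mathbf{v}(\mathbf{x},y,\delta,\mathbf{\gamma})$ of (\ref{equML3}) reduces to $\mathbf{v}(y,\delta,\mathbf{\theta})$ of (\ref{vscore}), the regression TML equations (\ref{equTMLreg}) specialize to the equations (\ref{equTML}) defining the 1TML estimator in Section \ref{SecTML}, and the Newton--Raphson representation carries over to (\ref{equ1TML}).

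Next I would verify the hypothesis correspondence. Assumption (i) of the present theorem is precisely \ref{assSMF1}; assumption (ii) is precisely \ref{assSMF2}; hypothesis \ref{assSMF3} on second moments of $\mathbf{x}$ is vacuous when no covariates are present; and assumptions (iii)--(iv) on the weight function match the remaining hypotheses of Theorem \ref{TeoSMAsymptotic1TMLreg} verbatim. I would also observe that the objects $\mathbf{u}(\mathbf{\theta},\tilde{\mathbf{\theta}},\vartheta^{\ast})$ and $\tilde{\mathbf{h}}$ defined in (\ref{equTML})--(\ref{htilde}) are obtained from their regression counterparts by discarding the coordinates corresponding to $\mathbf{\beta}$, and that the semiempirical cdf $H_{n,\mathbf{\theta}}$ of (\ref{Hemp}) is the $y$-marginal of the semiparametric cdf $H_{n,\mathbf{\gamma}}$ introduced in the regression case.

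With this dictionary in hand, every step of the proof of Theorem \ref{TeoSMAsymptotic1TMLreg} transfers line by line: the Taylor expansion of $\mathbf{\eta}$ in its second argument around $\mathbf{\theta}_0$, the asymptotic vanishing of $\sqrt{n}\,(\mathbf{h}(\tilde{\mathbf{\theta}},\tilde{\mathbf{\theta}},\tilde{\mathbf{\theta}},\vartheta^{\ast})-\mathbf{h}(\mathbf{\theta}_0,\tilde{\mathbf{\theta}},\mathbf{\theta}_0,\vartheta^{\ast}))$ via Lemma \ref{LemSMAsymptotic1TML1}, the central limit behavior granted by Lemma \ref{LemSMAsymptotic1TML2}, the convergence $\mathbf{J}(\tilde{\mathbf{\theta}},\vartheta^{\ast})=\mathbf{G}(\mathbf{\theta}_0)+o_P(1)$ granted by Lemma \ref{LemSMAsymptotic1TML3}, and the final application of Slutsky's theorem. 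Each of these three lemmas is formulated for a generic finite-dimensional parameter $\mathbf{\zeta}$ and makes no use of covariates, so no adaptation is required. The limiting covariance matrix $\mathbf{G}(\mathbf{\theta}_0)^{-1}\mathbf{M}(\mathbf{\theta}_0)\mathbf{G}(\mathbf{\theta}_0)^{-\top}$ is exactly the specialization of the regression case to $p=0$, and coincides with the asymptotic covariance of the ML estimator recorded at the end of Section \ref{SecML}.

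The only real obstacle, and it is purely notational, is checking the compatibility between the semiempirical cdf $H_{n,\mathbf{\theta}}$ in (\ref{Hemp}) and the semiparametric cdf $H_{n,\mathbf{\gamma}}$ used in the regression proof, together with the associated expectation operators $E_{n,\tilde{\mathbf{\theta}}}$ and $E_{n,\tilde{\mathbf{\gamma}}}$, so that $\mathbf{u}$ and $\tilde{\mathbf{h}}$ in both settings genuinely agree under the reduction $p=0$. Once this identification is recorded, the conclusion follows from Theorem \ref{TeoSMAsymptotic1TMLreg} with no additional argument.
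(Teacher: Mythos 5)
Your proposal matches the paper's approach exactly: the paper derives Theorem \ref{TeoSMAsymptotic1TML} as an immediate consequence of Theorem \ref{TeoSMAsymptotic1TMLreg}, which is precisely the $p=0$ specialization you carry out. Your write-up is in fact more explicit than the paper's one-line remark, since you also verify the correspondence of hypotheses and the compatibility of $H_{n,\mathbf{\theta}}$ with $H_{n,\mathbf{\gamma}}$.
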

We recall that $\mathbf{M}(\mathbf{\theta}) = E\left(  \mathbf{v}(y,\delta,\mathbf{\theta}) \mathbf{v}(y,\delta,\mathbf{\theta})^{\top}\right)$ and $\mathbf{G}(\mathbf{\theta}) = E\left(  \nabla_{\mathbf{\theta}} \mathbf{v}(y,\delta,\mathbf{\theta}) \right)  $ and $\mathbf{v}(y,\delta,\mathbf{\theta}) $ is defined in equation (\ref{vscore}).
\begin{remark}
In practice, especially when the sample size is not very large, and $\vartheta^{\ast}$ is far from the asymptotic value $0$, a better approximation of the covariance matrix of the 1TML estimator can be based on the following sandwich formula. Let
\begin{align*}
&  \mathbf{\Lambda}(\mathbf{z}, \hat{\mathbf{\gamma}}_{1},\vartheta^{\ast})\\
&  = (\mathbf{\eta}(\mathbf{z}, \hat{\mathbf{\gamma}}_{1},\hat{\mathbf{\gamma}}_{1},\vartheta^{\ast}) - \mathbf{h}(\hat{\mathbf{\gamma}}_{1},\hat{\mathbf{\gamma}}_{1},\hat{\mathbf{\gamma}}_{1},\vartheta^{\ast})) (\mathbf{\eta}(\mathbf{z}, \hat{\mathbf{\gamma}}_{1},\hat{\mathbf{\gamma}}_{1},\vartheta^{\ast}) - \mathbf{h}(\hat{\mathbf{\gamma}}_{1},\hat{\mathbf{\gamma}}_{1},\hat{\mathbf{\gamma}}_{1},\vartheta^{\ast}))^{\top}.
\end{align*}
Then, the covariance matrix $\mathbf{\Sigma}(\hat{\mathbf{\gamma}}_{1})$ of the 1TML estimator can be estimated by
\begin{equation*}
\hat{\mathbf{\Sigma}}(\hat{\mathbf{\gamma}}_{1}) = \mathbf{J}(\hat{\mathbf{\gamma}}_{1}, \vartheta^{\ast})^{-1} \left(  \frac{1}{n} \sum_{i=1}^{n} \mathbf{\Lambda}(\mathbf{z}, \hat{\mathbf{\gamma}}_{1},\vartheta^{\ast}) \right)  \mathbf{J}(\hat{\mathbf{\gamma}}_{1}, \vartheta^{\ast})^{-\top} .
\end{equation*}
\end{remark}
\begin{remark}
For the case without covariates
\begin{equation*}
\mathbf{G}(\mathbf{\theta}) = E(\delta\mathbf{G}_{d} + (1-\delta) \mathbf{G}_{s}),
\end{equation*}
where the matrix $\mathbf{G}_{d}$ corresponds to the uncensored observations and $\mathbf{G}_{s}$ to the censored observations. Hence, the elements of $\mathbf{G}_{d}$ are
\begin{align*}
g_{d11}  &  = \frac{\partial d_{1}}{\partial\mu} = -\frac{1}{\sigma^{2}}%
\xi_{\lambda}^{\prime}(u) ,\\
g_{d12}  &  = \frac{\partial d_{1}}{\partial\sigma} = -\frac{1}{\sigma^{2}}\left(  \xi_{\lambda}(u)+\xi_{\lambda}^{\prime}(u)u\right)  ,\\
g_{d13}  &  = \frac{\partial d_{1}}{\partial\lambda} = \frac{1}{\sigma}\dot{\xi}_{\lambda}(u) ,
\end{align*}
\begin{align*}
g_{d21}  &  = \frac{\partial d_{2}}{\partial\mu} = -\frac{1}{\sigma^{2}}\left(  \xi_{\lambda}^{\prime}(u)u+\xi_{\lambda}(u)\right)  ,\\
g_{d22}  &  = \frac{\partial d_{2}}{\partial\sigma} = -\frac{1}{\sigma^{2}}\left(  2\xi_{\lambda}(u)u+\xi_{\lambda}^{\prime}(u)u^{2}+1\right)  ,\\
g_{d23}  &  = \frac{\partial z_{2}}{\partial\lambda} = \frac{1}{\sigma}\dot{\xi}_{\lambda}(u) u ,
\end{align*}
\begin{align*}
g_{d31}  &  = \frac{\partial z_{3}}{\partial\mu} = -\frac{1}{\sigma}\psi_{\lambda}^{\prime}(u),\\
g_{d32}  &  = \frac{\partial z_{3}}{\partial\sigma} = -\frac{1}{\sigma}\psi_{\lambda}^{\prime}(u)u,\\
g_{d32}  &  = \frac{\partial z_{3}}{\partial\lambda}=\dot{\psi}_{\lambda}(u) .
\end{align*}
\end{remark}

\subsection{The weighted likelihood estimator}
\label{SMSec1SWL} 
In the next two subsections we extend the results in \citet{agostinelli2014a} concerning the 1SWL estimator to the case with censored observations and to the regression case.

\subsubsection{The case without covariables}
\label{SecSM1SWL} 
\citet{markatou1998} introduced the weighted likelihood estimators for the case of non censored data in continuous models. Here, we extend this estimator to the case of censored data. Assume that $w(y,\mathbf{\theta})$ is a given weight function and that an initial highly robust and consistent but not necessarily efficient estimator $\tilde{\mathbf{\theta}}$ of $\mathbf{\theta}_{0}$ -- e.g., a TQ$\tau$ estimator $\tilde{\mathbf{\theta}}$ defined in Section \ref{SecTQtau} -- is available. Then, we define the \textit{weighted likelihood} (WL) \textit{estimator} for censored observations as a solution of the equation
\begin{equation}
\mathbf{u}(\mathbf{y}^{\ast},\mathbf{\delta},\mathbf{\theta}) = \sum_{i=1}^{n} w(y_{i}^{\ast},\mathbf{\theta})\mathbf{v}(y_{i}^{\ast},\delta_{i}, \mathbf{\theta}) = \mathbf{0}, \label{equSMWL1}
\end{equation}
where $\mathbf{v}(y_{i}^{\ast},\delta_{i},\mathbf{\theta})$ is the same as in (\ref{equML1}). When $w(y^{\ast},\mathbf{\theta}) \equiv1$, (\ref{equSMWL1}) coincides with (\ref{equML1}). Following \citet{markatou1998} we define the weight function as
\begin{equation}
w(y,\mathbf{\theta})=\min\left(  1,\frac{\left[  A(r_{p}(y,\mathbf{\theta}))+1\right]  ^{+}}{r_{p}(y,\mathbf{\theta})+1}\right)  , \label{Weights}
\end{equation}
where $[x]^{+}=\max(0,x)$, $r_{p}(y,\mathbf{\theta})$ is a convenient variant of \textit{Pearson residual}, measuring the agreement between the data and the assumed model, and the function $A(\cdot)$ is a \textit{residual adjustment function} \citep{lindsay1994}.

In order to define the Pearson residual, we proceed as follows. Let $H_{n,\mathbf{\theta}}$ be the semiparametric cdf (\ref{Hemp}) and $f_{n,\mathbf{\theta}}^{\ast}(y)=\int k(y,t,h) dH_{n,\mathbf{\theta}}$ be a semiparametric kernel density estimator of $f_{\mathbf{\theta}}(y)$ with bandwidth $h$. ($f_{n,\mathbf{\theta}}^{\ast}$ can be approximately computed by fitting a kernel density estimator to $k$ quantiles of $H_{n,\mathbf{\theta}}$, e.g., $k=1000$.) Let $f_{\mathbf{\theta}}^{\ast}(y)=\int k(y,t,h) f_{\mathbf{\theta}}(t) dt$ be the corresponding smoothed model density. Then, the Pearson residuals are defined by
\begin{equation*}
r_{p}(y,\mathbf{\theta})=\left[  f_{n,\mathbf{\theta}}^{\ast}(y)-f_{\mathbf{\theta}}^{\ast}(y)\right]  /f_{\mathbf{\theta}}^{\ast}(y).
\end{equation*}
Equation (\ref{equSMWL1}) can be solved using an iterative algorithm as in \citet{markatou1998}.

Following \citet{agostinelli1998} and \citet{agostinelli2014a}, we obtain the \textit{one-step weighted likelihood} (1SWL) \textit{estimator} for censored observations by applying one Newton-Raphson iteration to equation (\ref{equSMWL1}). This estimator turns out to be
\begin{equation} \label{equSM1SWL}
\mathbf{\hat{\theta}} = \mathbf{\tilde{\theta}} - \mathbf{J}^{-1}\mathbf{u}(\mathbf{y}^{\ast},\mathbf{\delta},\tilde{\mathbf{\theta}}),
\end{equation}
where $\mathbf{J} =\sum_{j=1}^{n}w(y_{j}^{\ast}, \mathbf{\tilde{\theta}}) \nabla_{\mathbf{\theta}} \mathbf{v}(y_{j}^{\ast}, \delta_{j}, \mathbf{\theta})|_{\mathbf{\theta}=\tilde{\mathbf{\theta}}}$ is the Jacobian matrix of the functions defining the estimating equations and $\nabla_{\mathbf{\theta}}$ denotes differentiation with respect to $\mathbf{\theta}$.

\subsubsection{The case with covariables}
\label{SecSM1SWLreg} 
For the model (\ref{model}) and parameters $\mathbf{\gamma} = (\mathbf{\theta}, \mathbf{\beta})$, we define the fully iterated WL estimator as the solution of the equations
\begin{equation*}
\mathbf{u}(X, \mathbf{y}^{\ast}, \mathbf{\delta}, \mathbf{\gamma}) = \sum_{j=1}^{n}\check{w}(\mathbf{x}_{j},y_{j}^{\ast},\mathbf{\gamma}) \mathbf{v}(\mathbf{x}_{j},y_{j}^{\ast},\delta_{j},\mathbf{\gamma}) = \mathbf{0},
\end{equation*}
where
\begin{equation*}
\check{w}(\mathbf{x}_{j},y_{j}^{\ast},\mathbf{\gamma})=w(y_{j}^{\ast}-\mathbf{x}_{j}^{\top}\mathbf{\beta},\mathbf{\theta}),
\end{equation*}
and $w$ is given by (\ref{Weights}). Besides, the \textit{1SWL regression estimator} is given by
\begin{equation*}
\mathbf{\hat{\gamma}} = \mathbf{\tilde{\gamma}} - \mathbf{J}^{-1} \mathbf{u}(X, \mathbf{y}^{\ast}, \mathbf{\delta}, \tilde{\mathbf{\gamma}})
\end{equation*}
and $\mathbf{J}=\sum_{i=1}^{n}\check{w}(\mathbf{x}_{i},y_{i},\mathbf{\tilde{\gamma}}) \nabla_{\mathbf{\gamma}} \mathbf{v}(\mathbf{x}_{i},y_{i}^{\ast},\tilde{\mathbf{\gamma}})|_{\mathbf{\gamma}=\tilde{\mathbf{\gamma}}}$ is a $(3+p) \times(3+p)$ Jacobian matrix, The weights are based on Pearson residuals comparing $F_{\tilde{\mathbf{\theta}}^{\ast}}$ to the estimator $H_{n,\tilde{\mathbf{\theta}}^{\ast}}$ given by (\ref{Hemp}), where the observations are $(\omega_{i}^{\ast},\delta_{i})=(y_{i}^{\ast}-\mathbf{x}_{i}^{\top}\mathbf{\bar{\beta}},\delta_{i})$, $1\leq i\leq n$.


\begin{thebibliography}{27}
\providecommand{\natexlab}[1]{#1}
\providecommand{\url}[1]{\texttt{#1}}
\expandafter\ifx\csname urlstyle\endcsname\relax
  \providecommand{\doi}[1]{doi: #1}\else
  \providecommand{\doi}{doi: \begingroup \urlstyle{rm}\Url}\fi

\bibitem[Abadi et~al.(2012)Abadi, Amanpour, Bajdik, and Yavari]{abadi2012}
A.~Abadi, F.~Amanpour, C.~Bajdik, and P.~Yavari.
\newblock Breast cancer survival analysis: Applying the generalized gamma
  distribution under different conditions of the proportional hazards and
  accelerated failure time assumptions.
\newblock \emph{International Journal of Preventive Medicine}, 3:\penalty0
  644--651, 2012.

\bibitem[Agostinelli and Markatou(1998)]{agostinelli1998}
C.~Agostinelli and M.~Markatou.
\newblock A one-step robust estimator for regression based on the weighted
  likelihood reweighting scheme.
\newblock \emph{Statistics \& Probability Letters}, 37:\penalty0 341--350,
  1998.

\bibitem[Agostinelli et~al.(2014{\natexlab{a}})Agostinelli, Marazzi, and
  Yohai]{agostinelli2014a}
C.~Agostinelli, A.~Marazzi, and V.J. Yohai.
\newblock Robust estimators of the generalized log-gamma distribution.
\newblock \emph{Technometrics}, 56\penalty0 (1):\penalty0 92--101,
  2014{\natexlab{a}}.

\bibitem[Agostinelli et~al.(2014{\natexlab{b}})Agostinelli, Marazzi, and
  Yohai]{agostinelli2014b}
C.~Agostinelli, A.~Marazzi, and V.J. Yohai.
\newblock Supplementary material for robust estimators of the generalized
  log-gamma distribution, 2014{\natexlab{b}}.
\newblock URL
  \url{http://www.tandfonline.com/doi/suppl/10.1080/00401706.2013.818578}.

\bibitem[Breslow and Crowley(1974)]{breslow1974}
N.~Breslow and J.~Crowley.
\newblock A large sample study of the life table and product limit estimates
  under random censorship.
\newblock \emph{The Annals of Statistics}, 2:\penalty0 437--453, 1974.

\bibitem[Fetter et~al.(1980)Fetter, Shin, Freeman, Averill, and
  Thompson]{fetter1980}
R.B. Fetter, Y.~Shin, J.L. Freeman, R.F. Averill, and J.D. Thompson.
\newblock Case mix definition by diagnosis-related groups.
\newblock \emph{Medical care}, 18\penalty0 (1):\penalty0 1--53, 1980.

\bibitem[Gervini and Yohai(2002)]{gervini2002}
D.~Gervini and V.J. Yohai.
\newblock A class of robust and fully efficient regression estimators.
\newblock \emph{The Annals of Statistics}, 30\penalty0 (2):\penalty0 583--616,
  2002.

\bibitem[Greenwood(1926)]{greenwood1926}
M.~Greenwood.
\newblock \emph{The natural duration of cancer}.
\newblock Her Majesty's Stationery Office, London, 1926.

\bibitem[Huber(1981)]{huber1981}
P.J. Huber.
\newblock \emph{Robust statistics}.
\newblock Wiley, New York, 1981.

\bibitem[Jain and Marcus(1975)]{jain1975}
N.~Jain and M.~Marcus.
\newblock Central limit theorems for c(s)-valued random variables.
\newblock \emph{Journal of Functional Analysis}, 19:\penalty0 216--231, 1975.

\bibitem[Kaplan and Meier(1958)]{kaplan1958}
E.L. Kaplan and P.~Meier.
\newblock Nonparametric estimation from incomplete observations.
\newblock \emph{Journal of the American Statistical Association}, 53\penalty0
  (282):\penalty0 457--481, 1958.

\bibitem[Kim et~al.(1993)Kim, DeGruttola, and Lagakos]{kim1993}
M.Y. Kim, V.G. DeGruttola, and S.W. Lagakos.
\newblock Analyzing doubly censored data with covariates, with application to
  aids.
\newblock \emph{Biometrics}, 49:\penalty0 13--22, 1993.

\bibitem[Lawless(1980)]{lawless1980}
J.F. Lawless.
\newblock Inference in the generalized gamma and log-gamma distributions.
\newblock \emph{Technometrics}, 22\penalty0 (3):\penalty0 409--419, 1980.

\bibitem[Lindsay(1994)]{lindsay1994}
B.J. Lindsay.
\newblock Efficiency versus robustness: the case of minimum hellinger distance
  and related methods.
\newblock \emph{The Annals of Statistics}, 22:\penalty0 1081--1114, 1994.

\bibitem[Locatelli and Marazzi(2013)]{locatelli2013}
I.~Locatelli and A.~Marazzi.
\newblock Robust parametric indirect estimates of the expected cost of a
  hospital stay with covariates and censored data.
\newblock \emph{Statistics in Medicine}, 32\penalty0 (14):\penalty0 2457--2466,
  2013.
\newblock \doi{10.1002/sim.5701}.

\bibitem[Locatelli et~al.(2010)Locatelli, Marazzi, and V.J.]{locatelli2010}
I.~Locatelli, A.~Marazzi, and Yohai V.J.
\newblock Robust accelerated failure time regression.
\newblock \emph{Computational Statistics \& Data Analysis}, 55:\penalty0
  874--887, 2010.

\bibitem[Marazzi and Muralti(2013)]{marazzi2013}
A.~Marazzi and J.L. Muralti.
\newblock {RobustAFT}: a package for truncated maximum likelihood fit and
  robust accelerated failure time regression for gaussian and log-weibull case,
  2013.
\newblock URL \url{cran.r-project.org/web/packages/RobustAFT/index.html}.

\bibitem[Marazzi and Yohai(2004)]{Marazzi2004}
A.~Marazzi and V.J. Yohai.
\newblock Adaptively truncated maximum likelihood regression with asymmetric
  errors.
\newblock \emph{Journal of Statistical Planning and Inference}, 122\penalty0
  (1-2):\penalty0 271--291, 2004.

\bibitem[Marazzi et~al.(2009)Marazzi, Villar, and Yohai]{marazzi2009}
A.~Marazzi, A.J. Villar, and V.J. Yohai.
\newblock Supplemental material for robust response transformations based on
  optimal prediction, 2009.
\newblock URL \url{www.tandfonline.com/doi/suppl/10.1198/jasa.2009.0109}.

\bibitem[Markatou et~al.(1998)Markatou, Basu, and Lindsay]{markatou1998}
M.~Markatou, A.~Basu, and B.G. Lindsay.
\newblock Weighted likelihood equations with bootstrap root search.
\newblock \emph{Journal of the American Statistical Association}, 93:\penalty0
  740--750, 1998.

\bibitem[Ortega et~al.(2003)Ortega, Bolfarine, and Paula]{ortega2003}
E.M.M. Ortega, H.~Bolfarine, and G.A. Paula.
\newblock Influence diagnostics in generalized log-gamma regression models.
\newblock \emph{Computational Statistics \& Data Analysis}, 42:\penalty0
  165--186, 2003.

\bibitem[Ortega et~al.(2008)Ortega, Paula, and Bolfarine]{ortega2008}
E.M.M. Ortega, G.A. Paula, and H.~Bolfarine.
\newblock Deviance residuals in generalized log-gamma regression models with
  censored observations.
\newblock \emph{Journal of Statistical Computation and Simulation}, 78\penalty0
  (8):\penalty0 747--764, 2008.

\bibitem[Salibian-Barrera and Yohai(2008)]{salibianbarrera2008}
M.~Salibian-Barrera and V.J. Yohai.
\newblock High breakdown point robust regression with censored data.
\newblock \emph{The Annals of Statistics}, 36:\penalty0 118--146, 2008.

\bibitem[Serfling(1980)]{serfling1980}
R.J. Serfling.
\newblock \emph{Approximation Theorems of Mathematical Statistics}.
\newblock Wiley, 1980.

\bibitem[Silva et~al.(2010)Silva, Ortega, and Cancho]{silva2010}
G.O. Silva, E.M.M. Ortega, and V.G. Cancho.
\newblock Log-weibull extended regression model: Estimation, sensitivity and
  residual analysis.
\newblock \emph{Statistical Methodology}, 7:\penalty0 614--631, 2010.

\bibitem[Sun et~al.(1999)Sun, Liao, and Pagano]{sun1999}
J.~Sun, Q.~Liao, and M.~Pagano.
\newblock Regression analysis of doubly interval censored failure time data
  with applications to aids.
\newblock \emph{Biometrics}, 55:\penalty0 909--914, 1999.

\bibitem[Yohai and Zamar(1988)]{yohai1988}
V.J. Yohai and R.H. Zamar.
\newblock High breakdown estimates of regression by means of the minimization
  of an efficient scale.
\newblock \emph{Journal of the American Statistical Association}, 83:\penalty0
  406--413, 1988.

\end{thebibliography}

\end{document}